\theoremstyle{plain}
\newtheorem{thm}{Theorem}[section]
\newtheorem{cor}[thm]{Corollary}
\newtheorem{lem}[thm]{Lemma}
\newtheorem{obs}[thm]{Observation}
\theoremstyle{definition}
\newtheorem{dfn}[thm]{Definition}
\newtheorem{exm}[thm]{Example}
\newcommand{\lmulset}{\{\!\!\{}
\newcommand{\rmulset}{\}\!\!\}}
\DeclareMathOperator{\CFI}{CFI}
\DeclareMathOperator{\supp}{supp}
\DeclareMathOperator{\Iso}{Iso}
\DeclareMathOperator{\Aut}{Aut}
\DeclareMathOperator{\Sub}{Sub}
\DeclareMathOperator{\co}{co}
\DeclareMathOperator{\rank}{rank}
\DeclareMathOperator{\Sym}{Sym}
\newcommand{\versone}{\textup{I}}
\newcommand{\verstwo}{\textup{II}}
\newcommand{\versthree}{\textup{III}}
\definecolor{lightblue}{rgb}{0.5,0.5,1.0}
\definecolor{darkred}{rgb}{0.5,0,0}
\definecolor{darkgreen}{rgb}{0,0.5,0}
\definecolor{darkblue}{rgb}{0,0,0.5}
\author{Jendrik Brachter\\
TU Kaiserslautern\\
\and
Pascal Schweitzer\\
TU Kaiserslautern\\
}
\begin{document}
\title{On the Weisfeiler-Leman Dimension of Finite Groups}
\maketitle

\begin{abstract}
In comparison to graphs, combinatorial methods for the isomorphism problem of finite groups are less developed than algebraic ones.
To be able to investigate the descriptive complexity of finite groups and the group isomorphism problem, we define the Weisfeiler-Leman algorithm for groups. In fact we define three versions of the algorithm. In contrast to graphs, where the three analogous versions readily agree, for groups the situation is more intricate. For groups, we show that their expressive power is linearly related. We also give descriptions in terms of counting logics and bijective pebble games for each of the versions. 

In order to construct examples of groups, we devise an isomorphism and non-isomorphism preserving transformation from graphs to groups.
Using graphs of high Weisfeiler-Leman dimension, we construct highly similar but non-isomorphic groups with equal~$\Theta(\sqrt{\log n})$-subgroup-profiles, which nevertheless have Weisfeiler-Leman dimension 3. These groups are nilpotent groups of class 2 and exponent~$p$, they agree in many combinatorial properties such as the combinatorics of their conjugacy classes and have highly similar commuting graphs.

The results indicate that the Weisfeiler-Leman algorithm can be more effective in distinguishing groups than in distinguishing graphs based on similar combinatorial constructions.
\end{abstract}

\section{Introduction}
The notion of isomorphisms between finite groups remains one of the most basic concepts of group theory for which we do not have efficient algorithmic tools. The algorithmic Group Isomorphism Problem formalizes the task of deciding whether two given (finite) groups are isomorphic, but in fact, we do not understand its complexity. We have neither a polynomial time algorithm for testing isomorphism, nor complexity theoretic evidence indicating to us that the problem is not polynomial time solvable.
 Considering groups of order $n$, a simple approach, attributed to Tarjan in \cite{Mi78}, is to pick a small generating set in one of the groups and to check for all possible images of the generators in the other group, whether the partial map extends to an isomorphism.
This approach gives us a worst-case runtime of $n^{\log t+\mathcal{O}(1)}$ where~$t$ is the size of the generating set. Since every group of order~$n$ has a generating set of size at most~$\log n$, this yields $n^{\log n+\mathcal{O}(1)}$ in the worst case. Despite decades of active research this bound has seen only slight improvements for the general case. In fact, Rosenbaum~\cite{DBLP:journals/corr/abs-1304-3935} was able to improve it to $n^{1/2\log n+\mathcal{O}(1)}$. (See~\cite{DBLP:journals/corr/Luks15} and~\cite{DBLP:journals/tcs/RosenbaumW15} for related discussions on isomorphism of~$p$-groups and solvable groups). For various classes of groups, better bounds are known (see further related work). However, even very limited classes of groups provide hard cases for isomorphism testing. One of the most prominent classes in this context is formed by the groups of prime exponent $p$ and nilpotency class $2$. Such groups possess a lot of extra structure, but despite this and despite a large body of research into this structure, even for this limited class, no better general bound has been proven. In fact, this class seems to be at the core of the problem. However, a formal reduction to this or a similar class is not known.

While there exists a vast collection of algebraic methods and heuristics for tackling the group isomorphism problem (see further related work), complexity theoretic and combinatorial aspects seem to be less developed. 
For example, in 2011, Timothy Gowers asked on Lipton's blog~\cite{gowers} whether there is an integer~$m$ such that the isomorphism class of each finite group is determined by their~$m$-subgroup-profile. Here the~$m$-subgroup-profile (or~$m$-profile) is the multiset of (isomorphism types of) $m$-generated subgroups. Glauberman and Grabowski gave a negative answer by constructing pairs of non-isomorphic groups with the same $\Theta(\sqrt{\log n})$-profiles~\cite{GlaubermanGrabowski}. Subsequently, Wilson constructed many examples of exponent $p$ and nilpotency class $2$ groups which agree in various invariants. In particular they have the same~$\Theta(\log n)$-profiles \cite{WilsonThreshold}, which is best possible.

The observation that combinatorial aspects of the group isomorphism problem are less developed is surprising since, for the related graph isomorphism problem, historically, it has been the other way around. 
Indeed, for graph isomorphism testing, combinatorial approaches are well-developed and often successful, yet their limits have been firmly established. One of the most important tools in this scope is the
Weisfeiler-Leman algorithm. The $k$-dimensional Weisfeiler-Leman algorithm ($k$-WL) iteratively classifies $k$-tuples of vertices of a graph in terms of how they are related to other vertices in the graph. It provides an effective invariant for graph-non-isomorphism (see e.g.~\cite{MR0543783,DBLP:journals/combinatorica/CaiFI92}). Moreover, $k$-WL can be implemented to run in time $\mathcal{O}(n^{k+1}\log n)$ where $n$ is the number of vertices (see~\cite{MR1060782,1907.09582}). For fixed~$k$, the algorithm is only a partial isomorphism test, in that it can distinguish certain pairs of non-isomorphic graphs, but not all of them.
 A graph is said to have WL-dimension at most~$k$, if~$k$-WL distinguishes the graph from every non-isomorphic graph. For many important classes of graphs the WL-dimension has been shown to be bounded; examples include planar graphs \cite{DBLP:conf/stoc/Grohe00,DBLP:journals/jacm/KieferPS19} for which even~$3$ suffices and more generally classes defined by forbidden minors~\cite{MR3729479}. On the other hand, Cai, F\"urer and Immerman constructed an infinite family of graphs, for which the WL-dimension is linear in the number of vertices, and thus unbounded~\cite{DBLP:journals/combinatorica/CaiFI92}. Higher dimensional versions of $k$-WL also appear in Babai's breakthrough result putting graph-isomorphism in quasi-polynomial time~\cite{DBLP:conf/stoc/Babai16}.
 
There is a deep and well-understood connection between $k$-WL and the expressiveness in the logic $\mathbf{C}^{k+1}$, the extension of the $(k+1)$-variable fragment of first order logic on graphs with counting quantifiers~\cite{DBLP:journals/combinatorica/CaiFI92}. For example, two graphs can be distinguished by~$k$-WL exactly if there is a formula in~$\mathbf{C}^{k+1}$ that distinguishes the graphs. Therefore, in some well-defined sense, the~$k$-WL algorithm is universal in that it simultaneously checks all combinatorial properties in an input graph expressible in the aforementioned logic.

\textbf{Contribution.} The first aim of this paper is to introduce Weisfeiler-Leman-type algorithms and the notion of a WL-dimension for groups analogous to the graph case. While at first sight it seems straightforward to do so, it turns out that various concepts that coincide when applied to graphs (potentially) disagree when applied to groups. Specifically, we define three natural but different versions of a Weisfeiler-Leman dimension. One of them is based on a natural logic for groups while another is natural when taking an algorithmic viewpoint. The third version comes from natural translation of groups into graphs in an isomorphism and non-isomorphism preserving manner. Still, we give descriptions in terms of counting logics and bijective pebble games for each of the versions. 
A core reason why the different versions arise is that the correspondence between the various concepts arising in this context (specifically logics, algorithms and pebble games) is not as clean as for graphs.
However, we argue that the definition is robust after all: we prove that the Weisfeiler-Leman dimensions of the different versions are linearly related.
Overall, we obtain a family of algorithms that is similarly universal in checking combinatorial properties as in the graph case.
For example, it is easy to see that the~$k$-WL algorithm implicitly computes the~$k$-profile of groups.
In particular, abelian groups are completely identified already by the least powerful of the algorithms.

The second aim of this paper is to understand when and how groups are characterized by their combinatorial properties. On the one hand this addresses the question whether combinatorial methods can solve the Group Isomorphism Problem. On the other hand it provides a way of quantifying similarity of non-isomorphic groups. Specifically, we construct pairs of arbitrarily large non-isomorphic groups that agree with respect to many isomorphism invariants but can still be distinguished with the $3$-dimensional WL-algorithm. More precisely these groups are of nilpotency class $2$ and prime exponent $p$. They are non-isomorphic but have the same~$\Theta(\sqrt{\log n})$-profile\footnote{In a previous version of the paper we calculated the orders of the group incorrectly and therefore claimed the statement for~$\Theta({\log n})$-profiles rather than~$\Theta(\sqrt{\log n})$-profiles.}. They also have highly similar commuting graphs.
\begin{thm}
For infinitely many~$n$ there exist pairs of non-isomorphic groups order~$n$ with bounded Weisfeiler-Leman dimension which 
\begin{itemize}
\item have equal~$\Theta(\sqrt{\log n})$-profiles,
\item have commuting graphs that are indistinguishable for the  $\mathcal{O}(\sqrt{\log n})$-dimensional Weisfeiler-Leman algorithm (for graphs),
\item are of exponent $p$ and nilpotency class $2$, and
\item have equal sizes of conjugacy classes.
\end{itemize}
\end{thm}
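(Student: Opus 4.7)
The plan is to instantiate the theorem by combining the isomorphism- and non-isomorphism-preserving graph-to-group transformation announced in the introduction with the Cai-F\"urer-Immerman (CFI) graphs of high graph-WL-dimension. I would fix an odd prime $p$ and, for each parameter $k$, take a CFI pair $X_k, Y_k$ on $\Theta(k)$ vertices that are non-isomorphic but indistinguishable by $\Omega(k)$-dimensional graph-WL. Applying the transformation yields non-isomorphic groups $G_k, H_k$ of order $n = p^{\Theta(k)}$, nilpotent of class $2$ and of exponent $p$ by design of the transformation, which immediately settles the third bullet. The identification $k = \Theta(\log n)$ then converts all graph-side quantitative bounds into bounds in terms of $\log n$.

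For the $m$-subgroup profile, I would argue that in a class-$2$ exponent-$p$ group an $m$-generated subgroup is determined by the $\mathbb{F}_p$-linear image of its generators modulo the centre together with the restriction of the commutator bilinear form; both pieces of data are encoded by the transformation through induced substructures of bounded size in the underlying graph. Since CFI pairs agree on all induced subgraphs of size $o(k)$, the multisets of $m$-generated subgroups of $G_k$ and $H_k$ coincide for $m = \Theta(\log n)$. For the commuting graph, the key observation is that commutativity of two generators is governed directly by adjacency in $X_k$ (commuting gadgets being attached to edges versus non-edges), so the commuting graphs of $G_k$ and $H_k$ form another CFI-like pair and inherit the $\Omega(\log n)$ graph-WL lower bound. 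Equality of conjugacy class sizes follows from $|g^G| = [G : C_G(g)]$ together with the fact that in class-$2$ exponent-$p$ groups the rank of the centralizer is a local combinatorial invariant of the graph, hence agrees between $X_k$ and $Y_k$.

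The main obstacle, and the conceptually delicate step, is to show that the group-WL-dimension remains bounded (in fact equal to $3$) even though the commuting graph requires $\Omega(\log n)$ graph-WL rounds. The leverage here is that the group algorithm sees the full ternary multiplication relation, equivalently the bilinear commutator map $G/Z(G) \times G/Z(G) \to [G,G]$, rather than only the derived commuting graph. I would aim to exhibit a four-variable formula in the counting logic for groups introduced earlier that expresses the CFI twist as an arithmetic condition on products landing inside $[G,G]$; by the logic-algorithm-pebble-game correspondence established earlier in the paper, this upgrades to a uniform bound of $3$ on the group WL-dimension. The subtle point is that this separation genuinely cannot be reproduced on the commuting graph alone, which is precisely what the linear relationship between the three group-WL versions, combined with the CFI lower bound on the commuting graph, is tailored to achieve.
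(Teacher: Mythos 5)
Your overall architecture matches the paper's: fix an odd prime $p$, apply the relatively-free class-$2$ exponent-$p$ construction $\Gamma \mapsto G_\Gamma$ to a CFI pair over a $3$-regular base graph, and read off the third and fourth bullets from the presentation and from $|g^G|=[G:C_G(g)]$ together with the combinatorial description of centralizers. However, there are two genuine gaps. First, your argument for equal $\Theta(\log n)$-profiles --- that an $m$-generated subgroup is ``encoded through induced substructures of bounded size in the underlying graph'' and that CFI pairs agree on small induced subgraphs --- does not go through. The generators of an $m$-generated subgroup of $G_\Gamma$ are arbitrary group elements whose supports can be the entire vertex set, so the isomorphism type of $\langle g_1,\dots,g_m\rangle$ is \emph{not} a function of any $O(m)$-vertex induced subgraph of $\Gamma$: it is governed by which standard commutators are killed, i.e., by the matrix $B_1(\bar g)\wedge B_1(\bar g)$ with the columns indexed by edges of $\Gamma$ zeroed out. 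The paper's proof has to work at this level: it passes to essentially $k$-generated subgroups (those containing the full center), shows via a rank/column-space count that for $k<N/10$ there is always an edge $e$ of the \emph{base} graph such that twisting at $e$ does not change the isomorphism type of $H/N_1$, and then transports that twist back to the canonical twisted edge using the abelian group $A$ of gadget automorphisms to produce a type-preserving bijection $\mathcal H_k^{N_1}\to\mathcal H_k^{N_2}$. None of this is replaced by a locality argument, and no locality argument can work here.

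Second, the step you yourself flag as the main obstacle --- bounding the group WL-dimension by a constant --- is not actually supplied: you assert that a four-variable formula ``expressing the CFI twist as an arithmetic condition'' should exist, but give no construction. The paper's mechanism is concrete and quite different from an arithmetic detection of the twist: Spoiler pebbles the \emph{single} group element $v$ equal to the product of a set $\mathcal V$ consisting of one internal vertex per gadget together with all its external neighbours. Because Duplicator must respect support sizes and supports (Lemmas~\ref{lem:must:respect:support:1} and~\ref{ElementSupport}), one pebble on this product forces Duplicator to commit to an entire vertex subset $\mathcal V'$ of the other graph of the same shape, and a parity count of the edges of $\Gamma_1[\mathcal V]$ versus $\Gamma_2[\mathcal V']$ (invariant under changing the choice of internal vertices, but flipped by the twisted link) yields a degree discrepancy that Spoiler exploits with the remaining pebbles. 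This ``one pebble on a product fixes a whole set'' leverage is the essential idea that makes $3$ pebble pairs suffice, and it is absent from your proposal; without it, the claim that the group dimension is bounded while the commuting graph requires $\Omega(\log n)$-dimensional graph WL remains unproved.
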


The theorem shows that the Weisfeiler-Leman algorithm can be more effective in distinguishing groups than in distinguishing graphs even when they are based on similar combinatorial constructions. 
The proof that the WL-dimension is low intuitively indicates that the ability to fix products of elements appears to be related to the ability to fix sets of elements and how to exploit this.

In comparison to the previous constructions mentioned above, our construction has the advantage that it is of a purely combinatorial nature. It is therefore easy to analyze the groups, and many combinatorial properties of the resulting groups can be tuned. In fact, we can start with an arbitrary graph and encode it into a group while preserving isomorphisms. We should stress that even though we start with graphs of unbounded Weisfeiler-Leman dimension, the resulting groups have only dimension 3. This highlights the power of Weisfeiler-Leman-type algorithms to distinguish groups beyond the scope of traditional invariants.

\subsection{Further related work}

Our work can be understood as studying the descriptive complexity of finite groups.  
We refer to Grohe's monograph~\cite{MR3729479} for extensive information on the descriptive complexity of graphs (rather than groups). A central result  in~\cite{MR3729479} shows graph classes with a forbidden minor have bounded WL-dimension. A recent paper relating first order logics and groups is~\cite{MR3705849}.
The descriptive complexity of finite abelian groups has been studied in~\cite{DBLP:journals/ijac/Gomaa10}. However, descriptive complexity of groups has been investigated considerably less than that of graphs. In contrast to this, the research body on the algorithmic Group Isomorphism Problem is extensive. The results can generally be divided into research with a more practical and research with a more theoretical focus.

\textit{On the practical side} the best algorithms for isomorphism testing are typically implemented in computer algebra systems such as SAGE, MAGMA, GAP. 
Classical algorithms include the one by Smith~\cite{smith_1996} (for solvable groups), the one by Eick, Leedham-Green and
O'Brien (for $p$-groups)~\cite{MR1904637,MR1283739},  as well as a general algorithm by Cannon and Holt~\cite{DBLP:journals/jsc/CannonH03}. 
Newer algorithms have been developed by Wilson~\cite{WilsonThesis} with numerous improvements over time together with Brooksbank and Maglione~\cite{MR3591162}.
More recent work introduces ever stronger invariants to distinguish groups quickly. We refer to~\cite{DBLP:journals/corr/abs-1905-02518} for an overview and the most recent techniques and an algorithm incorporating many of them.
Dietrich and Wilson report that current isomorphism tests are already infeasible in practice  on some groups with orders in the thousands~\cite{1806.08872}. 

In any case, in our work we focus on the \textit{theoretical side}.
As mentioned before, the best bound for the general problem is by Rosenbaum~\cite{DBLP:journals/corr/abs-1304-3935}. Polynomial time algorithms have been developed for various classes of groups~\cite{DBLP:conf/soda/BabaiCGQ11,DBLP:conf/icalp/BabaiCQ12,DBLP:conf/stacs/BabaiQ12,DBLP:conf/csr/DasS19,DBLP:conf/stacs/Gall09,DBLP:journals/siamcomp/GrochowQ17,DBLP:journals/siamcomp/IvanyosQ19, DBLP:journals/jcst/QiaoNT12}. There is an algorithm running in polynomial time for most 
orders~\cite{1806.08872}. For the currently fastest isomorphism algorithm for permutation groups 
see \cite{DBLP:conf/soda/Wiebking20}.

Recent efforts incorporate the Weisfeiler-Leman algorithm into the group isomorphism context~\cite{DBLP:journals/corr/abs-1905-02518,DBLP:conf/focs/LiQ17}.
However, there is a crucial difference 
to our work. Indeed, in these papers the authors use a combinatorial construction within the groups acting on vector spaces on which the (graph) WL-algorithm is executed. This is different to the general algorithm for all groups defined here.
Thus, a priori the two algorithmic approaches are unrelated, warranting further study.

\section{Preliminaries and notation}

\paragraph{Groups.} Groups will be denoted by capital Latin characters. For a group $G$ and elements $g,h\in G$ we write their \textit{commutator} as $[g,h]:=ghg^{-1}h^{-1}$ and we use $G'$ to refer to the subgroup of $G$ generated by all commutators. Then $G'$ is the unique minimal normal subgroup of $G$ with abelian quotient. The \textit{centralizer} of $x\in G$ is
$C_G(x):=\{g\in G\mid [x,g]=1\}$ and then $Z(G):=\{g\in G\mid C_G(g)=G\}$ is the \emph{center} of $G$. For a prime $p$, a group is called a $p$-group if $|G|=p^n$ is a power of $p$ (in particular, we assume $G$ to be finite here). The \emph{exponent} of a group is the least common multiple of the orders of its elements. A $p$-group $G$ is \textit{elementary abelian} if it is abelian and of prime exponent (i.e., $G\cong \mathbb{F}_p^n$ for some $n$). The \textit{Frattini-subgroup} $\Phi(G)$ of a group $G$ is the intersection of all maximal subgroups. If $G$ is a $p$-group then $\Phi(G)$ is the unique minimal normal subgroup of $G$ with elementary abelian quotient. The elements of $\Phi(G)$ are \emph{non-generators} in $G$, that is, if $\{g_1,\dots,g_m\}$ generates $G$ then so does $\{g_1,\dots,g_m\}\setminus\Phi(G)$.

We define a $1$-fold commutator to be just a regular commutator and then a $c$-fold commutator is an element of the form $[x,z]$ with $x\in G$ and $z$ a $(c-1)$-fold commutator in $G$.
A group $G$ is said to be \textit{nilpotent} if there is some integer $c$ such that $c$-fold commutators are always trivial in $G$ and if this is the case then the \textit{nilpotency class} of $G$ is the smallest such $c$. For example abelian groups are exactly the groups of nilpotency class $1$ and a group has nilpotency class $2$ if and only if it is non-abelian and every commutator is central.

A \textit{group isomorphism} is a bijective map $\varphi\colon G\to H$ that preserves group multiplication. We collect all isomorphisms between $G$ and $H$ in a (possibly empty) set $\Iso(G,H)$ and set $\Aut(G):=\Iso(G,G)$. We write $\Sub(G)$ for the set of all subgroups of $G$. 

We assume the term 'group' to mean 'finite group' and whenever we include infinite groups we do so explicitly.

\paragraph{Graphs.}
All graphs will be finite simple undirected graphs and referred to with greek characters, primarily $\Gamma$, subject to suitable subscripts. That is, a graph $\Gamma$ consists of a finite set of vertices $V(\Gamma)$ and a set of edges $E(\Gamma)\subseteq{{V(\Gamma)}\choose{2}}:=\{M\subseteq V(\Gamma)\mid |M|=2 \}$. The \emph{complement} of $\Gamma$ will always be the \textit{simple} complement graph, namely $\co(\Gamma):=\left(V(\Gamma),{{V}\choose{2}}-E(\Gamma)\right)$. 
The set of \emph{neighbors} of $v\in V(\Gamma)$ is $N(v):=\{w\in V(\Gamma)\mid \{v,w\}\in E\}$ and $N[v] := N(v)\cup\{v\}$ is the \emph{closed neighborhood} of $v$. 
The \textit{degree} of $v$ is $d(v):=|N(v)|$. A graph is \textit{$d$-regular} if every vertex has degree $d$. For a set of vertices $M\subseteq V(\Gamma)$, the \textit{induced subgraph} is $\Gamma[M]:=\left(M,E(\Gamma)\cap
{{M}\choose{2}}\right)$.

An \textit{isomorphism} of graphs is a bijective map $\varphi\colon V(\Gamma_1)\to V(\Gamma_2)$ that simultaneously preserves edges and non-edges. The set of isomorphisms between $\Gamma_1$ and $\Gamma_2$ is $\Iso(\Gamma_1,\Gamma_2)$ and define $\Aut(\Gamma_1):=\Iso(\Gamma_1,\Gamma_1)$.

The \emph{commuting graph} of a group is the graph whose vertices are the group elements and two distinct elements~$g,g'$ are adjacent if~$[g,g']=1$.

\subsection{The WL-algorithm for graphs}
Before we explore how the WL-algorithm can be applied to groups, we briefly recapitulate its classic definition for graphs. Given a graph~$\Gamma$, the~$k$-dimensional version of the algorithm for positive~$k\in \mathbb{N}$ repeatedly colors the~$k$-tuples of vertices with abstract colors that encode how each tuple is situated within the graph.
The initial coloring of each tuple~$(g_1,\ldots,g_k)$ encodes the isomorphism type of the graph induced by~$\{g_1,\ldots,g_k\}$, taking into account where the vertices occur in the tuple. Specifically, a coloring~$\chi_0\colon V(\Gamma)^k\rightarrow S$ into some set of colors is defined so that~$\chi_0(g_1,\ldots,g_k) = \chi_0(g'_1,\ldots,g'_k)$ holds exactly if there is an isomorphism from $\Gamma[\{g_1,\ldots,g_k\}]$ to~$\Gamma[\{g'_1,\ldots,g'_k\}]$  which sends~$g_i$ to~$g'_i$ for all~$i\in \{1,\ldots,k\}$.

The coloring is now iteratively refined as follows. For a tuple~$\bar{g}=(g_1,\ldots,g_k)\in V(\Gamma)^k$ and~$x\in V(\Gamma)$, define~$\bar{g}_{|i\leftarrow x}$ to be the tuple~$(g_1,\ldots,g_{i-1},x,g_{i+1},\ldots,g_k)$ obtained by replacing the~$i$-th entry with~$x$. Then we define for~$k>1$ the coloring~$\chi_i(\bar{g}):=$
\[
	\left( \chi_{i-1}(\bar{g}),\lmulset  (\chi_{i-1}(\bar{g}_{|1\leftarrow x}),\dots,\chi_{i-1}(\bar{g}_{|k\leftarrow x}))\mid x\in V(\Gamma)\rmulset\right).
\]

Here~$\lmulset \cdots \rmulset$ denotes multisets. 
Thus the color of the next iteration consists of the color of the previous iteration and the multiset of colors obtained by replacing each entry in the tuple with another vertex from the graph. For~$k=1$ the definition is slightly different, namely that for~$\chi_i(\bar{v})$ the multiset is only taken over vertices~$x$ in the neighborhood~$N(v)$.

Adding the color of the previous iteration as first entry ensures that the partition induced on~$V(\Gamma)$ by~$\chi_i$ is finer than (or as fine as) the partition induced by~$\chi_{i-1}$. Let~$j$ be the least positive integer for which the partition induced by~$\chi_{j-1}$ agrees with the partition induced by~$\chi_j$, then we define the final coloring~$\chi_{\infty}$ to be~$\chi_{j-1}$.
Since the domain of the~$\chi_i$ has size~$|V(\Gamma)|^k$, we know that~$j\leq |V(\Gamma)|^k$. For fixed~$k\in \mathbb{N}$, it is possible to compute the partition of~$\chi_{\infty}$ in time~$\mathcal{O}(n^{k+1}\log (n))$~\cite{MR1060782}.

To distinguish two non-isomorphic graphs the algorithm is applied on the disjoint union. If in the final coloring the multiset of colors appearing in one graph is different than those appearing in the other graph, then the graphs are not isomorphic. The converse does not necessarily hold, as we explain next.

\subsection{The CFI-graphs}\label{sec:CFI-graphs}

As mentioned previously, for each~$k$ there is a pair of non-isomorphic graphs not distinguished by~$k$-WL.

\begin{thm}[Cai, F\"urer, Immerman~\cite{DBLP:journals/combinatorica/CaiFI92}]\label{thm:cfi}
There is an infinite family of pairs of non-isomorphic 3-regular graphs on~$\mathcal{O}(k)$ vertices not distinguished by the~$k$-dimensional Weisfeiler-Leman algorithm. 
\end{thm}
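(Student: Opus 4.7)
The plan is to realize the classical Cai-Fürer-Immerman construction explicitly. As a base I would fix a family of connected 3-regular graphs $\{G_k\}_k$ with $|V(G_k)| = \mathcal{O}(k)$ and treewidth exceeding $k$; explicit 3-regular expanders provide such graphs, since their treewidth grows linearly in the number of vertices.

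Given such a $G=G_k$, I would form two CFI-graphs $X_0(G)$ and $X_1(G)$. Each vertex $v$ of $G$ is replaced by a gadget containing four \emph{internal} vertices labeled by the four even-cardinality subsets of the three incident edges, together with two \emph{port} vertices $e_v^0, e_v^1$ for every incident edge $e$. The internal vertex $S$ is made adjacent to $e_v^b$ exactly when $b$ equals the indicator $\mathbf{1}_{e\in S}$. For each edge $\{u,v\}$ of $G$ one then adds either the straight matching $\{e_u^0 e_v^0, e_u^1 e_v^1\}$ (``untwisted'') or the crossed matching $\{e_u^0 e_v^1, e_u^1 e_v^0\}$ (``twisted''). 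In $X_0(G)$ every edge is untwisted whereas in $X_1(G)$ one distinguished edge is twisted. A direct count shows both graphs are 3-regular and of total size $\mathcal{O}(|G|)=\mathcal{O}(k)$.

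For non-isomorphism I would use the $\mathbb{F}_2$-cocycle interpretation of twists. The only automorphisms of a gadget $X(v)$ swap ports on an even subset of incident edges, so moving a twist from one edge at $v$ to another corresponds to exactly such a swap. Viewing twist-patterns as $\mathbb{F}_2$-valued functions on $E(G)$, two patterns define isomorphic CFI-graphs iff they differ by a coboundary of this kind, and hence iff their total parities coincide. Since $X_0(G)$ and $X_1(G)$ have total twist parities $0$ and $1$, they are not isomorphic.

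For $k$-WL indistinguishability I would invoke the bijective $(k+1)$-pebble game characterization of WL and describe a Duplicator strategy with the following invariant: at any point in the game, if $T\subseteq V(G)$ is the set of base vertices whose gadgets currently contain a pebble (so $|T|\leq k+1$), then there exists an isomorphism between $X_0(G)$ and $X_1(G)$ restricted to the union of gadgets outside $T$ that matches all pebbled tuples identically. When Spoiler places a new pebble, Duplicator routes the twist along a path in $G\setminus T$, pushing it via coboundary moves into any prescribed unpebbled gadget. The main obstacle is to guarantee that such routing is always possible, which reduces to showing that no $(k+1)$-vertex subset separates the relevant portion of $G$; this is precisely the treewidth hypothesis on $G_k$ and constitutes the combinatorial core of the argument.
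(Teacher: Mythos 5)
The paper does not prove this statement: it is quoted as Theorem~2.1 directly from Cai, F\"urer and Immerman, and Section~2.2 only describes the construction (gadgets $F_d$ with internal vertices indexed by even-weight $0$-$1$-strings, port pairs $a_i,b_i$, parallel versus twisted connecting edges, and the fact that the isomorphism type depends only on the parity of the number of twisted links). Your gadget and linking description coincides exactly with the paper's, and your two substantive arguments -- the $\mathbb{F}_2$-parity classification of twist patterns for non-isomorphism, and a Duplicator strategy in the bijective $(k+1)$-pebble game that re-routes the twist through unpebbled gadgets, with high connectivity (linear treewidth of expanders) guaranteeing that routing is always possible -- are the standard proof of the cited theorem and are correct in outline. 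One subtlety you gloss over, which the paper explicitly flags: in the uncolored $3$-regular setting your cocycle argument for non-isomorphism needs every isomorphism of the CFI-graphs to respect the gadget decomposition and induce an automorphism of the base graph; this is automatic only when the base graph is rigid and identified (the paper's Observation~2.3 arranges a $3$-regular base graph of WL-dimension $2$ for exactly this reason). Your choice of explicit expanders does satisfy this, but the claim ``two patterns define isomorphic CFI-graphs iff they differ by a coboundary'' silently uses it in the ``only if'' direction, so you should state and justify that hypothesis. With that caveat, your proposal is a faithful reconstruction of the argument the paper delegates to the reference.
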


\tikzstyle{normalvertex}=[circle, draw]
\begin{figure}
\centering
\begin{tikzpicture}[thick,scale =0.6]
  \node[style=normalvertex,fill=red,label=left:{$b_1$}] (a0) at (-1.5,5) {};
  \node[style=normalvertex,fill=red,label=left:{$a_1$}] (b0) at (-3.5,5) {};
  
  \node[style=normalvertex,fill=blue,label=left:{$b_3$}] (a2) at (1,9) {};
  \node[style=normalvertex,fill=blue,label=left:{$a_3$}] (b2) at (-1,9) {};
  
  \node[style=normalvertex,fill=green,label=left:{$b_2$}] (a1) at (4,5) {};
  \node[style=normalvertex,fill=green,label=left:{$a_2$}] (b1) at (2,5) {};
  
  \node[style=normalvertex,label=180:{{\small $000$}}] (v1) at (-3,7) {};
  \node[style=normalvertex,label=180:{{\small $011$}}] (v2) at (-1,7) {};
  \node[style=normalvertex,label=180:{{\small $110$}}] (v3) at (1,7) {};
  \node[style=normalvertex,label=180:{{\small $101$}}] (v4) at (3,7) {};
  
  \path[thick]
   (v1) edge (b0)
   (v1) edge (b2)
   (v1) edge (b1)
   (v2) edge (b0)
   (v2) edge (a2)
   (v2) edge (a1)
   (v3) edge (a0)
   (v3) edge (b2)
   (v3) edge (a1)
   (v4) edge (a0)
   (v4) edge (a2)
   (v4) edge (b1);

 \end{tikzpicture}\caption{A depiction of the CFI-gadget~$F_3$.}\label{fig:cfi}
\end{figure}
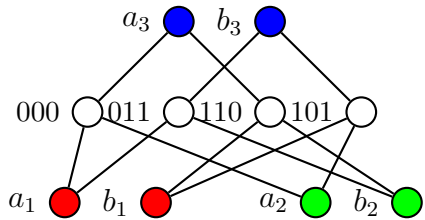

Since we intend to exploit the construction by transferring it to groups, we describe it next. We start with a connected base graph~$\Gamma$.
In this graph every vertex is replaced by a particular gadget and the gadgets are interconnected according to the edges of~$\Gamma$ as follows.
For a vertex~$v$ of degree~$d$ we use the gadget~$F_d$, which is a graph whose vertex set consists of external vertices~$O_d= \{a^v_1,b^v_1,a^v_2,b^v_2,\ldots,a^v_d,b^v_d\}$ and internal vertices $M_d$. The internal vertices form a copy of the set of those 0-1-strings of length $d$ that have an even number of entries equal to 1. For each~$i$, each internal vertex~$m$ is adjacent to exactly one vertex of~$\{a^v_i,b^v_i\}$, namely it is adjacent to~$a_i$ if the~$i$-th bit of the string~$m$ is 0 and to~$b_i$ otherwise. An example of~$F_3$ is depicted in Figure~\ref{fig:cfi}.
It remains to explain how the different gadgets are interconnected. For this, for a vertex~$v\in\Gamma$ of degree~$d$ each edge is associated with one of the pairs~$a^v_i,b^v_i$. For an edge~$(u,v)\in E(\Gamma)$ assume~$u$ is associated with the pair~$(a^u_i,b^u_i)$ in the gadget corresponding to~$u$ and~$v$ is associated with the pair~$(a^v_j,b^v_j)$ in the gadget corresponding to~$v$. Then we insert (parallel) edges~$\{a^u_i,a^v_j\}$ and~$\{b^u_i,b^v_j\}$. Adding such parallel edges for each edge of the base graph we obtain the graph~$\CFI(\Gamma)$. The \emph{twisted CFI-graph} $\widetilde{\CFI(\Gamma)}$ is obtained by replacing one pair of (parallel) edges~$\{a^u_i,a^v_j\}$ and~$\{b^u_i,b^v_j\}$ with the (twisted) edges~$\{a^u_i,b^v_j\}$ and~$\{b^u_i,a^v_j\}$. It can be shown that for connected base graphs (up to isomorphism) it is irrelevant which edge is twisted~\cite{DBLP:journals/combinatorica/CaiFI92}. For a subset of the edges of the base graph~$E'\subseteq E(\Gamma)$, we can define the graph obtained by twisting exactly the edges in~$E'$. The resulting graph is isomorphic to~$\CFI(\Gamma)$ if~$|E'|$ is even and isomorphic to $\widetilde{\CFI(\Gamma)}$  otherwise.

In the original construction the base graph is usually thought of as vertex colored with all vertices obtaining a different color. This makes all gadgets distinguishable. The colors can be removed by attaching gadgets retaining the property that the base graph is identified by 2-dimensional Weisfeiler-Leman. We want to record here the observation that it is possible to choose the base graph of WL-dimension 2 while maintaining the property that it is 3-regular.

\begin{obs}\label{obs:base:graph:low:WL:dim}
The 3-regular base graph~$\Gamma$ can be chosen to have Weisfeiler-Leman dimension at most 2.
\end{obs}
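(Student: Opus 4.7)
The plan is to show that 3-regular graphs of Weisfeiler-Leman dimension at most 2 exist in an infinite family, so that for the CFI construction one can always select such a base graph. I would base the argument on the following sufficient criterion: if there is some vertex $v \in V(\Gamma)$ such that 1-dimensional color refinement on $\Gamma$ with $v$ individualized reaches the discrete partition on $V(\Gamma)$, then the 2-WL stable coloring of $\Gamma$ identifies $\Gamma$ up to isomorphism, and hence $\Gamma$ has WL-dimension at most 2. The idea is that the 2-WL colors of pairs $(v,u)$ simulate color refinement on $\Gamma$ with $v$ individualized; once these pair colors are all distinct, every edge and non-edge is recoverable from the stable coloring, and the isomorphism type of $\Gamma$ is determined.

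Next, I would verify that this sufficient criterion is satisfied by an infinite family of 3-regular graphs. The most convenient route is probabilistic: a uniformly random 3-regular graph on $n$ vertices is, with probability tending to one, asymmetric and locally tree-like, and individualizing a single vertex allows color refinement to propagate distinct colors throughout the graph. This yields a 3-regular graph of WL-dimension at most 2 for every sufficiently large $n$. An alternative is to give an explicit construction, for example using suitable asymmetric cubic Cayley graphs, and verify the criterion directly by tracing the propagation of refined colors along a spanning tree rooted at the individualized vertex.

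The main obstacle is carrying out the verification of the sufficient criterion for the chosen family. In the probabilistic approach this reduces to standard facts about rigidity and local tree structure of random regular graphs; in an explicit approach it requires a direct analysis of color propagation after individualization, which must avoid creating any 2-WL-equivalent but non-isomorphic substructures. Either way, the result comfortably suffices for the observation, since only the existence of one such infinite family is required for the \textbf{CFI} construction to proceed without vertex-colored base graphs.
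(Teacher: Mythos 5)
Your proposal follows essentially the same route as the paper, which offers no real proof but points to exactly your argument as one of two options (``random expanders, usually used in the construction, have this property''); your added sufficient criterion---that if individualizing a single vertex makes colour refinement discrete then the $3$-pebble bijective game, and hence $2$-WL, identifies $\Gamma$---is correct and is the standard way to make that hint precise. The remaining probabilistic verification that random cubic graphs satisfy the criterion with high probability is left to ``standard facts'' on your side, which is the same level of detail the paper itself provides.
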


This can be seen in two ways, by adding gadgets on edges or by observing that random expanders, usually used in the construction, have this property.

\subsection{First order logic with counting}\label{subsec:first:order:logic:graphs}

There is a close connection between the Weisfeiler-Leman algorithm of dimension~$k$ and the~$(k+1)$-variable fragment of first order logic on graphs with counting quantifiers~\cite{DBLP:journals/combinatorica/CaiFI92}. To obtain this logic we endow first order logic with counting quantifiers. The formula~$\exists^{\geq i}x  \varphi(x)$ expresses then the fact that there are at least~$i$ distinct elements that satisfy the formula~$\varphi$. For example the formula~$\exists^{\geq 3} x \exists^{\geq 4} y E(x,y)$ would express that the graph contains at least 3 vertices of degree at least 4. 
The logic~$\mathbf{C}^{k}$ is the fragment of said logic which allows formulas to only use~$k$ distinct variables (that can however be reused an arbitrary number of times). We refer to~\cite{MR1060782} for a more thorough introduction to these logics and a proof that two graphs can be distinguished by~$k$-dimensional WL exactly if there is a formula in~$\mathbf{C}^{k+1}$ that holds on the one graph but not on the other.
Often such logics are endowed with a fixed-point operator, but since we will only apply the formulas to structures of fixed size, this will not be necessary for us (see~\cite{DBLP:books/cu/O2017} for more information).

\subsection{The pebble game}\label{subsec:pebble:game:graphs}

There is a third concept, the bijective pebble game~\cite{DBLP:journals/iandc/Hella96}, that has a deeper connection to the logic~$\mathbf{C}^{k+1}$ and the~$k$-WL. This game is often used to show that graphs cannot be distinguished by $k$-WL.
The game is an Ehrenfeucht-Fra{\"{\i}}ss{\'{e}}-type game with two players Duplicator and Spoiler. Initially~$k+1$ pairs of pebbles, each pair uniquely colored, are placed next to two given input graphs~$\Gamma_1,\Gamma_2$.
Each round proceeds as follows: Spoiler picks up a pebble pair~$(p_i; p_i')$ of pebbles of the same color.
Then Duplicator chooses a bijection~$\varphi$ from~$V(\Gamma_1)$ to~$V(\Gamma_2)$. Then Spoiler places pebble~$p_i$ on a vertex~$v\in V(\Gamma_1)$ and places~$p_i'$ on~$\varphi(v)$. Spoiler wins if at any point in time the graph induced by the vertices occupied by pebbles
in~$V(\Gamma_1)$ is not isomorphic to the graph induced by the vertices occupied by
pebbles in~$V(\Gamma_2)$ via a map that sends a pebble~$p_i$ to its corresponding pebble of the same color~$p_i'$ in the other graph. Spoiler also wins (in round 0) if~$|V(\Gamma_1)|\neq |V(\Gamma_2)|$.

When using~$k+1$ pebbles on two graphs, the game can be won by Spoiler exactly if~$k$-WL distinguishes the graphs~\cite{DBLP:journals/iandc/Hella96}.

\section{WL-type algorithms on groups}

As with graphs we would like to be able to study combinatorial properties of finite groups using WL-type algorithms. The natural approach is to adapt the methods from the last section to suit (finite) groups. However, depending on the interpretation of these methods, we will obtain several different choices for initial colorings and refinement strategies for finite groups. We will argue that different methods are all in some sense natural and interesting in their own right. However, the different concepts (possibly) lead to different notions of Weisfeiler-Leman dimension for groups. In contrast to this, for graphs, all notions are equivalent.
While we are not able to precisely determine whether for groups the different methods are equally powerful at this point, we do however show that exchanging one method for another changes the dimension by at most a constant factor.

\subsection{Weisfeiler-Leman algorithms for groups}

The following algorithms define color-refinement procedures on $k$-tuples of group elements. Since groups are (essentially) ternary relational structures, we will usually require for the dimension that~$k\geq 2$. In the following let~$G$ be a group. We will define three versions of the WL-algorithm for groups.

\paragraph{Version I:} Define an initial coloring~$\chi_0\colon G^k \rightarrow C$ on $k$-tuples of group elements so that $(g_1,\dots,g_k)$ and $(h_1,\dots,h_k)$ obtain the same color if and only  if for all indices~$i,j\in \{1,\ldots,k\}$ we have~$g_i=g_j$ exactly if~$h_i=h_j$ and for all indices $i,j,m\in \{1,\ldots,k\}$ we have $g_ig_j=g_m$ exactly if $h_ih_j=h_m$. We iteratively define the refinement~$\chi_i$ in the classical way just like it is defined for graphs, that is for $\bar{g}:=(g_1,\dots,g_k)\in G^k$ we have~$\chi_i(\bar{g}):=$
\[
	\left( \chi_{i-1}(\bar{g}),\lmulset  (\chi_{i-1}(\bar{g}_{|1\leftarrow x}),\dots,\chi_{i-1}(\bar{g}_{|k\leftarrow x}))\mid x\in G\rmulset\right).
\]

\paragraph{Version II:} In the definition for graphs, the initial coloring of a tuple takes into account the subgraph induced by the tuple. In analogy to this, one might argue that for groups the initial coloring needs to take into account the subgroup generated by the tuple.
Thus, in Version II, we define an initial coloring~$\chi_0$ on $k$-tuples of group elements such that $(g_1,\dots,g_k)$ and $(h_1,\dots,h_k)$ obtain the same color if and only if there is a map with $g_i\mapsto h_i$ which extends to an isomorphism from~$\langle g_1,\ldots, g_k\rangle $ to~$\langle h_1,\ldots, h_k \rangle$. In this case we will also say the tuples agree in their \textit{marked isomorphism type}.
The iterative refinement is again performed in the classical way.

\paragraph{Version III:} 

For Version III, we encode groups as graphs, execute the WL-algorithm for graphs and pull back the coloring.
For this choose an isomorphism-preserving, invertible functor $\Gamma_{\bullet}$ that maps finite groups $G$ to finite (simple) graphs $\Gamma_G$. Our working example will be the following construction but other choices are certainly possible and many choices will lead to equivalent or at least related algorithms.

To obtain the graph $\Gamma_G$ (see Figure~\ref{fig:mult:gadet}), start with a set of isolated nodes corresponding to elements of the group $G$. For each pair of group elements $(g,h)$ add a multiplication gadget $M(g,h)$ by adding 4 nodes~$a_{gh},b_{gh},c_{gh},d_{gh}$ and add the edges~$E(M(g,h))=$ \[ \{\{g,\!a_{gh}\},\{h_{gh},\!b_{gh}\},\{gh,\!d_{gh}\},\{a_{gh},\!b_{gh}\},\{b_{gh},\!c_{gh}\},\{c_{gh},\!d_{gh}\}\}.\]

We then use the classical $k$-dimensional WL-algorithm on the graph $\Gamma_G$ and pull back the colorings of $k$-tuples by simply restricting it to $G^k$.

By construction we have $|\Gamma_G|=\Theta(|G|^2)$ and due to vertex-degrees $G$ is a canonical subset of the vertices of $\Gamma_G$. Thus for two groups~$G,H$ we have~$\Gamma_G\cong \Gamma_H$ if and only if~$G\cong H$. 

Many other reductions of this form transforming groups to graphs are possible. However, some of them are artificial. For example, one could artificially ensure that the resulting graph has low WL-dimension by precomputing certain isomorphism-invariants not captured by the WL-algorithm. Thus, a unified treatment of all isomorphism-preserving constructions seems infeasible. 
However, it seems that for many 'well-behaved' functors the WL-dimension of the constructed graphs differ	 by a constant factor only. 
On another note, it would be interesting to obtain efficiently computable subquadratic reductions from groups to graphs, but we are not aware of such a construction.

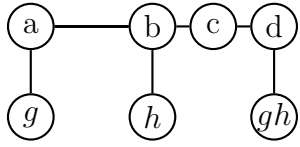
\begin{figure}
\centering
\begin{tikzpicture}[thick,scale=0.8]
			\tikzstyle{every node}=[draw,shape=circle, inner sep=0, minimum height= 0.6cm]
			\path 
			(-7.5,-1.5) node (p1) {$g$}
			(-7.5, 0) node (p2) {a}
			(-5.5,-1.5) node (p3) {$h$}
			(-5.5,0) node (p4) {b}
			(-4.5,0) node (p5) {c}
			(-3.5,0) node (p6) {d}
			(-3.5,-1.5) node (p7) {$gh$};
			\draw[thick]
			(p1) -- (p2)
			(p2) -- (p4)
			(p3) -- (p4)
			(p2) -- (p4)
			(p5) -- (p4)
			(p5) -- (p6)
			(p6) -- (p7);
	\end{tikzpicture}
\caption{The multiplication gadget to encode the multiplication~$g\cdot h = gh$}\label{fig:mult:gadet}
\end{figure}

For fixed~$k$, each version of the WL-algorithm gives rise to a polynomial-time (possibly) partial isomorphism test on pairs of finite groups. Indeed, marked isomorphism of $k$-tuples in a group $G$ can be checked in time $\mathcal{O}(|G|\log(|G|))$ so we can compute initial colors in time $\mathcal{O}(|G|^{k+1}\log(|G|))$ for Versions I and II.
The refinement steps are the same as for graphs and thus we obtain the same $\mathcal{O}(|G|^{k+1}\log(|G|))$ bound for the rest of the computation. For Version III we have a quadratic blowup yielding time~$\mathcal{O}(|G|^{2k+1}\log(|G|))$.

\begin{dfn}
	Groups $G$ and $H$ are equivalent with respect to $k$-WL in Version $J\in\{\text{\versone,\verstwo,\versthree}\}$, in symbols $G\equiv_{WL_k^J} H$, if there is a bijection $f\colon G^k\to H^k$ preserving final colors of the respective color-refinement procedure. Furthermore we write $WL_k^J\preceq WL_{k'}^{J'}$ if it holds that
	$G\equiv_{WL_{k'}^{J'}} H\Rightarrow G\equiv_{WL_k^J} H$, i.e., the distinguishing power of $WL_k^J$ is weaker than or equal to the distinguishing power of $WL_{k'}^{J'}$.
\end{dfn}

The main result of this section is that we can exchange one version for another when we multiply the dimension with a constant factor. 
When studying WL-type algorithms it is often useful to have equivalent pebble games at hand, so we first associate a pebble game to each of the variants above.

\subsection[Bijective k-pebble games]{Bijective $k$-pebble games} We now define suitable pebble games for the different versions. Each of these games is played by two players Spoiler and Duplicator and in each case we will say that Duplicator wins the game if and only if there is a strategy for Duplicator to keep the game going on forever. The board consists of a pair of finite groups $G,H$ of equal order (or rather their elements) or a pair of corresponding graphs $\Gamma_G,\Gamma_H$ for Version III, respectively. There are~$k$ pairs of pebbles~$(p_1;p'_1), \ldots (p_k;p'_k)$. We think of pebbles in the same pair as having the same color, and pebbles from different pairs as having distinct colors. The pebbles can be placed beside the board or on the group elements (graph vertices in Version III), in which case we say a group element is pebbled. Pebbles~$p_i$ are placed on elements of~$G$ (vertices of~$\Gamma_G$) and pebbles~$p'_i$ on~$H$ ($\Gamma_H$). At any point in time the pebbles~$(p_1,\ldots,p_k)$ give us a pebbled tuple in~$(G\cup \{\bot\})^k$ (or $(\Gamma_G\cup\{\bot\})^k$), where~$\bot$ indicates that the pebble is placed besides the board.

\paragraph{Version I:} All $k$ pairs of pebbles are initially placed beside the board. A round of the game consists of these steps:
\begin{enumerate}
	\item Spoiler picks up a pair of pebbles~$(p_i;p_i')$.
	\item Duplicator chooses a bijection $f\colon G\to H$.
	\item Spoiler pebbles some element $g\in G$ with~$p_i$, the corresponding pebble~$p_i'$ is placed on $f(g)$.
\end{enumerate}

The winning condition is always checked right after Step 1. At that moment, the pebbles not in Spoiler's hand then pebble a~$k$-tuple over $G\cup\{\bot\}$ and a corresponding $k$-tuple over $H\cup\{\bot\}$. Spoiler wins if the pebbled tuples differ with respect to the initial coloring of Version I. (This implies that no more pebbles are placed beside the graph.)

\paragraph{Version II:} Version II differs from Version I only in that the winning condition uses the initial coloring of Version II rather than Version I. That is, Spoiler wins if the map induced by pairs of pebbles does not extend to an isomorphism between the subgroups generated by the pebbled group elements, and the game continues otherwise.

\paragraph{Version III:} Version III is the (classical) bijective $k$-pebble game for graphs played on $\Gamma_G$ and $\Gamma_H$ (see Subsection~\ref{subsec:pebble:game:graphs}).

\bigskip

In the pebble games, when we say that ``Duplicator has to do something'', we mean that otherwise Spoiler wins the game.
We say that Duplicator respects a certain property of group elements if Duplicator always has to pebble pairs of groups elements which agree in whether they have the property.
One can show that Duplicator must respect the partial mapping given by the pairs of pebbles that are currently on the board. Indeed, otherwise Spoiler can win in the next round by pebbling the location where this is violated.

For each game we can also use initial configurations of pebbled tuples instead of starting from empty configurations.

\textit{Remark:} 
	Color refinement and pebble games are not necessarily restricted to finite groups. While not clear that the results are computable, they still may be of theoretical interest. The same goes for the logics defined next.

\subsection{Logics with counting} As for graphs, the $k$-dimensional refinement on groups can also be interpreted in terms of first-order counting logic.

Recall the central aspects of first order logic. There is a countable set of variables~$\{x_1,x_2,\ldots\}$. Formulas are inductively defined so that~$x_i=x_j$ is a formula for all pairs of variables and if~$\varphi$ is a formula then~$\varphi\wedge \varphi$,~$\varphi\vee \varphi$,$\neg \varphi$,~$\exists x_i \varphi$ and~$\forall x_i \varphi$ are formulas. The semantics are defined in the obvious way. 
First order logic with counting allows additionally formulas of the form $\exists^{\geq t}x_i \varphi(x_i)$ with the semantic meaning that there are at least~$t$ distinct elements that satisfy~$\varphi$.

To define logics on groups we need to additionally define a relation that relates to the group multiplication.

\paragraph{Version I:} In Version I we add a ternary relation~$R$ with which we can create terms of the form~$R(x_i,x_j,x_\ell)$. The semantic interpretation is that~$R(x_i,x_j,x_\ell)$ holds if~$x_i \cdot x_j = x_{\ell}$.
We call~$\mathcal{L}_\versone$ the first order logic with counting on groups arising this way and let~$\mathcal{L}^k_\versone$ be its~$k$-variable fragment.

\paragraph{Version II:} For~$\mathcal{L}_{\verstwo}$ we use a different relation to access multiplication: 
The relation~$R(x_{i_1},x_{i_2},\ldots,x_{i_t};w)$ holds, where~$w\in (\{x_{i_1},x_{i_2},\ldots,x_{i_t}\}\cup \{x_{i_1}^{-1},x_{i_2}^{-1},\ldots,x_{i_t}^{-1}\})^*$ is a word in the~$x_{i_j}$, if multiplying the elements according to~$w$ gives the trivial element. For example in an abelian group~$G$ the relation~$R(a,b;aba^{-1}b^{-1})$ would hold for all elements~$a,b\in G$. The relation~$R(a;aa)$ would only hold if~$a$ is the trivial element. We let~$\mathcal{L}^k_{\verstwo}$ be the~$k$-variable fragment of the logic.
Note that for~$\mathcal{L}^k_{\verstwo}$ it actually suffices to use 
only~$k+1$ entries in the relation.

\paragraph{Version III:} The natural choice of logic for Version III is of course the classical first order logic with counting~$\mathbf{C}$ on graphs as discussed in the preliminaries (Subsection~\ref{subsec:first:order:logic:graphs}), where we have the relation~$E(u,v)$ to encode edges. For notational consistency we define $\mathcal{L}^k_{\versthree} \coloneqq \mathbf{C}^{k}$ to be the~$k$-variable fragment of this logic.

\subsection{Equivalence between the different concepts}\label{subsec:equivalence:of:concepts}

For each of the versions we have defined, we sketch the arguments for equivalence of the expressive power between the WL-algorithm, the pebble game, and the corresponding logic.
Let us fix groups $G$ and $H$ of the same order. The argument basically follows other well known arguments to show such equivalences (see e.g., \cite{DBLP:journals/combinatorica/CaiFI92}).

\begin{thm}\label{thm:game:and:algo:agree}
	Two groups $G$ and $H$ are distinguished by the $k$-WL-refinement (Version $J\in \{\text{\versone,\verstwo,\versthree}\}$) if and only if the same holds for the bijective $k+1$-pebble game (Version $J$).
\end{thm}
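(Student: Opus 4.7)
The plan is to establish the equivalence in all three versions by the standard route: introduce a round-counting refinement of the claim and prove it by induction on the iteration count. Concretely, I would prove that for all $\bar g\in G^k$ and $\bar h\in H^k$, the equality $\chi_i^J(\bar g)=\chi_i^J(\bar h)$ is equivalent to Duplicator being able to survive at least $i$ further rounds of the bijective $(k+1)$-pebble game (Version~$J$), started from the configuration in which the first $k$ pebble pairs are placed on $(\bar g,\bar h)$ and the last pair is off the board. Taking $i$ larger than the WL stabilization time (and using König's lemma to convert unbounded survival into an infinite strategy) then yields the theorem.

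The base case $i=0$ is immediate from the definitions. The condition $\chi_0^J(\bar g)=\chi_0^J(\bar h)$ is, by the setup of Version~$J$, precisely the negation of Spoiler's round-$0$ winning condition once Spoiler picks up the free pebble pair. For Versions~I and~II this is a direct match. For Version~III the base case is absorbed into the classical graph-level correspondence applied to $\Gamma_G$ and $\Gamma_H$, where the game is played, and the pullback to $G^k$ is immediate since the group elements form a canonically distinguishable subset of the vertices of~$\Gamma_G$.

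For the inductive step, I would unfold $\chi_i^J(\bar g)=\chi_i^J(\bar h)$ into two conjuncts: first, $\chi_{i-1}^J(\bar g)=\chi_{i-1}^J(\bar h)$, which by induction guarantees that Duplicator survives the immediate check, and second, the existence of a bijection $f\colon G\to H$ such that $\chi_{i-1}^J(\bar g_{|j\leftarrow x})=\chi_{i-1}^J(\bar h_{|j\leftarrow f(x)})$ for every $x\in G$ and every coordinate $j\in\{1,\dots,k\}$. In the forward direction, Duplicator's strategy is to play $f$ whenever Spoiler picks up a placed pebble $p_j$; whichever $x$ Spoiler then selects, the induction hypothesis applied to $\bar g_{|j\leftarrow x}$ and $\bar h_{|j\leftarrow f(x)}$ supplies the remaining $i-1$ rounds of strategy (if Spoiler picks up the free pebble pair, the configuration on the board is unchanged and one uses only the first conjunct). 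For the converse, assuming no such simultaneous $f$ exists, a Hall-type marriage argument identifies a coordinate $j$ and, for every response Duplicator could offer, a witness $x$ leading to a configuration whose $\chi_{i-1}^J$-colors disagree, whence Spoiler wins in $i-1$ further rounds by induction.

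The argument is essentially uniform across the three versions since the refinement rule is identical in each case and only the initial colorings (and, for Version~III, the board on which the game is played) differ. In particular, Version~III reduces to the classical equivalence of Hella applied to $\Gamma_G$ and $\Gamma_H$. The main technical difficulty I anticipate lies in the converse half of the inductive step: the non-oblivious definition of $\chi_i$ demands a single bijection witnessing agreement at every coordinate simultaneously, whereas in the game Duplicator responds to Spoiler's coordinate choice with a possibly coordinate-dependent bijection. Bridging this gap is precisely where the Hall-style marriage argument enters and where most of the work lies.
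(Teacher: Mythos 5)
Your overall architecture---induction on the number of rounds, a single biconditional relating $\chi_i$-equality to surviving $i$ further rounds, identical treatment of the refinement step across versions, and deferral to the classical graph result for Version~III---is the same as the paper's, which splits the biconditional into two lemmas proved by exactly this induction. However, there is a genuine gap in your converse direction, and it sits precisely at the point you flag as the main difficulty. You propose that, when no single bijection $f$ witnesses the multiset equality, a Hall-type marriage argument identifies one fixed coordinate $j$ such that \emph{every} bijection Duplicator could offer after Spoiler lifts pebble $p_j$ is defeated by some witness $x$. No such fixed coordinate need exist: it can happen that for each $j$ separately there is a bijection $f_j$ with $\chi_{i-1}(\bar g_{|j\leftarrow x})=\chi_{i-1}(\bar h_{|j\leftarrow f_j(x)})$ for all $x$, while no single bijection works for all coordinates simultaneously. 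Schematically, for $k=2$ and two elements per side, let the tuples of colors be $(A,C),(B,D)$ on the $G$-side and $(A,D),(B,C)$ on the $H$-side: the identity handles coordinate $1$, the swap handles coordinate $2$, yet the multisets of pairs differ. So the combinatorial inference you need (coordinatewise bijections imply a simultaneous one) is invalid, and your quantifier order (``there exists $j$ such that for all $f$ there exists $x$'') is the wrong way around; the paper's sketch has the correct order (``for all $f$ there exist $x$ and $j$'').

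The correct mechanism is the $(k+1)$-st pebble pair, whose role your sketch dismisses in the parenthetical ``the configuration on the board is unchanged.'' Spoiler lifts the \emph{free} pair; Duplicator must then commit to a bijection $f$; Spoiler chooses $x$ witnessing the failure of $f$ against the multiset of $k$-tuples of colors and places the free pebbles on $x$ and $f(x)$; only \emph{then}, in the next round, does Spoiler lift the pebble from the coordinate $j$ at which $f$ fails for this particular $x$. This two-round maneuver is what lets Spoiler defer the choice of $j$ until after seeing $f$, and it is the entire reason the game uses $k+1$ pebble pairs rather than $k$. Accordingly, your induction hypothesis, which quantifies only over configurations with $k$ pairs placed, must be adjusted to accommodate the intermediate $(k+1)$-pebble configuration (or be restated in terms of the number of Spoiler placements, as the paper's lemmas do). Symmetrically, no marriage argument is needed in the forward direction either: equality of the multisets of $k$-tuples of colors is, by definition, the existence of a single bijection matching equal tuples, and that is the bijection Duplicator plays whenever the free pair is lifted.
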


\textit{Remark:} Let us remark on a small detail where the group situation can differ from that of graphs. Note that in our definition of the game, the winning condition is only ever checked after Step 1. We could also check the winning condition when~$k+1$ pebble pairs are situated on the group after a round is finished. For this we would need an initial coloring that works with~$k+1$ tuples. For graphs this change does not make a difference, since the winning condition only ever depends on 2 pebble pairs. Similarly for Version I, where the winning condition occurs due to 3 pebble pairs, if~$k>3$ then it is irrelevant when we check the winning condition. However, for Version II we are not so sure how the power of the game changes, when altering the moment at which the winning condition is checked.

\begin{thm}\label{thm:game:and:logic:agree}
	$G\not\equiv_{WL_k^J}H$ if and only if there is a sentence in~$\mathcal{L}^{k+1}_J$ that holds on one of the groups but not the other.
\end{thm}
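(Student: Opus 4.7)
The plan is to prove both directions simultaneously by induction on the refinement round $r$, establishing that the equivalence relation on $k$-tuples induced by $\chi_r$ agrees precisely with the equivalence relation induced by $\mathcal{L}^{k+1}_J$-formulas of nesting depth at most $r$. More concretely, for every color class $C$ of $\chi_r$ one constructs a formula $\varphi_{C,r}(x_1,\ldots,x_k) \in \mathcal{L}^{k+1}_J$ with the property that $\chi_r(\bar{g}) = C$ iff $G \models \varphi_{C,r}(\bar{g})$; conversely, for every formula $\varphi(x_1,\ldots,x_k) \in \mathcal{L}^{k+1}_J$ of depth at most $r$, any two tuples with the same $\chi_r$-color agree on $\varphi$. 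Since $\chi_\infty = \chi_j$ for some $j \leq |G|^k$, stabilization of the refinement gives the claim.

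For the base case $r=0$ the initial coloring must be captured by quantifier-free formulas. In Version~I this is immediate: the partial multiplication pattern is a Boolean combination of atoms $x_i=x_j$ and $R(x_i,x_j,x_\ell)$ in $k$ variables. In Version~III the graph case is classical, using atoms $E(u,v)$ and equality on the pulled-back coloring on $\Gamma_G$. Version~II is the subtle one: two tuples have the same marked isomorphism type iff for every word $w$ in $x_1^{\pm 1},\ldots,x_k^{\pm 1}$ the evaluations either both yield the identity or both do not, which is exactly the family of atoms $R(x_{i_1},\ldots,x_{i_t};w)$ and their negations. Since $G$ is finite there are only finitely many distinct marked isomorphism types on $k$-generated subgroups, so a finite Boolean combination of such atoms suffices to define each initial color class.

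The inductive step mimics the multiset refinement rule by a counting quantifier. Given formulas $\varphi_{D,r}$ for every color $D$ of $\chi_r$, the new color $\chi_{r+1}(\bar{g})$ is determined by $\chi_r(\bar{g})$ together with, for each tuple of colors $\bar{D} = (D_1,\ldots,D_k)$, the number $n_{\bar{D}}$ of $x \in G$ with $\chi_r(\bar{g}_{|i \leftarrow x}) = D_i$ for all $i$. Introducing a fresh variable $x_{k+1}$, this is expressed by
\[
\exists^{\geq n_{\bar{D}}} x_{k+1} \bigwedge_{i=1}^{k} \varphi_{D_i,r}(x_1,\ldots,x_{i-1},x_{k+1},x_{i+1},\ldots,x_k)
\]
and its dual $\exists^{< n_{\bar{D}}+1}$, conjoined over all $\bar{D}$ together with $\varphi_{\chi_r(\bar{g}),r}$. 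Only $k+1$ variables are used because inside each $\varphi_{D_i,r}$ the variables $x_1,\ldots,x_k$ are reused. The converse direction is the dual induction: atomic formulas respect $\chi_0$, and since $\chi_{r+1}$ already records all multiset information about colors obtained by single-coordinate substitutions, Boolean combinations and the one additional counting quantifier over $x_{k+1}$ cannot separate tuples of equal $\chi_{r+1}$-color.

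The main obstacle is the Version~II base case, since in principle the marked isomorphism type is cut out by infinitely many word relations; one must invoke finiteness of $G$ to truncate to finitely many relevant words (length at most $|G|$ suffices), or equivalently argue that the initial partition has only finitely many classes each definable by one finite conjunction of $R$-atoms. A cleaner alternative, which I would present as a backup, is to factor the proof through Theorem~\ref{thm:game:and:algo:agree}: prove directly that Duplicator wins the $(k+1)$-pebble game of Version~$J$ iff the two groups agree on all $\mathcal{L}^{k+1}_J$-sentences, via a standard back-and-forth converting winning strategies into partial elementary equivalences. This route sidesteps the variable-reuse bookkeeping and unifies all three versions.
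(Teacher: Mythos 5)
Your proposal is correct and follows essentially the same route as the paper: the paper likewise reduces the theorem to (a) a lemma that the initial coloring of Versions~I and~II is exactly quantifier-free definability in $\mathcal{L}^k_J$ --- including the same observation that finiteness of $G$ is needed in Version~II to cut the infinitely many word relations down to finitely many inequivalent ones --- and (b) the classical Cai--F\"urer--Immerman induction simulating each refinement round with one counting quantifier over a fresh reusable variable, which you write out explicitly where the paper cites it. The pebble-game detour you offer as a backup is also how the paper organizes the neighbouring Theorem~\ref{thm:game:and:algo:agree}, so nothing in your argument diverges substantively from the paper's.
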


The rest of this section contains the proof of Theorems~\ref{thm:game:and:algo:agree} and~\ref{thm:game:and:logic:agree}.

\begin{lem}
	Suppose $\bar{g} := (g_1,\dots,g_k)\in G^k$ and $\bar{h} := (h_1,\dots,h_k)\in H^k$. If $\bar{g}$ and $\bar{h}$ obtain different colors 
	in the $i$-th iteration of $k$-dimensional WL-refinement then Spoiler can win the $(k+1)$-pebble game in $i$ moves on initial configuration $(\bar{g},\bar{h})$.
	(Here we use the same version for WL-refinement and pebble game.)
\end{lem}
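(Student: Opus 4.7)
The plan is induction on the iteration count $i$.

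For the base case $i=0$: if $\chi_0(\bar g) \neq \chi_0(\bar h)$ then Spoiler wins in $0$ moves. Concretely, in the opening round Spoiler picks up the one pebble pair that is initially off the board as Step~1; the winning-condition check then fires immediately, since the remaining $k$ pebble pairs pebble the tuples $\bar g$ and $\bar h$, whose initial colors are by assumption distinct in the version under consideration.

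For the inductive step $i\geq 1$: if already $\chi_{i-1}(\bar g) \neq \chi_{i-1}(\bar h)$, the induction hypothesis yields a win in $i-1\leq i$ moves. Otherwise $\chi_{i-1}(\bar g) = \chi_{i-1}(\bar h)$, so the definition of $\chi_i$ forces the refinement-multisets
\[
\lmulset (\chi_{i-1}(\bar g_{|1\leftarrow x}),\dots,\chi_{i-1}(\bar g_{|k\leftarrow x})) \mid x\in G\rmulset \quad \text{and} \quad \lmulset (\chi_{i-1}(\bar h_{|1\leftarrow y}),\dots,\chi_{i-1}(\bar h_{|k\leftarrow y})) \mid y\in H\rmulset
\]
to disagree as multisets of $k$-tuples of colors. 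Consequently, for every bijection $f\colon G\to H$ there exists some $x\in G$ whose color-tuple on the $G$-side differs from the one for $f(x)$ on the $H$-side, and hence a coordinate $j$ with $\chi_{i-1}(\bar g_{|j\leftarrow x}) \neq \chi_{i-1}(\bar h_{|j\leftarrow f(x)})$.

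Spoiler uses the free $(k+1)$-th pebble pair as a probe: in Step~1 Spoiler picks up that free pair (the winning check does not yet fire, since the remaining tuples still agree in initial color), Duplicator responds with some bijection $f$, and Spoiler places the pair on a witness $x$ and its image $f(x)$. In the next round Spoiler picks up pebble pair $p_j$ for the coordinate $j$ provided above; the $k$ pebbles on the board then realize the partial bijection $\{g_\ell\mapsto h_\ell\}_{\ell\neq j}\cup\{x\mapsto f(x)\}$, which is, up to relabeling of pebbles, exactly the configuration $(\bar g_{|j\leftarrow x},\bar h_{|j\leftarrow f(x)})$. Because the initial coloring of each version depends only on this induced partial map between pebbled elements (equalities and products for Version~\versone, marked isomorphism type of generated subgroups for Version~\verstwo, induced subgraph of $\Gamma_{G}$ for Version~\versthree), the induction hypothesis applies to this new configuration and produces a win in at most $i-1$ further moves. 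Together with the one move just made this is $i$ moves total.

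The main technical obstacle is the delayed-commitment issue: the multiset inequality only furnishes the coordinate $j$ after $f$ and $x$ are known, yet Spoiler must pick up a pebble pair in Step~1 before seeing $f$. The role of the extra $(k+1)$-th pebble is precisely to let Spoiler probe with $x$ first and commit to $j$ only in the following round, keeping the move count at $i$ rather than inflating it to $i+1$.
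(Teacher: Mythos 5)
Your proof is correct and follows essentially the same induction as the paper's: base case from the winning-condition check on the initial configuration, and for $i\geq 1$ the multiset disagreement forced by $\chi_i(\bar g)\neq\chi_i(\bar h)$ yields, for every Duplicator bijection, a witness $x$ and coordinate $j$ from which the induction hypothesis applies after one pebble placement. Your explicit handling of the delayed-commitment issue via the spare $(k+1)$-th pebble is exactly the mechanism implicit in the paper's terser ``Spoiler can make progress by changing the $j$-th pebble from $g_j$ to $x$,'' so this is a more careful writeup of the same argument rather than a different route.
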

\begin{proof}[Proof Sketch]
	\emph{(Version I.)} For $i=0$ there is nothing to show. Assume now that $i>0$. By assumption $\chi_i(\bar{g})$ and $\chi_i(\bar{h})$ are different which means that either we already have
	$\chi_{i-1}(\bar{g})\neq\chi_{i-1}(\bar{h})$ or there is no color-preserving matching between the
	tuples $(\chi_{i-1}(\bar{g}_{|1\leftarrow x}),\dots,\chi_{i-1}(\bar{g}_{|k\leftarrow x}))$ for $x\in G$ and tuples $(\chi_{i-1}(\bar{h}_{|1\leftarrow y}),\dots,\chi_{i-1}(\bar{h}_{|k\leftarrow y}))$ for $y\in H$. In other words, no matter which bijection $f\colon G\to H$ with $f(g_j)=h_j$ Duplicator chooses there will be some $x\in G$ and some position $1\leq j\leq k$
	such that $\chi_{i-1}(\bar{g}_{|j\leftarrow x})\neq \chi_{i-1}(\bar{h}_{|j\leftarrow f(x)})$ and Spoiler can make progress by changing the $j$-th pebble from $g_j$ to $x$. By induction, Spoiler now has a winning strategy with $i-1$ moves while having moved at most once.
	
	\emph{(Version II.)}
	If $i=0$ then $\bar{g}$ and $\bar{h}$ differ with respect to marked isomorphism, thus Spoiler can win without moving at all.
	For $i>0$ the argument is the same as before since the refinement steps are defined equally. 
	
	\emph{(Version III.)} This is exactly the classical result  for graphs, see~\cite{DBLP:journals/iandc/Hella96,DBLP:journals/combinatorica/CaiFI92}.
\end{proof}

\begin{lem}
	Suppose $\bar{g} := (g_1,\dots,g_k)\in G^k$ and $\bar{h} := (h_1,\dots,h_k)\in H^k$. If Spoiler can win the $(k+1)$-pebble game in $i$ moves on initial configuration $(\bar{g},\bar{h})$
	then $\bar{g}$ and $\bar{h}$ obtain different colors in the $i$-th iteration of $k$-dimensional WL-refinement. (Again we use the same version for WL-refinement and pebble game.)
\end{lem}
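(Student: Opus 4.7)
My plan is a straightforward induction on the number of moves $i$, mirroring the previous lemma but with the direction of implication reversed. For the base case $i=0$, Spoiler wins immediately on $(\bar{g},\bar{h})$, meaning the winning condition already holds. Unpacking the winning condition in each version shows this is exactly the statement $\chi_0(\bar{g})\neq \chi_0(\bar{h})$: in Version I, the equality and multiplication patterns among the tuple entries disagree; in Version II, the marked isomorphism types of $\langle g_1,\ldots,g_k\rangle$ and $\langle h_1,\ldots,h_k\rangle$ differ; in Version III, the classical graph argument applies to the induced labeled subgraphs on the pebbled vertices of $\Gamma_G$ and $\Gamma_H$.

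For the inductive step $i>0$, Spoiler's winning strategy begins by picking up some pair $(p_j;p_j')$. The winning condition at that instant does not yet hold (otherwise we would already be done with $i'<i$ moves), and then for \emph{every} bijection $f$ that Duplicator can choose, Spoiler has some $x\in G$ such that from the configuration $(\bar{g}_{|j\leftarrow x},\bar{h}_{|j\leftarrow f(x)})$ Spoiler wins in $i-1$ moves. By the inductive hypothesis this gives
\[
\chi_{i-1}(\bar{g}_{|j\leftarrow x})\neq \chi_{i-1}(\bar{h}_{|j\leftarrow f(x)}).
\]

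To conclude $\chi_i(\bar{g})\neq \chi_i(\bar{h})$ I argue contrapositively. Suppose the colors agreed. Then in particular the multisets
\[
\lmulset (\chi_{i-1}(\bar{g}_{|1\leftarrow x}),\ldots,\chi_{i-1}(\bar{g}_{|k\leftarrow x}))\mid x\in G\rmulset
\]
and the analogous multiset for $\bar{h}$ would coincide. A standard matching/Hall-type argument then yields a bijection $f\colon G\to H$ with $\chi_{i-1}(\bar{g}_{|m\leftarrow x})=\chi_{i-1}(\bar{h}_{|m\leftarrow f(x)})$ for all $x$ and all $m\in\{1,\ldots,k\}$. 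Because $\chi_{i-1}$ refines $\chi_0$ and $\chi_0$ detects coincidences with the remaining pebbled positions (via equality in Version I, via the marked isomorphism type in Version II, via vertex identity in Version III), such an $f$ automatically respects the partial map enforced by the pebbles not in Spoiler's hand, hence is a legal Duplicator response. But then no choice of $x$ lets Spoiler make progress, contradicting the assumed winning strategy.

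The main obstacle is the technical bookkeeping around the pebble/WL correspondence: the $(k{+}1)$-pebble game carries one more pebble than the dimension of the tuples refined by WL, and we must track which position is being swapped and ensure Duplicator's bijection extends the partial map from the remaining pebbles. This is exactly what the initial coloring $\chi_0$ was set up to enforce, so once one is careful about Version II's subtler winning condition (the \emph{Remark} flags precisely this difference), the matching argument pushes through for all three versions uniformly. For Version III the argument does not even need to be reproduced, as it is the classical equivalence for graphs from \cite{DBLP:journals/iandc/Hella96,DBLP:journals/combinatorica/CaiFI92}.
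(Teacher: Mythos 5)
Your proposal is correct and follows essentially the same route as the paper's own proof sketch: induction on the number of moves, with the base case identifying the winning condition with the initial coloring and the inductive step deducing the multiset inequality from the fact that Spoiler succeeds against \emph{every} bijection. Your version merely makes explicit the matching argument (and the observation that the matched bijection automatically respects the pebbled positions) that the paper compresses into its final sentence.
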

\begin{proof}[Proof sketch]
	\emph{(Version I.)} If $i=0$ then the initial configuration is already a winning one for Spoiler which is by definition the same as
	$k$-tuples getting different initial colors. By induction, for any bijection $f\colon G\to H$ Duplicator may choose, Spoiler can reach in one move a configuration $(\bar{g_1},\bar{h_1})$ where 
	$\bar{g_1}=\bar{g}_{|j\leftarrow x}$ and $\bar{h_1}=\bar{h}_{|j\leftarrow f(x)}$ for some position $j$ and such that $\chi_{i-1}(\bar{g_1})\neq \chi_{i-1}(\bar{h_1})$.
	Since this is true for any possible bijection, the tuples $\bar{g}$ and $\bar{h}$ already have to differ with respect to $\chi_i$.
	
	\emph{(Version II.)} The argument is the same as for Version I.

	\emph{(Version III.)} This is again a classical result~\cite{DBLP:journals/iandc/Hella96,DBLP:journals/combinatorica/CaiFI92}.
\end{proof}

\begin{proof}[Proof of Theorem~\ref{thm:game:and:algo:agree}]
Theorem~\ref{thm:game:and:algo:agree} follows immediately from the previous two lemmas.
\end{proof}

It remains to argue the equivalences between the logic and the pebble game for each of the versions. This again basically follows from known techniques.

We first argue that for Version I and~II the quantifier free formulas of the~$k$-variable fragment of each version characterize the initial colorings.

\begin{lem}\label{lem:quantifier-free:formula:is:inital:col}
	There is a quantifier free $k$-variable formula $\varphi(x_1,\dots,x_k)\in\mathcal{L}_J$ distinguishing $k$-tuples $\bar{g}$ and $\bar{h}$ if and only if these tuples differ in their initial coloring in version $J\in \{\text{\versone,\verstwo}\}$.
\end{lem}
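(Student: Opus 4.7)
The plan is to prove both directions of the equivalence separately, treating Version \versone{} and Version \verstwo{} in parallel.

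For the direction ``distinguishing formula implies different initial colorings'', I would argue by induction on the structure of quantifier-free formulas. Since such a formula is a Boolean combination of atomic formulas, it suffices to verify that the truth value of every atom at $\bar{g}$ depends only on the initial coloring of $\bar{g}$. For Version \versone{} the atoms are $x_i=x_j$ and $R(x_i,x_j,x_\ell)$; by the definition of $\chi_0^{\versone}$, whether $g_i=g_j$ and whether $g_i g_j = g_\ell$ is recorded in the initial color, so every atom has the same truth value on any two tuples of equal initial color. For Version \verstwo{} the atoms are $x_i=x_j$ and $R(x_{i_1},\dots,x_{i_t};w)$; the latter asserts that the word $w$ evaluates to $1$ on the pebbled entries, which is manifestly preserved by marked isomorphism of the generated subgroups (and equality is the special case $R(x_i,x_j;x_ix_j^{-1})$).

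For the converse direction, I would construct for each initial color a characterizing quantifier-free $k$-variable formula $\varphi_{\bar{g}}$ that holds exactly on tuples sharing that color. For Version \versone{} this is a straightforward finite conjunction: include $x_i=x_j$ or $\neg(x_i=x_j)$ for every $(i,j)$ according to whether $g_i=g_j$, and include $R(x_i,x_j,x_\ell)$ or $\neg R(x_i,x_j,x_\ell)$ for every $(i,j,\ell)$ according to whether $g_ig_j=g_\ell$. This uses only the variables $x_1,\dots,x_k$, and a tuple $\bar{h}$ satisfies $\varphi_{\bar{g}}$ iff it matches $\bar{g}$ in equality pattern and in all length-two products, which is exactly the Version \versone{} initial color.

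The main obstacle is the construction for Version \verstwo{}, because the marked isomorphism type of $\bar{g}$ is a priori a statement about all words in the free group on $k$ generators, and a formula must be finite. The key observation I would use is that $U:=\langle g_1,\dots,g_k\rangle$ is a finite group of size $m\le |G|$, so every element of $U$ is represented by some word in $x_1,\dots,x_k$ of length at most $m$, and the set of words evaluating to $1$ at $\bar{g}$ is determined by its intersection with words of length at most $2m$ (the ``multiplication table closure''). Concretely, I would let $W$ be the set of all words in $x_1^{\pm 1},\dots,x_k^{\pm 1}$ of length at most $2|G|$ and define $\varphi_{\bar g}$ as the finite conjunction of $R(x_1,\dots,x_k;w)$ for every $w\in W$ with $w(\bar g)=1$ together with $\neg R(x_1,\dots,x_k;w)$ for every $w\in W$ with $w(\bar g)\ne 1$. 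If $\bar h$ satisfies $\varphi_{\bar g}$, then the evaluation maps $F_k\to G,\, F_k\to H$ sending $x_i\mapsto g_i,\, x_i\mapsto h_i$ agree on $W$-triviality; since the kernels are normal subgroups with finite quotients of order at most $|G|$, agreement on words of length $2|G|$ forces the kernels to coincide, so $g_i\mapsto h_i$ extends to an isomorphism $\langle\bar g\rangle\to\langle\bar h\rangle$, giving equal marked isomorphism type. Combining both directions then yields the claim for each version.
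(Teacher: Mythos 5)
Your proposal is correct, and for Version \versone{} and for the ``formula implies different color'' direction of Version \verstwo{} it coincides with the paper's argument (atoms are exactly the data recorded in the initial coloring, respectively are preserved by marked isomorphism). Where you genuinely diverge is the converse direction for Version \verstwo{}. The paper argues per pair of tuples: if the marked isomorphism types differ, there exists a single separating word $w$ with $w(\bar g)=1\neq w(\bar h)$ or vice versa, and the atom $R(\cdot;w)$ distinguishes the tuples --- but the existence of such a word is only asserted, not proved. You instead build, for each tuple $\bar g$, one characterizing quantifier-free formula $\varphi_{\bar g}$ recording the $1$-evaluation pattern of all words of length at most $2|G|$, and show that agreement on this pattern forces the kernels of the two evaluation maps $F_k\to\langle\bar g\rangle$, $F_k\to\langle\bar h\rangle$ to coincide. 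This is a fleshed-out proof of exactly the claim the paper leaves implicit, and as a by-product it gives an explicit length bound on a separating word (a failed conjunct of $\varphi_{\bar g}$ is one). One small point to make explicit: the naive check of multiplicativity of the induced map compares $w_u w_v$ against a short representative $w_{uv}$, which produces words of length up to about $3|G|$; to stay within your bound $2|G|$ you should reduce to multiplication by single generators (coset representatives of length at most $|G|-1$ times one generator, compared against representatives of length at most $|G|-1$), which determines the Cayley graph and hence the full quotient. With that routine adjustment the argument is complete, and is arguably more informative than the paper's, at the cost of a larger (but still finite, $k$-variable, quantifier-free) formula.
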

\begin{proof}
	If $J=1$ then $\varphi$ distinguishes the tuples $\bar{g}$ and $\bar{h}$ if and only if there is an atomic statement of the form $x_i=x_s$ or $R(x_i,x_j,x_s)$, interpreted as $x_i\cdot x_j = x_s$, with respect to which $\bar{g}$ and $\bar{h}$ differ. This is precisely the definition of the Version I initial coloring. For Version $J=\verstwo$, if some word over $\bar{g}$ is (non)trivial but the corresponding word over $\bar{h}$ is not then clearly mapping $\bar{g}$ to $\bar{h}$ does not extend to an isomorphism. Assume now that $\bar{g}$ and $\bar{h}$ have different marked isomorphism types and w.l.o.g.~we have $ |\langle \bar{g}\rangle|\leq |\langle \bar{h}\rangle|$. By assumption every injective map between these groups extending $\bar{g}\mapsto\bar{h}$ is not multiplicative. This fact can be expressed in terms of a suitable word over $k$ symbols separating $\bar{g}$ from $\bar{h}$.
\end{proof}

\begin{lem}
	Suppose $\bar{g} := (g_1,\dots,g_k)\in G^k$ and $\bar{h} := (h_1,\dots,h_k)\in H^k$. For each Version~$J\in \{\text{I,II,III}\}$, the tuples $\bar{g}$ and $\bar{h}$ are distinguished by $k$-WL if and only if there is a formula~$\varphi(x_1,\ldots,x_k)$ in~$\mathcal{L}^{k+1}_J$
	such that~$\varphi(\bar{g})\nLeftrightarrow\varphi(\bar{h})$.
\end{lem}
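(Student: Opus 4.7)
The plan is to prove both directions by induction, following the classical correspondence between $k$-WL refinement rounds and the quantifier depth of $\mathcal{L}^{k+1}_J$-formulas. For Version~III this is precisely the classical theorem for graphs applied to $\Gamma_G$ and $\Gamma_H$, together with the observation that the colors on $G^k$ are obtained by restriction; so the substantive work is for Versions~I and~II. The atomic base case is already taken care of by Lemma~\ref{lem:quantifier-free:formula:is:inital:col}, which shows that the initial WL-coloring is exactly what quantifier-free $k$-variable formulas of $\mathcal{L}_J$ can see.

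For the direction "WL distinguishes $\Rightarrow$ logic distinguishes", I would construct, for each iteration $i$ and each color class $c$ of $\chi_i$, a formula $\varphi_c^i(x_1,\dots,x_k)\in\mathcal{L}^{k+1}_J$ that defines exactly the tuples of color $c$. The base case $i=0$ comes from Lemma~\ref{lem:quantifier-free:formula:is:inital:col}. For the inductive step, assume such formulas $\varphi_{c'}^{i-1}$ exist. Recall that $\chi_i(\bar{g})$ encodes $\chi_{i-1}(\bar{g})$ together with the multiset $\{\!\!\{(\chi_{i-1}(\bar{g}_{|1\leftarrow x}),\dots,\chi_{i-1}(\bar{g}_{|k\leftarrow x}))\mid x\in G\}\!\!\}$. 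I express this multiset by setting
\[
\varphi_c^i(\bar x) := \varphi_{c_0}^{i-1}(\bar x) \wedge \bigwedge_{(c'_1,\dots,c'_k)} \exists^{=n_{c'_1,\dots,c'_k}} x_{k+1}\; \bigwedge_{j=1}^{k} \varphi_{c'_j}^{i-1}\bigl(\bar x_{|j\leftarrow x_{k+1}}\bigr),
\]
where $c_0$ is the $\chi_{i-1}$-color underlying $c$ and $n_{c'_1,\dots,c'_k}$ is the multiplicity prescribed by $c$. The key point is that each $\varphi_{c'_j}^{i-1}$ uses only $k$ free variables, and substituting $x_{k+1}$ for $x_j$ reuses the $(k+1)$-th variable as a single "swap pad"; this is precisely what forces the $(k+1)$-variable bound.

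For the direction "logic distinguishes $\Rightarrow$ WL distinguishes", I would show by induction on formula structure that whenever $\chi_\infty(\bar g)=\chi_\infty(\bar h)$, every formula in $\mathcal{L}^{k+1}_J$ with free variables among $x_1,\dots,x_k$ satisfies $\varphi(\bar g)\Leftrightarrow\varphi(\bar h)$. Atomic formulas are handled by Lemma~\ref{lem:quantifier-free:formula:is:inital:col}, and Boolean connectives are immediate. For a counting quantifier $\exists^{\geq t} x_j\,\psi(\bar x)$, stability of the refinement ensures that there is a color-preserving bijection between the $\chi_\infty$-classes of $\{\bar g_{|j\leftarrow x}\mid x\in G\}$ and $\{\bar h_{|j\leftarrow y}\mid y\in H\}$. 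Applying the inductive hypothesis to $\psi$, the number of witnesses on either side coincides, which proves equivalence for counting quantifiers as well.

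The main obstacle I expect is the careful bookkeeping of free-variable reuse in the inductive construction of $\varphi_c^i$: one must verify that the nested substitutions $\varphi_{c'_j}^{i-1}(\bar{x}_{|j\leftarrow x_{k+1}})$ never exceed $k+1$ distinct variables even as $i$ grows large, which relies on the fact that each $\varphi_{c'_j}^{i-1}$ internally only uses $x_1,\dots,x_k$ and bound copies of $x_{k+1}$ that can be $\alpha$-renamed. A secondary subtlety concerns Version~II, whose atomic formulas use variable-length word relations rather than a fixed ternary relation; however, Lemma~\ref{lem:quantifier-free:formula:is:inital:col} already absorbs this, so the induction step is formally identical to Version~I. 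For Version~III, the restriction of the graph-WL coloring to $G^k \subseteq V(\Gamma_G)^k$ and the classical graph-logic correspondence give the result directly.
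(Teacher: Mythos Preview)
Your proposal is correct and follows essentially the same approach as the paper: both rely on Lemma~\ref{lem:quantifier-free:formula:is:inital:col} for the quantifier-free base case and then appeal to the classical Cai--F\"urer--Immerman correspondence between $k$-WL refinement and $\mathbf{C}^{k+1}$, the only difference being that the paper cites this correspondence as a black box (treating the group together with its initial coloring as a colored relational structure) while you unroll the standard induction explicitly. One small remark on your converse direction: as written it handles only quantifiers $\exists^{\ge t}x_j$ with $j\le k$, so that $\bar g_{|j\leftarrow x}$ is still a $k$-tuple; the case $j=k+1$, where $\psi$ may have all $k+1$ variables free, requires the usual strengthening of the inductive hypothesis (or a detour through the pebble game), but this is routine and does not affect the approach.
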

\begin{proof}[Proof sketch]
\emph{(Version I.)} For Version I, due to Lemma~\ref{lem:quantifier-free:formula:is:inital:col}, the expressive power of the initial coloring is precisely the expressive power of quantifier-free formulas. The distinguishing power of the WL-algorithm on groups is thus equal to distinguishing power of the classical algorithm executed on a structure that is already endowed with the initial coloring. 
The equivalence between the $(k+1)$-variable fragment of the logic and the~$k$-WL algorithm for Version I on groups thus follows from the respective equivalence for graphs shown in~\cite{DBLP:journals/combinatorica/CaiFI92}.

\emph{(Version II.)} For Version II the argument is the same as for Version I except that we add the following observation: Since~$G$ and~$H$ are finite groups there is only a finite number of nonequivalent quantifier free formulas over~$\mathcal{L}_2$. By Lemma~\ref{lem:quantifier-free:formula:is:inital:col} tuples can be distinguished exactly if they obtain different colors in the initial coloring of Version II.

\emph{(Version III.)} This is again a classical result~\cite{DBLP:journals/combinatorica/CaiFI92}.
\end{proof}

\begin{proof}[Proof of Theorem~\ref{thm:game:and:logic:agree}]
Theorem~\ref{thm:game:and:logic:agree} follows immediately from the previous two lemmas.
\end{proof}

\subsection{Relationship between the different WL-algorithm versions}\label{subsec:relationship:between:versions}

Next, we want to relate different versions to each other and we will do so by exploiting the equivalence to pebble games.

\begin{dfn}
	Consider the $k$-pebble game on graphs $\Gamma_G$ and $\Gamma_H$ and assume that a pair of pebbles is placed on vertices corresponding to multiplication gadgets
	$M(g_1,g_2)$ and $M(h_1,h_2)$ (but not on vertices corresponding to group elements). Then the pairs $(g_1,h_1)$ and $(g_2,h_2)$ will be called \textit{implicitly pebbled}. Note that implicit pebbles always induce a pairing of group elements.
\end{dfn}

Intuitively, pebbling a vertex in a multiplication gadget is as strong as pebbling two group elements simultaneously, hence the definition of implicit pebbles. It can be shown that Duplicator has to respect the gadget structure and in particular the multiplication structure of the implicitly pebbled elements.

\begin{thm}\label{thm:relationship:between:versions}
	For all $k\in\mathbb{N}$ we have $WL_k^\versone\preceq WL_k^\verstwo\preceq WL_{k/2+2}^\versthree\preceq WL_{k+5}^\versone$.
\end{thm}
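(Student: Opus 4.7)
The plan is to prove each of the three inequalities separately via the pebble-game characterization of Theorem~\ref{thm:game:and:algo:agree}, exploiting that $WL_k^J$ corresponds to the bijective $(k+1)$-pebble game in Version~$J$.

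The first inequality $WL_k^\versone \preceq WL_k^\verstwo$ is almost immediate. I would observe that the Version~\verstwo{} initial coloring refines that of Version~\versone: if the map $g_i \mapsto h_i$ extends to an isomorphism $\langle\bar g\rangle \to \langle\bar h\rangle$, then in particular $g_i g_j = g_\ell$ exactly when $h_i h_j = h_\ell$ for all indices within the tuple. Since the iterative refinement steps are defined identically in both versions, a routine induction on the number of rounds shows that the final Version~\verstwo{} partition refines the final Version~\versone{} partition, which yields the desired inequality.

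For $WL_k^\verstwo \preceq WL_{k/2+2}^\versthree$, I would translate a winning Spoiler strategy in the $(k+1)$-pebble Version~\verstwo{} game on $G,H$ into a winning Spoiler strategy in the $(k/2+3)$-pebble Version~\versthree{} game on $\Gamma_G,\Gamma_H$. The key observation is that a pebble on an internal vertex of a multiplication gadget $M(g,h)$ in $\Gamma_G$ implicitly pebbles the triple $(g,h,gh)$, since the gadget is uniquely determined by this triple. Pairing up Version~\verstwo{} pebbles so that each pair $(p_i,p_j)$ on group elements $(g,h)$ is encoded by a single Version~\versthree{} pebble on a gadget vertex, the required $\lceil(k+1)/2\rceil$ pebbles fit comfortably into the allotted $k/2+3$ Version~\versthree{} pebbles, leaving a small budget for scratch work. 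If the pebbled Version~\verstwo{} elements fail marked isomorphism, a witnessing word $w$ exists and Version~\versthree{} Spoiler can trace $w$ through the gadget graph, forcing Duplicator into an inconsistent bijection.

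For the final inequality $WL_{k/2+2}^\versthree \preceq WL_{k+5}^\versone$, I would reverse the simulation. Each Version~\versthree{} pebble either sits on a group-element vertex of $\Gamma_G$ (distinguishable from gadget vertices by degree, hence representing a single element) or on an internal vertex of a unique gadget $M(g,h)$ (representing the pair $(g,h)$). Encoding each gadget pebble by two Version~\versone{} pebbles on its associated group elements, the $k/2+3$ Version~\versthree{} pebbles are simulated by at most $2(k/2+3)=k+6$ Version~\versone{} pebbles. The Version~\versone{} initial coloring records precisely the atomic multiplication relations on the pebbled tuple, which is exactly what is needed to reconstruct the induced subgraph of $\Gamma_G$ on the pebbled vertices. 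The main obstacle I expect throughout is the bookkeeping in the two nontrivial simulations: Duplicator's bijection in one game must be lifted to, or restricted from, a bijection in the other game in a way consistent with the current pebble configuration, and the inductive winning invariant must be preserved across each lift-and-replace step. The specific constants $k/2+2$ and $k+5$ arise precisely from the minimum overhead needed to move one pebble while retaining enough auxiliary pebbles to maintain those invariants on the remaining configuration.
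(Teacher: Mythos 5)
Your overall architecture matches the paper's: the first inequality follows because the Version~\verstwo{} initial coloring refines the Version~\versone{} one with identical refinement steps, and the other two inequalities are proved by simulating one pebble game inside the other, with a single pebble on a multiplication gadget standing in for a pair of pebbles on group elements. For the middle inequality your sketch is essentially the paper's argument: Duplicator's graph bijection can be forced (by degree and gadget-structure arguments) to restrict to a group bijection respecting implicitly pebbled elements, and a failure of marked isomorphism is witnessed by a word that Spoiler can chase down using a couple of auxiliary pebbles. The constants work out the same way.

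There is, however, a genuine gap in your treatment of the last inequality, and it is exactly the point you dismiss as ``bookkeeping.'' In the Version~\versone{} game Duplicator hands Spoiler a bijection $\varphi\colon G\to H$ on \emph{single} group elements, but to continue the simulated Version~\versthree{} game Spoiler must produce a Duplicator move on $\Gamma_G$, i.e.\ a bijection on the multiplication gadgets, which amounts to a bijection on \emph{pairs} $(x,y)$ of group elements. A single bijection $G\to H$ does not determine such a map: it tells you nothing about how a gadget $M(x,y)$ with $y$ unpebbled should be matched. The paper resolves this with a precomputation argument: since Duplicator's strategy may be taken to be deterministic, for every hypothetical placement of a new pebble on $x$ Spoiler already knows the bijection $\varphi_x$ Duplicator would answer with in the resulting configuration, and the assembled map $(x,y)\mapsto(\varphi(x),\varphi_x(y))$ is the bijection on pairs used as the Version~\versthree{} Duplicator move; one then plays two consecutive Version~\versone{} rounds (plus one to discard an auxiliary pebble) for each Version~\versthree{} round. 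Without this device, or some substitute for it, your simulation in the third inequality does not go through, because Spoiler has no legal Version~\versthree{} Duplicator move to respond to. The rest of your plan for that step (degree distinguishes element vertices from gadget vertices, two Version~\versone{} pebbles per gadget pebble, atomic multiplication relations recover the induced subgraph) is consistent with the paper once this lifting mechanism is supplied.
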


The rest of this section spans the proof of this theorem.

\begin{lem}\label{graph pebble game lemma}
	Consider the $k$-pebble game on graphs $\Gamma_G$ and $\Gamma_H$. If~$k\geq 4$ and one of the following happens
	\begin{enumerate}
		\item Duplicator chooses a bijection $f\colon \Gamma_G\to\Gamma_H$ with~$f(G)\neq H$,
		\item after choosing a bijection, there is a pebble pair~$(p,p')$, for which pebble~$p$ is on some vertex of $M(g_1,g_2)$ (not on $g_1$, $g_2$ or $g_1g_2$) and~$p'$ is on some vertex of $M(h_1,h_2)$ but $(f(g_1),f(g_2),f(g_1g_2))\neq (h_1,h_2,h_1h_2)$, or
		\item the map induced on group elements pebbled or implicitly pebbled by~$k-2$ pebbles does not extend to a group isomorphism between the corresponding generated subgroups
	\end{enumerate}
	then Spoiler can win the game. 
\end{lem}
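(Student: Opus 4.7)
The plan is to treat the three conditions separately, each time exploiting the rigid structure of $\Gamma_G$: group-element vertices have degree $3|G|$, the internal gadget vertices $a_{gh}, b_{gh}, c_{gh}, d_{gh}$ have degrees $2, 3, 2, 2$, and the four internal types are pairwise locally distinguishable (only $a, b, d$ have a group-element neighbor, and only $b$ has degree $3$). With $k \geq 4$ pebbles, Spoiler can afford one or two ``probing'' pebbles to verify degrees and adjacency to group-element vertices, effectively identifying the type and gadget-role of any pebbled internal vertex.

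For condition 1, the degree gap does the work: any $g \in G$ with $f(g) \notin H$ satisfies $\deg(g) = 3|G| \geq 6 > 3 \geq \deg(f(g))$ (the trivial group case is vacuous). Spoiler pebbles $g$; after Duplicator's next bijection, at most three neighbors of $g$ can land in the neighborhood of $f(g)$, so pebbling a non-matched neighbor of $g$ produces an edge-mismatch and Spoiler wins in two moves.

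For condition 2, I first dispose of the case where $p$ and $p'$ lie on non-corresponding internal types by a local-inspection argument using one or two free pebbles. Assume then that they occupy corresponding types. The triple $(g_1, g_2, g_1g_2)$ is accessible from $p$ via adjacency inside the gadget, since $g_1, g_2, g_1g_2$ are the unique group-element neighbors of $a_{g_1g_2}, b_{g_1g_2}, d_{g_1g_2}$ respectively. If the discrepancy in $(f(g_1), f(g_2), f(g_1g_2)) \neq (h_1, h_2, h_1h_2)$ sits at the coordinate matching $p$'s current type, Spoiler pebbles that $g_i$ and wins by adjacency-mismatch (the unique group-element neighbor of $p'$ is not $f(g_i)$). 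Otherwise Spoiler uses a free pebble to walk along the path $a \to b \to c \to d$ from $p$'s position towards the type associated with the offending coordinate; at each step the new pebble's image is forced by adjacency consistency either into the corresponding vertex of $M(h_1, h_2)$ (iterate) or onto a differently-typed vertex where Spoiler has already won by the non-corresponding case. The walk has length at most three and fits into the $k \geq 4$ budget.

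For condition 3, the failure to extend to a group isomorphism comes either from ill-definedness (two pebble pairs witness the same $g \in G$ but distinct elements of $H$, or vice versa), which Spoiler exposes directly through those pebbles, or from a multiplicative violation: $g_ig_j = g_\ell$ for pebbled or implicitly pebbled $g_i, g_j, g_\ell$ but $h_ih_j \neq h_\ell$. In the latter case Spoiler spends one of the two free pebbles on $a_{g_ig_j}$; adjacency to $g_i$ (first explicitly pebbling it with the second free pebble if it was only implicitly pebbled, which is cheap via the gadget structure), combined with condition-2-style probing, forces Duplicator's image of the new pebble to be an $a$-type vertex inside $M(h_i, h_j)$. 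This yields exactly the setup of condition 2 with triple-mismatch $(h_i, h_j, h_ih_j) \neq (h_i, h_j, h_\ell)$, and the condition-2 strategy concludes. The main obstacle I foresee is the gadget-walk in condition 2: carefully tracking Duplicator's bijections across several rounds so that each new pebble is forced into the matching vertex of $M(h_1, h_2)$, and reusing this forcing inside condition 3 to guarantee that the single probing pebble lands in the correct target gadget while juggling implicit versus explicit pebbling uniformly.
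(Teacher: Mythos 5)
Your treatment of conditions 1 and 2 matches the paper's: the degree gap $3|G|$ versus $2$ or $3$ handles condition 1, and the type-inspection plus gadget-walk with one or two auxiliary pebbles handles condition 2. These parts are fine.

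Condition 3 has a genuine gap. You reduce the failure of the pebbled map to extend to a group isomorphism to either ill-definedness or a \emph{length-two} multiplicative violation, i.e.\ $g_ig_j=g_\ell$ with $h_ih_j\neq h_\ell$ where all of $g_i,g_j,g_\ell$ are (implicitly) pebbled. But the statement concerns an isomorphism between the \emph{generated subgroups} $\langle \bar g\rangle$ and $\langle \bar h\rangle$, and the obstruction need not be visible among products of two pebbled elements whose result is again pebbled: for instance, $g_1$ and $h_1$ could have different orders, or a relation $g_{i_1}\cdots g_{i_t}=1$ could hold on one side and fail on the other, with none of the intermediate products pebbled. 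In such cases your ``multiplicative violation'' case never triggers, yet the map does not extend, so your case analysis does not cover the hypothesis of the lemma. The paper closes exactly this hole with a word-minimization argument: take a \emph{shortest} word $w=g_{i_1}\cdots g_{i_t}$ over the pebbled elements with $f(w)\neq f(g_{i_1})\cdots f(g_{i_t})$, spend one extra pebble pair to fix the image of the group element $w$ (a vertex of the graph), then after Duplicator's next bijection either a shorter bad word appears (which can happen only finitely often, terminating at $|w|=2$ where your condition-2 machinery applies) or $w$ is still minimal, in which case a second extra pebble pair placed on $g_{i_2}\cdots g_{i_t}$ forces the contradiction $f'(g_{i_1})f'(g_{i_2}\cdots g_{i_t})\neq f'(w)$ with a length-two product pinned down. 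You would need to add this descent on word length (and account for the two additional pebble pairs it consumes, which is where $k\geq 4$, i.e.\ two pebbles beyond the $k-2$ of the hypothesis, is actually used) to make condition 3 go through.
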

\begin{proof}
	\begin{enumerate}
		\item In $\Gamma_G$ vertices corresponding to group elements have 	degree $3|G|$ while other vertices have degree $2$ or $3$.
		\item From now on always assume $f(G)=H$. Write $g_3:=g_1g_2$ and $h_3:=h_1h_2$. Let $f(g_i)\neq h_i$ for some $1\leq i\leq 3$ and put a pebble~$q$ other than~$p$ on $g_i$ and the corresponding one~$q'$ on $f(g_i)$. 
		If pebble~$p$ is not on the same type of vertex~(i.e., Type $a$,~$b$,~$c$ or~$d$, see Figure~\ref{fig:mult:gadet}) as pebble~$p'$ then Spoiler wins, since either the vertices have different degrees or their neighbors have different degrees.
		
		Now the pebbled vertex in
		$M(g_1,g_2)$ is connected to $g_i$ directly or via a path of non-group element vertices and for $f(g_i)$ and $M(h_1,h_2)$ either this is not the case or the path uses different types of vertices.
		Using a third pebble pair 
		Spoiler can explore this path and win on a configuration of three pebbles.
		\item By assumption there are at most $m:=2(k-2)$ implicitely pebbled pairs of group elements corresponding to at most $k-2$ pebbles that are currently on the board. We may assume that there are exactly $m$ such pairs, $(g_1,h_1),\dots,(g_m,h_m)$ say. By the second part of this lemma Duplicator has to choose some bijection $f$ such that $\bar{g}:=(g_1,\dots,g_m)\mapsto \bar{h}:=(h_1,\dots,h_m)$ respecting the pairing induced by indirectly pebbled group elements. By assumption this correspondence does not extend to an isomorphism between $\langle\bar{g}\rangle$ and $\langle\bar{h}\rangle$ so there must be a smallest word $w:=g_{i_1}\dots g_{i_t}$ over $\bar{g}$ such that
		\[
			f(w)\neq f(g_{i_1})\dots f(g_{i_t}).
		\] Spoiler can use an additional pair of pebbles to fix this image of $w$. Now Duplicator chooses a new bijection $f'$ on the remaining group elements. There is now either a smaller word over $\bar{g}$ with this property, in which case Spoiler moves the last pebble pair we just introduced to this new word and its image, or $w$ is still minimal with this property. The first case can only occur finitely many times and if $|w|=2$ Spoiler wins by part $2$. Thus assume that $w$ is still minimal. But then 
		\[
			f'(g_{i_1})f'(g_{i_2}\dots g_{i_t})=f'(g_{i_1})f'(g_{i_2})\dots f'(g_{i_t})\neq f'(w)
		\]and using a second additional pair of pebbles (now a total of at most $k$ pairs of actual pebbles) Spoiler can also fix the image of $g_{i_2}\dots g_{i_t}$ to be $f'(g_{i_2}\dots g_{i_t})$ and clearly wins from this configuration in at most two further rounds.
	\end{enumerate}
\end{proof}

\begin{lem}
	If $G\not\equiv_{WL_k^\verstwo} H$ then $G\not\equiv_{WL_{\lceil k/2\rceil +2}^\versthree} H$.
\end{lem}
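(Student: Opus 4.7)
The plan is to invoke the pebble game characterization from Theorem~\ref{thm:game:and:algo:agree} and reduce the statement to the following game-theoretic claim: if Spoiler has a winning strategy in the bijective $(k+1)$-pebble game of Version~II on $(G,H)$, then Spoiler has a winning strategy in the bijective $(\lceil k/2\rceil +3)$-pebble game of Version~III on $(\Gamma_G,\Gamma_H)$. The key source of economy is that a single V3~pebble placed on a multiplication gadget vertex implicitly pebbles two group elements at once, exactly as formalized by Lemma~\ref{graph pebble game lemma}.

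First I would describe the encoding that V3~Spoiler maintains to simulate a V2~winning strategy. Fix a canonical pairing of the V2~pebble pairs $(p_1,p_2),(p_3,p_4),\ldots$ For each pair that currently pebbles group elements $g_a,g_b\in G$, V3~Spoiler keeps one of his pebbles on a vertex of the multiplication gadget $M(g_a,g_b)\subseteq\Gamma_G$, with the paired V3~pebble on a corresponding vertex of a multiplication gadget in~$\Gamma_H$. For a pair with only one V2~pebble on the board, V3~Spoiler uses a single V3~pebble placed directly on the corresponding group element vertex. In total, this encoding uses at most $\lceil (k+1)/2\rceil\le \lceil k/2\rceil+1$ V3~pebbles to represent the entire V2~configuration, leaving at least two spare pebbles from the budget of $\lceil k/2\rceil +3$.

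Next I would simulate each V2~round. Suppose V2~Spoiler picks up pebble $p_i$ and, in response to Duplicator's V2~bijection $f_{V2}\colon G\to H$, places $p_i$ on some $x\in G$. In V3, whatever bijection Duplicator proposes is forced to map $G$ to $H$ by Lemma~\ref{graph pebble game lemma}(1), and V3~Spoiler interprets the restriction $f\mathord{\restriction}_G$ as~$f_{V2}$. To update the encoding in the typical case, where $p_i$ is paired with $p_j$ via a gadget pebble on $M(g_i,g_j)$, V3~Spoiler uses one spare pebble to pin $g_j$ directly; by Lemma~\ref{graph pebble game lemma}(2) (with the second spare pebble available to exploit a mismatch along the gadget path), Duplicator is forced to answer at $h_j$. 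Then V3~Spoiler picks up the gadget pebble and places it on a vertex of $M(x,g_j)$. Another application of Lemma~\ref{graph pebble game lemma}(2) forces Duplicator's placement into a vertex of $M(f_{V2}(x),h_j)$, so the updated encoding is consistent with the new V2~configuration. The boundary cases (a V2~pebble being placed for the first time, or moved off a direct pebble, or toggling between ``paired'' and ``unpaired'' status under the canonical pairing) are handled by entirely analogous local surgery using the same two spare pebbles. Finally, once V2~Spoiler reaches a configuration whose pebbled tuples have distinct marked isomorphism types, the corresponding V3~encoding satisfies the hypothesis of Lemma~\ref{graph pebble game lemma}(3) via the implicitly pebbled group elements, and so V3~Spoiler wins.

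The main obstacle will be the bookkeeping: the canonical V2~pairing forces V3~Spoiler to update pebbles in a specific, sometimes inconvenient order, and one has to verify that the two spare pebbles always suffice to execute the ``pin, swap, release'' dance without exceeding the budget and without ever leaving a gadget pebble momentarily unpinned in a way that Duplicator could exploit. A clean way to cope is to amortize the simulation so that each V2~move corresponds to a bounded number of V3~rounds and to verify that the encoding invariant is restored at the end of each such block. The remainder—matching initial colorings and checking that no additional winning condition is inadvertently triggered for Spoiler in V3—is routine given Lemma~\ref{graph pebble game lemma}.
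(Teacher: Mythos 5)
Your proposal is correct and follows essentially the same route as the paper's proof: simulating the Version II game inside the Version III game by encoding each pair of group pebbles as one gadget pebble (the paper's ``admissible configurations''), using Lemma~\ref{graph pebble game lemma} parts (1)--(3) for exactly the same three purposes, and spending the same two spare pebbles on the pin-and-swap surgery that restores the encoding after each Version II move.
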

\begin{proof}
Assume that Spoiler wins the $(k+1)$-pebble game in Version II. The idea is to simultaneously play a Version II game on groups and a Version III game on graphs. For this purpose we have to be able to compare pebble-configurations from the different games. Let $(\bar{g},\bar{h})$ be a configuration of pebbled $r$-tuples on $G$ and $H$ and call a configuration on graphs \textit{admissible} if it looks as follows: There is one pair of pebbles on the multiplication gadgets $M(g_1,g_2)$ and $M(h_1,h_2)$, another pair on  $M(g_3,g_4)$ and $M(h_3,h_4)$ and so forth. If $r$ is odd then there is another pair of pebbles on vertices $g_r$ and $h_r$ and there are no other pebbles on the graphs. Note that the number of pebbles on each graph is $\lceil r/2\rceil$ and that implicit pebbles (together with the pebbles on $g_r,h_r$) correspond exactly to the pebbles on groups in Version II. Using Lemma~\ref{graph pebble game lemma} we can assume throughout the game that Duplicator chooses bijections on graphs that restrict to bijections on groups and that those restrictions respect implicit pebbles or otherwise Spoiler would win Version III right away. That means that given an admissible configuration, the bijection Duplicator chooses in Version III can be used as a Version II bijection as well. Spoiler will then move in Version II and we argue that Spoiler can win in Version III or force Duplicator into another admissible configuration. Since Spoiler can choose arbitrary moves in Version II, by assumption Spoiler will win in Version II at some point. Using Lemma~\ref{graph pebble game lemma} again, we see that Duplicator will eventually lose the Version III game on this configuration. It remains to argue that Spoiler can maintain admissible configurations. Assume Duplicator chose a bijection $f\colon \Gamma_G\to\Gamma_H$ as above. There are two cases: Spoiler moves a pebble or introduces a new one. Suppose Spoiler introduces a new pebble pair on $g_{r+1}$ and $h_{r+1}$ in the groups. If~$r+1$ is even the new pebble on $g_{r+1}$ is grouped with the already existing pebble on $g_r$. In the graph, Spoiler will put a pebble on $M(g_r,g_{r+1})$. The corresponding pebble should be put on $M(h_r,h_{r+1})$ to obtain an admissible configuration. But since $g_{r+1}$ was not necessarily (implicitly) pebbled it may be the case that $f$ does not map $M(g_r,g_{r+1})$ to $M(h_r,h_{r+1})$. To fix this, Spoiler first pebbles $g_{r+1}$ and $h_{r+1}$ directly, using one additional pebble, and asks for another bijection. By Lemma~\ref{graph pebble game lemma} this bijection must now map $M(g_r,g_{r+1})$ to $M(h_r,h_{r+1})$ and Spoiler reaches an admissible configuration in Version III in two more moves, removing the additional pebble again. The case where a pebble is moved rather than newly introduced can be treated in the same way.
\end{proof}

\begin{lem}
	If $G\not\equiv_{WL_k^\versthree} H$ then $G\not\equiv_{WL_{2k+1}^\versone} H$.
\end{lem}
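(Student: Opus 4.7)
By Theorem~\ref{thm:game:and:algo:agree} it suffices to convert a winning strategy for Spoiler in the $(k+1)$-pebble game of Version~\versthree{} on the graphs $\Gamma_G,\Gamma_H$ into a winning strategy in the $(2k+2)$-pebble game of Version~\versone{} on the groups $G,H$. The plan is for Spoiler in the group game to internally maintain a synchronised graph-game configuration, with every graph-pebble pair encoded by two group-pebble pairs. When the simulated graph pebble sits on a vertex $v \in G$, one group pebble is placed on $v$ and the other is parked on a fixed reference element (say the identity). When it sits on an internal vertex of a multiplication gadget $M(g,h)$, the two group pebbles are placed on $g$ and $h$ respectively, while the choice among the four internal vertex types $a,b,c,d$ of the gadget is recorded only in Spoiler's head. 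With $k+1$ graph pebble pairs this uses at most $2(k+1)=2k+2$ group-pebble pairs, matching the available budget.

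Whenever Duplicator chooses a bijection $f\colon G\to H$ in the group game, Spoiler will lift it to $\hat{f}\colon V(\Gamma_G)\to V(\Gamma_H)$ by sending each gadget $M(g,h)$ to $M(f(g),f(h))$ in a type-preserving way, and consult the winning graph-game strategy against $\hat{f}$ to decide which graph pebble to relocate and to which vertex. Simulating one graph-game move amounts to repositioning at most two group pebbles, which Spoiler performs over at most two consecutive group-game rounds. The standard observation that Duplicator must respect the partial map induced by the pebbles ensures that any bijection Duplicator chooses in the second of those rounds agrees with the first on every already-pebbled element, so the intermediate inconsistency in the graph-game interpretation heals by the time the second pebble has been placed.

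Spoiler's graph strategy eventually forces a configuration where the pebble-induced subgraphs of $\Gamma_G$ and $\Gamma_H$ are not isomorphic via the pebble correspondence. Every edge of $\Gamma_G$ lives inside a multiplication gadget and ultimately encodes the relation $g\cdot h = gh$ for the associated triple of group elements. Combined with the type-preserving lift $\hat{f}$, any edge or non-edge discrepancy that Spoiler exploits must be witnessed by a disagreement between $G$ and $H$ in either an equality $g_i=g_j$ or a product $g_i\cdot g_j = g_m$ among the group elements currently carrying pebbles. Such disagreements are precisely what the Version~\versone{} initial coloring records on the $(2k+2)$-tuple of group-pebble positions, so Spoiler wins the group game.

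The main difficulty to handle carefully will be the temporal mismatch in the two-round simulation: Spoiler must commit to the first group-pebble placement (the element $g$, say) before observing Duplicator's bijection for the round in which the second pebble (on $h$) is placed. The intended remedy is to organise the simulation so that the target gadget $M(g,h)$ is already fixed from the first round using the bijection available at that point, exploiting the consistency constraint on Duplicator to guarantee that the choice remains valid for the second round. Once this bookkeeping is set up, verifying that gadget edge/non-edge discrepancies translate into Version~\versone{} initial-coloring distinctions is a routine case analysis over the four internal vertex types.
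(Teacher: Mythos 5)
Your overall architecture matches the paper's: simulate the $(k+1)$-pebble graph game inside the $(2k+2)$-pebble group game, encoding each graph pebble on a gadget $M(g,h)$ by two group pebbles on $g$ and $h$, and translating the graph winning condition back into Version~\versone{} atoms ($g_i=g_j$ and $g_ig_j=g_m$). However, the step you flag as "the main difficulty" is a genuine gap, and your proposed remedy does not close it. You lift Duplicator's single bijection $f\colon G\to H$ to $\hat f$ by sending $M(g,h)$ to $M(f(g),f(h))$, and then need the two group pebbles to land on $(f(g),f(h))$. In round one Spoiler pebbles $g$ and gets $f(g)$; but in round two Duplicator chooses a \emph{fresh} bijection $f'$, and the consistency constraint you invoke only forces $f'$ to agree with $f$ on elements already carrying pebbles. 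Since $h$ is not pebbled in round one, $f'(h)$ may differ from $f(h)$, so the realized configuration encodes $M(f(g),f'(h))$ while your simulated graph-Spoiler already committed its move against $\hat f$. A winning strategy is only guaranteed against the bijection actually played, so the simulation desynchronizes and the argument fails.

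The paper's fix is precisely the ingredient you are missing: the lifted bijection on gadgets is \emph{not} $ (x,y)\mapsto(\varphi(x),\varphi(y))$ but $(x,y)\mapsto(\varphi(x),\varphi_x(y))$, where $\varphi_x$ is the bijection Duplicator would choose in the configuration reached after $x$ has been pebbled. Because one may assume Duplicator plays deterministically (Duplicator is free to commit to these responses, and forcing the commitment cannot change the outcome of the game), Spoiler can precompute the whole family $\{\varphi_x\}_{x\in G}$ before moving, feed the resulting pairing of gadgets to the graph strategy as Duplicator's Version~\versthree{} move, and then realize the prescribed graph move over the next group rounds knowing exactly where the second pebble will land. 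If you replace your single-bijection lift by this two-stage lift (and allow yourself the one temporary extra pebble the paper uses to restore admissible configurations), the rest of your argument — the pebble budget $2k+2$, the type-preserving bookkeeping of internal gadget vertices, and the translation of edge discrepancies into multiplicativity failures — goes through essentially as you describe.
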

\begin{proof}
We now want to reverse the argument from the last lemma. Given a pebble-configuration on graphs $\Gamma_G$ and $\Gamma_H$ we call a pebble-configuration on groups \emph{admissible} if the following holds: for each pair of pebbles on element-vertices the corresponding elements in $G$ and $H$ are pebbled as well and for each pair of pebbles on non-element vertices, that is, pebbles on gadgets $M(g_1,g_2)$ and $M(h_1,h_2)$, there are pairs of pebbles on $g_1$ and $h_1$, $g_2$ and $h_2$, respectively. Note that, w.l.o.g., (non-)element-vertices are only pebbled among each other by Lemma \ref{graph pebble game lemma}. Also the number of pebbles on groups in an admissible configuration is at most twice the number of pebbles on graphs. Since Spoiler can move two implicit pebbles at once in Version III we will look at two consecutive rounds of the Version I game at once. Let Duplicator choose a bijection $\varphi\colon G\to H$ in the Version~I game. For each possible Spoiler move introducing an additional pebble on $x$, Duplicator has to commit to one bijection $\varphi_{x}$ on the new configuration. We can force Duplicator to choose this bijection in the corresponding configuration from now on without changing the deterministic outcome of the game, because Duplicator is allowed to choose it freely once. This gives rise to a bijection between pairs of group elements mapping $(x,y)$ to $(\varphi(x),\varphi_x(y))$. Note that this happens without actually making moves, rather think of Duplicators strategy as being precomputable by Spoiler due to the deterministic nature of the game. The map on pairs can now be interpreted as a mapping between element vertices together with a mapping of corresponding multiplication gadgets and will be used as the next Duplicator move in the Version III game. If the Spoiler move in Version III moves two implicit pebbles at once, Spoiler can reach an admissible configuration in three rounds (one additional round for discarding the additional pebble) in the Version I game while Duplicator chooses bijections according to the precomputed strategy.

Finally, before Spoiler wins in Version III, Spoiler will win in Version I. More precisely, as long as the map on pebbles in the Version I game is multiplicative, the corresponding map induced on the pebbled subgraph will be a graph isomorphism, since multiplicativity on pebbles can be expressed equivalently in terms of mapping gadgets accordingly. 
\end{proof}

\begin{proof}[Proof of Theorem~\ref{thm:relationship:between:versions}]
	The first inclusion is clear. The other inclusions are the content of the previous lemmas.
\end{proof}

We remark that the additive constants in the Theorem~\ref{thm:relationship:between:versions} could be improved for $k>2$ by reusing pebbles, but we do not worry about explicit constants at this point. It is also possible to show $WL_k^\verstwo\preceq WL_{k+1}^\versone$ directly.

\section{Embedding graphs into finite groups}\label{sec:embed:graphs:to:groups}

Next, we describe a construction of finite groups from graphs such that structural properties of the resulting groups are primarily determined by the graphs. We will make this statement more precise in the following.
 From now on fix an odd prime~$p$.

\begin{dfn}
	For each natural number $n$ there is a relatively free group of 				exponent $p\neq 2$ and (nilpotency) class $2$ generated by $n$ elements.
	It admits a finite presentation
	\[
		F_{n,p}=\langle x_1,\dots,x_n\mid R(p,n)\rangle
	\]where $R(p,n)$ consists of the following relations: 
	\begin{enumerate}
		\item For all $1\leq i\leq n$ there is a relation $x_i^p=1$, and
		\item for all $1\leq i,j,k\leq n$ there is a relation
		$[[x_i,\!x_j],\!x_k]=1$.
	\end{enumerate}
\end{dfn} 
Thus, the group is generated by~$x_1,\ldots,x_n$, each of these generators is an element of order~$p$, and the commutator of two generators commutes with every generator and thus every element of the group. 
It follows from these properties that elements of $F_{n,p}$ can be uniquely written  as
\[
	x_1^{d_1}\cdot\ldots\cdot x_n^{d_n}
	[x_1,x_2]^{d_{1,2}}[x_1,x_3]^{d_{1,3}}\cdot\ldots\cdot
	[x_{n-1},x_n]^{d_{n-1,n}}
\]
where exponents are defined modulo $p$. In particular, $|F_{n,p}|=p^{n+n(n-1)/2}$.

The main goal is to construct quotients of $F_{n,p}$ using graphs on vertex set~$\{1,\ldots,n\}$ as templates in a way that translates combinatorial similarity of the graphs (with respect to Weisfeiler-Leman-refinement) to similar subgroup profiles. We will see that this affects other isomorphism invariants as well.

\begin{dfn}
To each (simple, undirected) graph $\Gamma=\left(\{v_1,\dots,v_n\},E\right)$ and prime number $p$ we assign a finite exponent $p$ group of nilpotency class $2$ via
\[
	G_\Gamma := \left\langle x_1,\dots,x_n\mid R(p,n), 
	[x_i,x_j]=1: \{v_i,v_j\}\in E \right\rangle.
\] 
Thus, in~$G_\Gamma$ two generators~$x_i,x_j$ commute, if the corresponding vertices form an edge in~$\Gamma$.
We usually identify $x_i$ with $v_i$ and use the latter to refer to the vertex as well as the respective element of $G_\Gamma$. 
We fix an order on generators $v_1,\dots,v_n$ and call these the \textit{standard generators} for $G_\Gamma$. The particular presentation 
above is called the \textit{presentation of $G_{\Gamma}$ from $\Gamma$}.
\end{dfn}

It turns out that this construction has also been used in other contexts. It is sometimes called \textit{Mekler's construction} in the literature (see \cite{mekler_1981} for Mekler's original work) and has been primarily investigated for infinite graphs with respect to model theoretic properties. We first collect some possibly well known combinatorial and group theoretic properties.

\begin{lem}\label{lem:vertices:are:gen:set}
	We have $\Phi(G_\Gamma)=G'_\Gamma$ and the vertices of $\Gamma$ form a generating set of $G_\Gamma$ of minimal cardinality.
\end{lem}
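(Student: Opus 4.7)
The plan is to invoke the Burnside basis theorem. For any finite $p$-group $G$, the Frattini subgroup satisfies $\Phi(G)=G'G^p$. Since $G_\Gamma$ has exponent $p$ by construction (so $G_\Gamma^p=1$), this immediately gives $\Phi(G_\Gamma)=G'_\Gamma$, settling the first claim.

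For the minimality assertion, I would use the fact (Burnside basis theorem) that, since the elements of $\Phi(G_\Gamma)$ are non-generators, a subset of $G_\Gamma$ generates $G_\Gamma$ if and only if its image generates $G_\Gamma/\Phi(G_\Gamma)$. In particular, the minimum size of a generating set of $G_\Gamma$ equals $\dim_{\mathbb{F}_p}(G_\Gamma/\Phi(G_\Gamma))$, viewing the latter as an $\mathbb{F}_p$-vector space (elementary abelian, since $\Phi(G_\Gamma)=G'_\Gamma$ and $G_\Gamma$ has exponent $p$). Thus it suffices to show that this dimension is exactly $n$.

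The key step is then to verify that the images $\bar v_1,\dots,\bar v_n$ of the standard generators in $G_\Gamma/G'_\Gamma$ are $\mathbb{F}_p$-linearly independent. Here I would use the normal form for $F_{n,p}$ recorded right after its definition: every element of $F_{n,p}$ can be uniquely written as $x_1^{d_1}\cdots x_n^{d_n}\cdot c$ with $c\in F'_{n,p}$ and exponents $d_i\in\mathbb{Z}/p$, so $F_{n,p}/F'_{n,p}\cong\mathbb{F}_p^n$ with basis $\bar x_1,\dots,\bar x_n$. Now $G_\Gamma$ is obtained from $F_{n,p}$ by imposing the additional relations $[x_i,x_j]=1$ for edges $\{v_i,v_j\}\in E$. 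Each such relator lies in $F'_{n,p}$, hence the normal closure $N$ of these relators satisfies $N\subseteq F'_{n,p}$. Consequently the quotient map induces an isomorphism
\[
G_\Gamma/G'_\Gamma \;=\; (F_{n,p}/N)\big/(F'_{n,p}/N)\;\cong\; F_{n,p}/F'_{n,p}\;\cong\;\mathbb{F}_p^n,
\]
sending $\bar v_i$ to $\bar x_i$. Hence $\dim_{\mathbb{F}_p}(G_\Gamma/G'_\Gamma)=n$, so $\{v_1,\dots,v_n\}$ is a generating set of minimum size.

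The only genuine obstacle is making the last isomorphism precise, i.e.\ ensuring that no nontrivial $\mathbb{F}_p$-linear combination of the $\bar x_i$ lies in the image of $N$. This is handled by the inclusion $N\subseteq F'_{n,p}$, which is immediate from the fact that every added relator is a commutator. Everything else is the standard Frattini/Burnside package together with the explicit normal form already stated for $F_{n,p}$.
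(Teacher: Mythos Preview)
Your argument is correct and essentially identical to the paper's: both use that $\Phi(G)=G'G^p=G'$ for a $p$-group of exponent $p$, then invoke the Burnside basis theorem to reduce minimality to computing $\dim_{\mathbb{F}_p}(G_\Gamma/G'_\Gamma)=n$. The paper phrases the last step as $G_\Gamma/G'_\Gamma\cong\langle V(\Gamma)\mid\text{exponent }p,\text{ abelian}\rangle\cong\mathbb{F}_p^{|V(\Gamma)|}$, whereas you spell out the inclusion $N\subseteq F'_{n,p}$ to get the same isomorphism---this is the same computation with slightly more detail.
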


\begin{proof}
By construction $G_\Gamma$ has exponent $p$ and thus $\Phi(G_\Gamma)=G'_\Gamma$ (since for~$p$-groups the Frattini-subgroup is the minimal subgroup with elementary abelian quotient).
	The cardinality of a minimal generating set of $G_\Gamma$ is the dimension of
	the $\mathbb{F}_p$-space $G_\Gamma/\Phi(G_\Gamma)$ which is now equal to $G_\Gamma/G'_\Gamma$. We have
	\[
		G_\Gamma/G'_\Gamma\cong \left\langle V(\Gamma)\mid \text{exponent $p$, abelian}\right\rangle\cong \mathbb{F}_p^{|V(\Gamma)|}
	\] showing the claim.
\end{proof}

\begin{lem}\label{lem:non-edges:are:basis:of:commutator:subgroup}
	Denote by $d$ the number of non-edges in $\Gamma$. Then $G'_\Gamma\cong\mathbb{F}_p^d$, i.e., the set of non-edges of $\Gamma$ forms a basis in $G'_\Gamma$.
\end{lem}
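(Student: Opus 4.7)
The plan is to exploit the very clean structure of $F_{n,p}$, from which $G_\Gamma$ is obtained as a quotient by edge-commutator relations, and to push forward the normal form given in the definition of $F_{n,p}$.

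First I would observe that the defining presentation exhibits $G_\Gamma$ as $F_{n,p}/N$, where $N$ is the normal closure in $F_{n,p}$ of the set $S := \{[x_i,x_j] \mid \{v_i,v_j\} \in E(\Gamma)\}$. Since $F_{n,p}$ has nilpotency class $2$, every commutator is central, so the elements of $S$ are central; therefore the normal closure of $S$ coincides with the subgroup generated by $S$. Moreover, since every element of $S$ lies in $F'_{n,p}$, we have $N \subseteq F'_{n,p}$, and so by the second isomorphism theorem
\[
  G'_\Gamma = F'_{n,p}N/N \cong F'_{n,p}/N.
\]

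Next I would invoke the normal form given in the statement of the definition of $F_{n,p}$: the commutators $[x_i,x_j]$ for $1\le i<j\le n$ form an $\mathbb{F}_p$-basis of $F'_{n,p}$, so $F'_{n,p}\cong \mathbb{F}_p^{n(n-1)/2}$. The subset $S$ is precisely the subset of this basis indexed by edges of $\Gamma$, hence $N = \langle S\rangle$ is an $\mathbb{F}_p$-subspace of dimension $|E(\Gamma)|$. Taking the quotient,
\[
  G'_\Gamma \;\cong\; F'_{n,p}/N \;\cong\; \mathbb{F}_p^{n(n-1)/2 - |E(\Gamma)|} \;=\; \mathbb{F}_p^d,
\]
and the images of the non-edge commutators form a basis.

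The only potentially delicate point is the legitimacy of the normal form for $F_{n,p}$ itself, i.e., that the listed relations really do force each element into the claimed unique product and, in particular, that no further collapse occurs among the commutator basis. This is the source of the bound $|F_{n,p}| = p^{n+n(n-1)/2}$ quoted in the definition and can be taken as known (it is the classical fact about the free nilpotent-of-class-$2$, exponent-$p$ group on $n$ generators, verifiable by exhibiting a concrete group of that order satisfying the relations, e.g.\ the Heisenberg-type construction on $\mathbb{F}_p^n\oplus \Lambda^2 \mathbb{F}_p^n$). Once this is granted, the argument above is purely linear-algebraic and the lemma follows without further computation.
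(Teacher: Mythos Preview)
Your proof is correct and follows essentially the same route as the paper's: identify $G_\Gamma=F_{n,p}/N$ with $N\leq F'_{n,p}$ central and generated by the edge commutators, observe $(F_{n,p}/N)'=F'_{n,p}/N$, and then count dimensions using the normal form for $F_{n,p}$. The paper's proof is terser but makes the identical moves.
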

\begin{proof} 
	We have $G'_\Gamma=(F_{n,p}/N)'$ for some normal subgroup $N\leq F_{n,p}'$ with $|N|=p^{|E(\Gamma)|}$ and since commutators are central in $F_{n,p}$ we have $(F_{n,p}/N)'=F_{n,p}'/N$ where
	$|F_{n,p}'|/|N|=p^{{{n}\choose{2}}-|E(\Gamma)|}=p^d$.
\end{proof}
 This also gives us normal forms for elements of $G_\Gamma$.

\begin{cor}
	Let $\Gamma$ be a (simple) graph. Then we have $|G_\Gamma|=p^{|V(\Gamma)|+\left\vert{{V}\choose{2}}-E(\Gamma)\right\vert}$.
	In particular, every element of $G_\Gamma$ can be written in
	the form 
	\[
		v_1^{d_1}\dots v_n^{d_n}c_1^{d_{n+1}}\dots c_k^{d_{n+k}}
	\]
	where $\{c_1,\dots,c_k\}$ is the set of non-trivial 				commutators 	between generators (i.e., the non-edges of the graph~$\Gamma$) and each $d_i$ is uniquely determined modulo $p$.
\end{cor}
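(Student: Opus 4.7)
The plan is to derive the corollary as an immediate consequence of Lemma~\ref{lem:vertices:are:gen:set} and Lemma~\ref{lem:non-edges:are:basis:of:commutator:subgroup}, combined with the fact that $G_\Gamma$ has nilpotency class $2$. Writing $n=|V(\Gamma)|$ and $d=\left|\binom{V}{2}-E(\Gamma)\right|$, the cardinality formula will follow from Lagrange's theorem applied to the short exact sequence
\[
1\longrightarrow G'_\Gamma\longrightarrow G_\Gamma\longrightarrow G_\Gamma/G'_\Gamma\longrightarrow 1,
\]
since Lemma~\ref{lem:vertices:are:gen:set} (via its proof) identifies $G_\Gamma/G'_\Gamma\cong\mathbb{F}_p^{n}$ and Lemma~\ref{lem:non-edges:are:basis:of:commutator:subgroup} identifies $G'_\Gamma\cong\mathbb{F}_p^{d}$, giving $|G_\Gamma|=p^{n}\cdot p^{d}=p^{n+d}$.

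For the normal form statement, I will first establish \emph{existence} of the representation. Given any word in the generators $v_i^{\pm 1}$, since $G_\Gamma$ has class $2$ every commutator is central, so one may sort the word: repeatedly replacing $v_j v_i$ by $v_i v_j [v_j,v_i]^{-1}$ (for $i<j$) moves all generators to the left of all commutators, and the commutator factors accumulate in the central subgroup $G'_\Gamma$. Exponent $p$ then collapses each $v_i$-occurrence modulo $p$. In the remaining commutator part, relations $[v_i,v_j]=1$ corresponding to edges annihilate those generators, leaving only the $c_1,\dots,c_k$ corresponding to non-edges. This shows every element admits \emph{some} expression in the claimed form with exponents $d_1,\dots,d_{n+k}\in\{0,\dots,p-1\}$.

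For \emph{uniqueness} modulo $p$, I will use the two lemmas directly. The exponents $d_1,\dots,d_n$ are determined by the image of the element in $G_\Gamma/G'_\Gamma\cong \mathbb{F}_p^{n}$ under the basis given by the $v_i$'s (Lemma~\ref{lem:vertices:are:gen:set}), and once these are fixed, the residual element lies in $G'_\Gamma\cong\mathbb{F}_p^{d}$, where $c_1,\dots,c_k$ form a basis by Lemma~\ref{lem:non-edges:are:basis:of:commutator:subgroup} (here $k=d$), so the remaining exponents $d_{n+1},\dots,d_{n+k}$ are likewise determined modulo $p$. A counting check confirms consistency: the number of distinct normal forms is $p^{n+k}=p^{n+d}=|G_\Gamma|$.

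No step appears to be a genuine obstacle since all work has been done in the two preceding lemmas; the only minor care is in justifying that the collection process terminates and outputs the claimed form, which follows cleanly from centrality of commutators (class $2$) and the defining relations. I would not expand on the collection computation beyond noting centrality, as the lemmas already pin down both the quotient and the commutator subgroup structurally.
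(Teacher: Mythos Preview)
Your proposal is correct and matches the paper's approach: the paper states the corollary without proof, introducing it only with the sentence ``This also gives us normal forms for elements of $G_\Gamma$,'' making clear it is meant as an immediate consequence of Lemmas~\ref{lem:vertices:are:gen:set} and~\ref{lem:non-edges:are:basis:of:commutator:subgroup}. Your argument simply spells out in detail what the paper leaves implicit.
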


We will see that a lot of information on commutation and centralizers can be deduced from $\Gamma$ directly.  We first need to recall some well known properties of commutators in (nilpotent) groups. 

\begin{lem}[Commutator relations]\label{lem:commutator:rels}
	Let $G$ be a group of nilpotency class $2$. Then
	for all $a,b,c\in G$ we have
	\begin{enumerate}
		\item $[a,b]=[b,a^{-1}]$ and
		\item $[a,bc]=[a,b][a,c]$.
	\end{enumerate}
	In particular for all $n,m\in \mathbb{N}$  we have~$[a^m,b^n]=[a,b]^{mn}$. 
	
\end{lem}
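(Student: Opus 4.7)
The plan is to derive everything from the single structural fact that in a class-$2$ group all commutators are central, together with the standard identities
$[xy,z]=[x,z]^{y}[y,z]$ and $[x,yz]=[x,z][x,y]^{z}$, which hold in any group.

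First I would observe that since $[G,G]\subseteq Z(G)$, the conjugation in the standard identities becomes trivial, so they collapse to the \emph{bilinear} forms
\[
[xy,z]=[x,z][y,z],\qquad [x,yz]=[x,z][x,y]=[x,y][x,z],
\]
where the last equality uses that two central elements commute. This already yields part~(2) of the lemma directly: $[a,bc]=[a,b][a,c]$.

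Next I would extract the ``inverse rule''. Applying bilinearity to $1=[e,b]=[a\cdot a^{-1},b]$ gives $[a,b]\,[a^{-1},b]=1$, hence $[a^{-1},b]=[a,b]^{-1}$. Combined with the identity $[x,y]^{-1}=[y,x]$, valid in any group, this yields $[a^{-1},b]=[b,a]$. Now for part~(1):
\[
[b,a^{-1}]=[a^{-1},b]^{-1}=\bigl([a,b]^{-1}\bigr)^{-1}=[a,b].
\]

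For the final assertion $[a^{m},b^{n}]=[a,b]^{mn}$, I would first prove $[a^{m},b]=[a,b]^{m}$ for all $m\in\mathbb{Z}$ by induction on $|m|$ using bilinearity in the first argument (and the inverse rule to handle $m<0$). Applying bilinearity in the second argument symmetrically gives $[a^{m},b^{n}]=[a^{m},b]^{n}=[a,b]^{mn}$.

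There is no real obstacle here: every step is a one-line manipulation once one commits to the observation that centrality of $G'$ trivialises the conjugation exponents in the general commutator identities. The only small point worth being careful about is handling negative exponents in the induction for $[a^{m},b]=[a,b]^{m}$, which is why I would prove the inverse rule first and then run the induction uniformly for $m\in\mathbb{Z}$.
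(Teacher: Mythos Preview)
Your argument is correct. The only noteworthy difference from the paper is organisational: the paper proves part~(1) first by a bare-hands insertion of $aa^{-1}$,
\[
[a,b]=aba^{-1}b^{-1}aa^{-1}=a[b,a^{-1}]a^{-1}=[b,a^{-1}],
\]
and then uses~(1) inside a similar direct manipulation to obtain~(2). You reverse the order, deriving~(2) first by specialising the general identities $[xy,z]=[x,z]^{y}[y,z]$ and $[x,yz]=[x,z][x,y]^{z}$ to the central case, and then extracting~(1) from the resulting bilinearity together with $[x,y]^{-1}=[y,x]$. Both routes are one-line computations once centrality of $G'$ is invoked; your version has the mild advantage that the inverse rule $[a^{-1},b]=[a,b]^{-1}$ is made explicit before the induction on exponents, whereas the paper uses it implicitly when writing $[b,a^{-m}]=[b,a]^{-m}$ in the final chain. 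Neither approach offers anything the other does not.
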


\begin{proof}  
	Recall that nilpotency class $2$ means that all commutators are central in $G$. We thus have 
	$[a,b]=aba^{-1}b^{-1}=aba^{-1}b^{-1}aa^{-1}=a[b,a^{-1}]a^{-1}=[b,a^{-1}]$ and we have that $[a,bc]=abca^{-1}c^{-1}b^{-1}=abca^{-1}c^{-1}aa^{-1}b^{-1}=ab[c,a^{-1}]a^{-1}b^{-1}=[a,b][c,a^{-1}]=[a,b][a,c].$ By induction $[a,b^n]=[a,b]^n$. Finally,
	$[a^m,b^n]=[a^m,b]^n=[b,a^{-m}]^n=[b,a]^{-mn}=[a,b]^{mn}.$
\end{proof}

\begin{lem}\label{lem:structure:of:center}
	We have $Z(G_\Gamma)=G'_\Gamma\times \langle v : N[v]=V(\Gamma)\rangle$.	
	In particular, if no vertex of $\Gamma$ is adjacent to all other vertices then $Z(G_\Gamma)=G'_\Gamma$.
\end{lem}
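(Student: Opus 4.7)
The plan is to prove the two inclusions separately, using the normal-form description of $G_\Gamma$ together with Lemmas~\ref{lem:non-edges:are:basis:of:commutator:subgroup} and~\ref{lem:commutator:rels}.

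First I would establish the easy inclusion $\supseteq$. Since $G_\Gamma$ has nilpotency class $2$, every commutator is central, so $G'_\Gamma \leq Z(G_\Gamma)$. Moreover, if $v$ is a standard generator with $N[v] = V(\Gamma)$, then by definition of the presentation, $[v, v_j] = 1$ for every other standard generator $v_j$, so $v$ commutes with a generating set, hence with all of $G_\Gamma$. Thus $\langle v : N[v] = V(\Gamma)\rangle \leq Z(G_\Gamma)$, and combined with the previous observation this yields the inclusion $\supseteq$.

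The main work is the reverse inclusion. Take $z \in Z(G_\Gamma)$ and write it in the normal form $z = v_1^{d_1}\cdots v_n^{d_n} \cdot c$ with $c \in G'_\Gamma$ and $0 \leq d_i < p$. Since $c$ is already central, centrality of $z$ is equivalent to centrality of $v_1^{d_1}\cdots v_n^{d_n}$, so it suffices to show that $d_i \neq 0$ forces $N[v_i] = V(\Gamma)$. For each standard generator $v_j$, I would use Lemma~\ref{lem:commutator:rels} to expand
\[
[v_1^{d_1}\cdots v_n^{d_n},\, v_j] \;=\; \prod_{i=1}^{n} [v_i, v_j]^{d_i} \;=\; \prod_{\substack{i : \{v_i,v_j\}\notin E(\Gamma)\\ i\neq j}} [v_i, v_j]^{d_i},
\]
where terms corresponding to edges or to $i=j$ vanish. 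This product lies in $G'_\Gamma$ and must equal $1$ by centrality. By Lemma~\ref{lem:non-edges:are:basis:of:commutator:subgroup}, the non-edge commutators form an $\mathbb{F}_p$-basis of $G'_\Gamma$, so every exponent in the product must vanish modulo $p$. Running $j$ over all vertices, I obtain that $d_i \equiv 0 \pmod p$ whenever there exists some $v_j$ non-adjacent to $v_i$ (with $j\neq i$), i.e.\ whenever $N[v_i]\neq V(\Gamma)$. Hence $z$ lies in $G'_\Gamma \cdot \langle v : N[v] = V(\Gamma)\rangle$.

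Finally, I would verify that the product is direct. The subgroup $\langle v : N[v] = V(\Gamma)\rangle$ is abelian of exponent $p$ and, by the uniqueness of normal forms, intersects $G'_\Gamma$ trivially (its nontrivial elements have nonzero contribution in the generator part of the normal form, while $G'_\Gamma$ consists of pure commutator words). Both factors are normal in $G_\Gamma$ (they sit inside the center), so the product is indeed a direct product. The supplementary statement follows immediately: if no vertex has full closed neighborhood, the second factor is trivial and $Z(G_\Gamma) = G'_\Gamma$. The main technical point is the clean use of the non-edge basis of $G'_\Gamma$ to separate the constraints coming from different vertices $v_j$; everything else is bookkeeping with the normal form.
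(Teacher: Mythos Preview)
Your proof is correct and follows essentially the same approach as the paper: write a central element in normal form, commute it against each generator $v_j$, and use that the non-edge commutators form an $\mathbb{F}_p$-basis of $G'_\Gamma$ to force the relevant exponents to vanish. The paper's version is terser---it immediately reduces to the case of no universal vertex and omits the verification that the product is direct---whereas you carry out the general case and the direct-product check explicitly, but the underlying argument is the same.
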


\begin{proof} 
	We can assume that no vertex in $\Gamma$ is adjacent to all other vertices. Now take an arbitrary element $x:=v_1^{d_1}\dots v_n^{d_n}c_1^{d_{n+1}}\dots c_k^{d_{n+k}}$ like above. If $d_i$ is non-trivial modulo $p$ for some $i\leq n$ then by assumption we find some vertex $v_j$ such that
	$[v_i,v_j]$ is non-trivial. By the counting argument above, commutators of different pairs of generators are linearly independent and using commutator relations we see that thus $[x,v_j]$ is non-trivial as well. So either $d_i\equiv_p 0$ for all $i\leq n$ and $x$ is a product of commutators, or $x$ is not central.
\end{proof}

From now on let us fix a graph $\Gamma$ on vertex set $\{v_1,\dots,v_n\}$ and let $G := G_\Gamma$. We set $m := |\Phi(G)|
=|G'|$. Then $m$ is the number of non-edges in $\Gamma$ and
$|G|=p^{m+n}$. Furthermore, fix an ordering of non-trivial commutators $c_1,\dots,c_m$ of pairs of standard generators~$[v_i,v_j]\neq 1$ with~$i<j$.

\begin{dfn}
	Let $x\in G_\Gamma$ be an element with normal form
	\[
		x := v_1^{d_1}\dots v_n^{d_n}c_1^{e_1}\dots c_m^{e_m}
	\]The \textbf{support} of $x$ is $\{ v_i\mid d_i\not\equiv_p 0\}$. For a subset of vertices $S\subseteq V(\Gamma)$ let 
	$x_S$ be the subword $v_{i_1}^{d_{i_1}}\dots v_{i_s}^{d_{i_s}}$ where $S=\{ v_{i_1},\ldots,v_{i_s}\}$ with  $i_1<\dots<i_s$.
\end{dfn}

Towards analyzing commutation in~$G_\Gamma$ we consider an example.

\begin{exm}\label{example:centralizer}
Note that for two connected components $C_1,C_2$ of the complement graph $\co(\Gamma)$ and every 
group element $x\in G$ we always have $x_{C_1}x_{C_2}
=x_{C_2}x_{C_1}$.
	Consider the complete bipartite graph~$\Gamma$ on parts~$\{v_1,v_2\}$ and~$\{v_3,v_4\}$ and its complement~$\co(\Gamma)$ (see Figure~\ref{fig:example:graph}).
	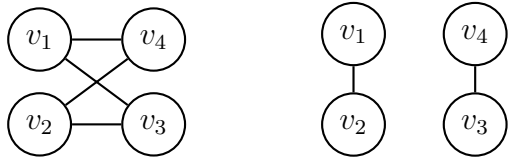
\begin{figure}
	\centering
			\begin{tikzpicture}[thick,scale=0.75]
			\tikzstyle{every node}=[draw,shape=circle]
			\path 
			(-1,-1) node (p1) {$v_1$}
			(-1,-2.5) node (p2) {$v_2$}
			(1,-2.5) node (p3) {$v_3$}
			(1,-1) node (p4) {$v_4$};
			\draw
			(p1) -- (p3)
			(p1) -- (p4)
			(p2) -- (p3)
			(p2) -- (p4);
		\end{tikzpicture}
		\hspace{1.5cm}
		\begin{tikzpicture}[thick, scale=0.8]
			\tikzstyle{every node}=[draw,shape=circle]
			\path 
			(-1,-1) node (p1) {$v_1$}
			(-1,-2.5) node (p2) {$v_2$}
			(1,-2.5) node (p3) {$v_3$}
			(1,-1) node (p4) {$v_4$};
			\draw
			(p1) -- (p2)
			(p3) -- (p4);
		\end{tikzpicture}
		\caption{A complete bipartite graph on 4 vertices (left) and its complement (right).\label{fig:example:graph}}
		\end{figure}
		Then
		\[
			C_{G_\Gamma}(v_1v_2v_3v_4)=\langle v_1v_2\rangle
			\langle v_3v_4\rangle
			 Z(G_\Gamma).
		\] 
			
\end{exm} 

The following theorem states that the example essentially captures how commutation works in general.

\begin{lem}\label{CommutationInComponents}
For~$x\in G_\Gamma$ let $C_1,\dots,C_s$ be the connected components of $\co(\Gamma[\supp(x)])$. Then~$x=x_{C_1}\cdots x_{C_s}c$ with $c\in G'_\Gamma\leq Z(G_\Gamma)$ and~$y\in G_\Gamma$ commutes with~$x$ if and only if~$y \in\langle x_{C_1}\rangle\cdots \langle x_{C_s}\rangle\cdot\langle w: [v,w]=1\text{ for all }v\in \supp(x)\rangle G'_\Gamma$. 
\end{lem}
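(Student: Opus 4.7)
The plan is to translate the group-theoretic question into a combinatorial one via the bilinear expansion of the commutator map on $G_\Gamma/G'_\Gamma$. The key fact I will use repeatedly is that two generators $v_i, v_j$ commute in $G_\Gamma$ if and only if $\{v_i,v_j\}$ is an edge of $\Gamma$. Consequently, whenever $v_i \in C_k$ and $v_j \in C_l$ with $k \neq l$, the pair $\{v_i, v_j\}$ is not an edge of $\co(\Gamma[\supp(x)])$, i.e., it is an edge of $\Gamma$, so $v_i$ and $v_j$ commute; in particular $x_{C_k}$ and $x_{C_l}$ commute for $k \neq l$.

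For the decomposition $x = x_{C_1}\cdots x_{C_s} \cdot c$, I would write $x$ in the normal form $x = v_1^{d_1}\cdots v_n^{d_n}\cdot c'$ with $c' \in G'_\Gamma$. Since $d_i = 0$ for $v_i \notin \supp(x)$ and since $G_\Gamma/G'_\Gamma$ is abelian, reordering the factors so that generators belonging to the same component $C_k$ are grouped introduces only an error in $G'_\Gamma$, which lies in $Z(G_\Gamma)$ by Lemma~\ref{lem:structure:of:center}. For the centralizer characterization, the inclusion ``$\supseteq$'' is immediate from the observations above: $G'_\Gamma$ is central; each $x_{C_k}$ commutes with every $x_{C_l}$ for $l\neq k$ and with itself, hence with $x$; and any $w$ centralizing all of $\supp(x)$ centralizes each $x_{C_l}$ and therefore $x$.

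For the converse inclusion, let $y = v_1^{e_1}\cdots v_n^{e_n}\cdot c''$ commute with $x$. Using Lemma~\ref{lem:commutator:rels} and the centrality of $G'_\Gamma$, iterating the identities $[a,bc]=[a,b][a,c]$ and $[a^m,b^n]=[a,b]^{mn}$ yields
\[
  [x,y] \;=\; \prod_{i<j} [v_i,v_j]^{d_i e_j - d_j e_i}.
\]
By Lemma~\ref{lem:non-edges:are:basis:of:commutator:subgroup} the non-trivial commutators $[v_i,v_j]$ form an $\mathbb{F}_p$-basis of $G'_\Gamma$, so $[x,y]=1$ forces $d_i e_j \equiv d_j e_i \pmod p$ for every non-edge $\{v_i,v_j\}$ of $\Gamma$. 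I then split this condition by support. If $v_i \in \supp(x)$ and $v_j \notin \supp(x)$ with $\{v_i,v_j\}$ a non-edge, then $d_j=0$ while $d_i$ is invertible, forcing $e_j = 0$; thus every $v_j$ with $e_j \neq 0$ and $v_j \notin \supp(x)$ is adjacent in $\Gamma$ to all of $\supp(x)$, so such $v_j$ itself centralizes $\supp(x)$. If $v_i, v_j \in \supp(x)$ are joined by a non-edge, both $d_i, d_j$ are invertible and $e_i/d_i \equiv e_j/d_j \pmod p$; since the $C_k$ are the connected components of $\co(\Gamma[\supp(x)])$, this ratio is constant on each $C_k$, equal to some $\lambda_k$.

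The remaining bookkeeping step is to pass from the pointwise equality $e_i = \lambda_k d_i$ on $C_k$ to the subgroup statement $y_{C_k} \in \langle x_{C_k}\rangle \cdot G'_\Gamma$: using $[a,b^n]=[a,b]^n$ to push the power $\lambda_k$ through the product defining $x_{C_k}^{\lambda_k}$, one checks that $y_{C_k}$ and $x_{C_k}^{\lambda_k}$ agree modulo $G'_\Gamma$. Assembling the contributions from the different components of $\supp(x)$ together with the vertices outside $\supp(x)$ (which commute with every generator of $\supp(x)$) and the central factor $c''$ then gives the claimed membership. The main obstacle is this last step, namely reconciling the arithmetic conditions on the exponents with the assertion about generating sets modulo $G'_\Gamma$; once the bilinear commutator expansion is set up and the components of $\co(\Gamma[\supp(x)])$ are exploited, everything else reduces to linear algebra over $\mathbb{F}_p$.
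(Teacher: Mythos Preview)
Your proof is correct and follows essentially the same route as the paper's: both arguments hinge on the bilinear expansion $[x,y]=\prod_{i<j}[v_i,v_j]^{d_ie_j-d_je_i}$, the $\mathbb{F}_p$-independence of the non-trivial generator commutators, and the observation that the resulting equations $e_i/d_i\equiv e_j/d_j$ propagate along edges of $\co(\Gamma[\supp(x)])$, forcing a single ratio $\lambda_k$ on each component $C_k$. The only difference is organizational: the paper first reduces to a single component and then to the case $\supp(x)=\supp(y)$ before running the ratio argument, whereas you expand the commutator globally and read off the constraints for $v_j\in\supp(x)$ and $v_j\notin\supp(x)$ in one pass; this is a cosmetic rearrangement rather than a different idea.
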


\begin{proof} 
	By definition of $G_\Gamma$ for $i\neq j$ all elements belonging to $C_i$ commute with all elements from $C_j$, giving rise to a decomposition of $x$ into parts belonging to components of $\co(\Gamma)$. Furthermore, it shows that commutation of group elements $x$ and $y$ is the same as simultaneous commutation with all of the respective parts. Consider now the case $x=x_{C_i}$ for some $i$. If $v\in \supp(x)\setminus \supp(y)$ then, due to commutators being independent, $[x,y]=1$ if and only if $v$ commutes with every element from $\supp(y)$ and the same holds after interchanging roles of $x$ and $y$. Thus, we can reduce to the case that $\supp(x)=\supp(y)$ and we will argue that $x$ and $y$ are powers of each other or trivial. For ease of notation assume that $x=v^{d_1}_1\dots v^{d_r}_r$ and $y=v^{f_1}_1\dots v^{f_r}_r$ where $d_i$ and $f_i$ are non-zero modulo $p$. Using commutator relations we obtain
	\[
	[x,y] = [v_1,v_2]^{d_2f_1-d_1f_2}\dots [v_{r-1},v_r]^{d_rf_{r-1}-d_{r-1}f_3}
	\]and for $[x,y]$ to vanish, all of these exponents have to be divisible by $p$. That is, modulo $p$, $f_2$ is uniquely determined by $d_1,d_2$ and $f_1$ or $[v_1,v_2]$ is trivial. Since all $v_i$ lie in one connected component of  $\co(\Gamma[\supp(x)])$, there is a sequence of non-edges from $v_1$ to every $v_i$ within the component and it follows in an inductive fashion that the values of $d_1,\dots,d_r$ together with a choice of $f_1$ uniquely determine all other values of the $f_i$ (modulo $p$). Now clearly one admissible system of exponents is given by choosing $y$ as a power of $x$ and due to uniqueness these are the only possible configurations.
\end{proof}

\begin{cor}\label{centralizer}
	Let $x=v^{d_1}_{i_1}\dots v_{i_r}^{d_r}c$ with $i_1<i_2<\dots <i_r$, $c$ central in $G_\Gamma$ and $d_i \not\equiv_p 0$ for all $i$. Then 
	\[
		C_{G_\Gamma}(x)=\langle x_{C_1}\rangle
		\dots \langle x_{C_s}\rangle \left\langle 
		\{ v_m\mid [v_m,v_{i_j}]=1\text{ for all }j\}\right
		\rangle G'_\Gamma.
	\]
	Where, $C_1,\dots,C_s$ are the connected components of the complement graph~$\co(\Gamma[\supp(x)])$.
\end{cor}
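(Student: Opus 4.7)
The plan is to derive the corollary as a direct specialization of Lemma~\ref{CommutationInComponents}, with only a small bookkeeping step to convert the abstract centralizer appearing there into the explicit generator-based description in the corollary. First I would strip off the central tail: since $c\in Z(G_\Gamma)$ commutes with everything, $C_{G_\Gamma}(x)=C_{G_\Gamma}(xc^{-1})$. Setting $x':=xc^{-1}=v_{i_1}^{d_1}\cdots v_{i_r}^{d_r}$, the hypothesis $d_i\not\equiv_p 0$ forces $\supp(x')=\{v_{i_1},\ldots,v_{i_r}\}$, so the connected components $C_1,\ldots,C_s$ of $\co(\Gamma[\supp(x')])$ are exactly those in the statement, and $x'_{C_i}=x_{C_i}$ for each $i$.

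Next, I would apply Lemma~\ref{CommutationInComponents} to $x'$, which gives
\[
C_{G_\Gamma}(x)=\langle x_{C_1}\rangle\cdots\langle x_{C_s}\rangle\cdot W\cdot G'_\Gamma,
\]
where $W:=\langle w\in G_\Gamma : [v,w]=1\text{ for all }v\in\supp(x)\rangle$. It remains to show $W\cdot G'_\Gamma = V_0\cdot G'_\Gamma$ with $V_0:=\langle\{v_m:[v_m,v_{i_j}]=1\text{ for all }j\}\rangle$. The inclusion $V_0\cdot G'_\Gamma\subseteq W\cdot G'_\Gamma$ is immediate, since each such generator $v_m$ commutes with every $v_{i_j}$ and $G'_\Gamma$ lies in the center. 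For the reverse inclusion, I would take an arbitrary $w\in W$, write it in normal form $w=v_1^{e_1}\cdots v_n^{e_n}\cdot g$ with $g\in G'_\Gamma$, and use the commutator identities from Lemma~\ref{lem:commutator:rels} to compute
\[
[w,v_{i_j}]=\prod_k[v_k,v_{i_j}]^{e_k}.
\]
Since the nontrivial pairwise commutators of standard generators form an $\mathbb{F}_p$-basis of $G'_\Gamma$ by Lemma~\ref{lem:non-edges:are:basis:of:commutator:subgroup}, the condition $[w,v_{i_j}]=1$ forces $e_k\equiv_p 0$ whenever $[v_k,v_{i_j}]\neq 1$. Taking this across all $j$ shows that the standard generators occurring with nontrivial exponent in $w$ all commute with every $v_{i_j}$, so $w\in V_0\cdot G'_\Gamma$.

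I do not expect any real obstacle here: the statement is essentially a rewriting of Lemma~\ref{CommutationInComponents} in which the implicit centralizer $W$ is made explicit. The only mildly delicate point is the final normal-form argument identifying $W$ with $V_0$ modulo $G'_\Gamma$, and this reduces cleanly to the linear independence of distinct nontrivial commutators guaranteed by Lemma~\ref{lem:non-edges:are:basis:of:commutator:subgroup}.
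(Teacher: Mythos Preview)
Your proposal is correct and matches the paper's intent: in the paper this corollary is stated without proof, as an immediate restatement of Lemma~\ref{CommutationInComponents}. Your only substantive addition is the normal-form argument identifying $W\cdot G'_\Gamma$ with $V_0\cdot G'_\Gamma$; depending on whether one reads the $w$ in Lemma~\ref{CommutationInComponents} as ranging over all group elements or just over standard generators, this step is either the actual content of the corollary or redundant, but in either case your argument via linear independence of the nontrivial commutators is correct and is exactly how one would justify it.
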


This (almost) distinguishes single support vertices.

\begin{lem}\label{lem:centralizerComparison}
For $x\in G_\Gamma$ and $v\in \supp(x)$ we have that $|C_{G_\Gamma}(x)|\leq |C_{G_\Gamma}(v)|$. Set $M(x):=\{w\in V(\Gamma)\mid [w,y]=1\text{ for all }y\in \supp(x)\}$. Then if $|C_{G_\Gamma}(x)|=|C_{G_\Gamma}(v)|$ either
$M(x)=M(v)$ in which case $\Gamma[\supp(x)]$ is a complete graph, or
$M(x)=M(v)\setminus\{v\}$ and in both cases all components of $\co(\Gamma[\supp(x)])$ not containing $v$ are singletons.
\end{lem}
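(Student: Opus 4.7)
The plan is to compute $|C_G(x)|$ and $|C_G(v)|$ explicitly via Corollary~\ref{centralizer} and then compare the resulting formulas.

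First, I would simplify the centralizer formula. Let $C_1,\dots,C_s$ denote the connected components of $\co(\Gamma[\supp(x)])$ with $C_1,\dots,C_t$ the singleton ones. A vertex $w\in\supp(x)$ forms a singleton component exactly if it is adjacent in $\Gamma$ to every other vertex of $\supp(x)$, which is exactly the condition $w\in M(x)$. Hence for each singleton component $C_i=\{w\}$ we have $\langle x_{C_i}\rangle = \langle w\rangle \subseteq \langle M(x)\rangle$, and the description from Corollary~\ref{centralizer} simplifies to
\[
C_G(x)=\langle x_{C_{t+1}}\rangle\cdots\langle x_{C_s}\rangle \langle M(x)\rangle G'_\Gamma.
\]
The elements $x_{C_{t+1}},\dots,x_{C_s}$ have pairwise disjoint supports, each contained in $\supp(x)\setminus M(x)$ (since vertices in non-singleton components fail to commute with some other vertex of $\supp(x)$). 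Viewing $G_\Gamma/G'_\Gamma\cong \mathbb{F}_p^n$ with basis the vertices, these $s-t$ elements therefore contribute $s-t$ dimensions linearly independent from the $|M(x)|$ dimensions spanned by $M(x)$. This yields $|C_G(x)|=|G'_\Gamma|\cdot p^{(s-t)+|M(x)|}$, and specializing to $x=v$ gives $|C_G(v)|=|G'_\Gamma|\cdot p^{|M(v)|}$ with $M(v)=N[v]$.

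Second, I would prove the inequality. Because $v\in\supp(x)$ we have $M(x)\subseteq M(v)$, so the claim $|C_G(x)|\leq|C_G(v)|$ is equivalent to $|M(v)\setminus M(x)|\geq s-t$. I exhibit sufficiently many elements of $M(v)\setminus M(x)$: for each non-singleton component $C_i$ not containing $v$, any vertex of $C_i$ belongs to $\supp(x)\setminus M(x)$ and lies in a different component of $\co(\Gamma[\supp(x)])$ from $v$, so it is adjacent to $v$ in $\Gamma$ and hence belongs to $N(v)\subseteq M(v)$. If additionally $v$ itself lies in a non-singleton component $C_v$, then $v\in M(v)\setminus M(x)$ as well. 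Counting yields $|M(v)\setminus M(x)|\geq 2(s-t)$ when $v$ is in a singleton component, and $|M(v)\setminus M(x)|\geq 2(s-t)-1\geq s-t$ otherwise (using $s-t\geq 1$ in the second case).

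Finally, I would analyze equality. Equality forces $|M(v)\setminus M(x)|=s-t$, and matching this against the lower bounds pins down the structure. If $v$ is in a singleton component, this forces $s-t=0$, so all components of $\co(\Gamma[\supp(x)])$ are singletons, $\Gamma[\supp(x)]$ is complete, and there is no leftover contribution from $N(v)\setminus\supp(x)$, giving $M(x)=M(v)$. If $v$ is in a non-singleton component $C_v$, equality forces $s-t=1$, so $C_v$ is the unique non-singleton component; the count then forces $M(v)\setminus M(x)=\{v\}$, giving $M(x)=M(v)\setminus\{v\}$. In either case all components of $\co(\Gamma[\supp(x)])$ not containing $v$ are singletons. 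The only nontrivial step is the dimension count producing $|C_G(x)|=|G'_\Gamma|\cdot p^{(s-t)+|M(x)|}$; once this clean formula is in place, the inequality and the equality analysis reduce to combinatorial bookkeeping.
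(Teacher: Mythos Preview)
Your argument is correct and follows essentially the same approach as the paper: both compute the $\mathbb{F}_p$-dimension of $C_{G_\Gamma}(x)/G'_\Gamma$ via the basis $\{x_{C_i}G'_\Gamma : |C_i|>1\}\cup\{wG'_\Gamma : w\in M(x)\}$ coming from Corollary~\ref{centralizer}, use $M(x)\subseteq M(v)$ together with the fact that each non-singleton component $C_i$ not containing $v$ contributes at least two elements to $M(v)\setminus M(x)$, and then read off the equality case. Your presentation is slightly cleaner in that you package the dimension count into the closed formula $|C_{G_\Gamma}(x)|=|G'_\Gamma|\cdot p^{(s-t)+|M(x)|}$, but the substance is identical.
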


\begin{proof}
Write $x=v^{d_1}_{i_1}\dots v_{i_r}^{d_r}c$ and
\[
	C_{G_\Gamma}(x)=\langle x_{C_1}\rangle
		\dots \langle x_{C_s}\rangle \left\langle 
		\{ v_m\mid [v_m,v_{i_j}]=1\text{ for all }j\}\right
		\rangle G'_\Gamma
\]as above. Assume, w.l.o.g., that $v$ is contained in the component $C_1$ of $\co(\Gamma[\supp(x)])$. Then clearly $x_{C_2},\dots,x_{C_s}\in C_{G_\Gamma}(v)$ and whenever
$[v_m,v_{i_j}]=1\text{ for all }j$ then $[v_m,v]=1$ in particular. Both $C_{G_\Gamma}(x)$ and $C_{G_\Gamma}(v)$ contain $G'_\Gamma\leq Z(G_\Gamma)$ and form $\mathbb{F}_p$-spaces modulo $G'_\Gamma$. Thus $|C_{G_\Gamma}(x)|\leq |C_{G_\Gamma}(v)|$ is equivalent to $\dim_{\mathbb{F}_p}(C_{G_\Gamma}(x)/G'_\Gamma)\leq \dim_{\mathbb{F}_p}(C_{G_\Gamma}(v)/G'_\Gamma)$.

Now $C_1,\dots,C_s$ partition $\supp(x)\subseteq V(\Gamma)$ and $V(\Gamma)$ is linearly independent modulo $G'_\Gamma$ by definition of $G_\Gamma$. Assume $w\in M(x)\cap C_i$ for some $i$ then $w$ commutes with all vertices from $\supp(x)$ and this is equivalent to $C_i=\{w\}$. So $C_{G_\Gamma}(x)/G'_\Gamma$ has a basis of the form $\{x_{C_i}G'_\Gamma\mid |C_i|>1\}\cup\{wG'_\Gamma\mid w\in M(x)\}$ and these sets are disjoint. Now we always have $M(x)\subseteq M(v)$ and for $i>1$ it holds $C_i\subseteq M(v)$ (so in particular $x_{C_i}\in M(v)$ as well). If $|C_1|=1$ (so $C_1=\{v\}$) then $\{x_{C_i}G'_\Gamma\mid |C_i|>1\}\cup\{wG'_\Gamma\mid w\in M(x)\}$ is completely contained in $C_{G_\Gamma}(v)$. If $|C_1|>1$ then $v\notin M(x)$ and by the argument above $\{x_{C_i}G'_\Gamma\mid i>1,|C_i|>1\}\cup\{wG'_\Gamma\mid w\in M(x)\}\cup\{vG'_\Gamma\}$ is a union of disjoint sets which is linearly independent modulo $G'_\Gamma$. In both cases $|C_{G_\Gamma}(x)|\leq |C_{G_\Gamma}(v)|$ and if $|C_i|>1$ for some $i>1$ then actually we get a proper inequality (all elements from $C_i$ contribute to $\dim_{\mathbb{F}_p}(C_{G_\Gamma}(v)/G'_\Gamma)$ separately).
So if equality holds then $\dim_{\mathbb{F}_p}(C_{G_\Gamma}(x)/G'_\Gamma)\leq M(x)+1$ (since all $C_i$ apart from maybe $C_1$ are covered by $M(x)$) and assuming $M(x)\neq M(v)$ we additionally must have $|M(v)|=|M(x)|+1$ showing that in this case $|C_1|>1$ and $v\notin M(x)$.
\end{proof} 

This means that elements of the form $vz$ with $v\in V(\Gamma)$ and $z\in G'_\Gamma$ are almost canonical in $G_\Gamma$ in the following sense: Define a set $\mathcal{C}$ as the union of all minimal generating sets $\{g_1,\dots,g_n\}$ of $G_\Gamma$ (so $n=|V(\Gamma)|$) for which the value of $\sum_i |C_{G_\Gamma}(g_i)|$ is maximal among minimal generating sets of $G_\Gamma$. Then $\mathcal{C}$ contains $V(\Gamma)$ since $V(\Gamma)$ is such a generating set itself. Furthermore $\mathcal{C}$ is canonical in $G_\Gamma$ (invariant under all automorphisms) and we can use it to analyze isomorphisms.

In the following part we want to compare different groups presented on graphs. Let us fix graphs $\Gamma_1$ and $\Gamma_2$ on the vertex set $\{v_1,\dots,v_n\}$ with edges given by $E_1$ and $E_2$ and corresponding groups $G_i:=G_{\Gamma_i}$. The standard generators on which the $G_i$ are presented will again be called $(v_j)_{1\leq j\leq n}$. 

\begin{thm}
	It holds that $\Gamma_1\cong \Gamma_2$ if and only if $G_1\cong G_2$.
\end{thm}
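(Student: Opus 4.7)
The forward direction is immediate. A graph isomorphism $\psi\colon\Gamma_1\to\Gamma_2$ induces a bijection between the standard generators. All defining relations of $G_1$ (exponent $p$, class $2$, the commutation relations $[v_i,v_j]=1$ for edges) translate to valid relations of $G_2$ since $\psi$ preserves edges. So the map extends to a surjective homomorphism of groups of equal order (equal number of vertices and of non-edges), which is therefore an isomorphism.

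For the converse, let $\varphi\colon G_1\to G_2$ be a group isomorphism. The plan is to pass to the abelianization, recover a canonical vertex set there, and then check that commutation determines the edges. Since $G'_i$ is characteristic, $\varphi(G'_1)=G'_2$, and $\varphi$ descends to an isomorphism $\bar\varphi\colon G_1^{\mathrm{ab}}\to G_2^{\mathrm{ab}}$ between elementary abelian $p$-groups in which the projections of $V(\Gamma_i)$ form a basis (Lemma~\ref{lem:vertices:are:gen:set}). Moreover, the canonical set $\mathcal{C}_i$ defined just before the theorem satisfies $\varphi(\mathcal{C}_1)=\mathcal{C}_2$, and $V(\Gamma_i)\subseteq\mathcal{C}_i$. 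Because $G'_i\leq Z(G_i)$, the centralizer $C_{G_i}(g)$ depends only on the coset $\bar g\in G_i^{\mathrm{ab}}$, so $\bar\varphi$ preserves the function $\bar g\mapsto |C_{G_i}(g)|$ as well as the image $\bar{\mathcal{C}}_i:=\pi_i(\mathcal{C}_i)$.

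The central step is to recover the $\mathbb{F}_p$-lines $\mathbb{F}_p\bar v$ of the vertices inside $\bar{\mathcal{C}}_i$, using Lemma~\ref{lem:centralizerComparison}. An element $g\in\mathcal{C}_i$ belongs to some minimal generating set that maximizes $\sum_i|C_{G_i}(g_i)|$, and Lemma~\ref{lem:centralizerComparison} bounds $|C_{G_i}(g)|\leq|C_{G_i}(v)|$ for any $v\in\supp(g)$, with equality only in rigid configurations (a clique on $\supp(g)$ together with a dominance condition on $M(g)$). A basis-exchange argument shows that in any maximizer of the centralizer sum, each generator's support must arise from such an equality situation; consequently every line in $\bar{\mathcal{C}}_i$ is either a vertex line or lies in the span of vertex lines that already appear in $\bar{\mathcal{C}}_i$ and satisfy a strict dominance relation. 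Eliminating the latter iteratively (e.g.\ by intersecting $\bar{\mathcal{C}}_i$ with the set of lines of minimal $|\supp|$, which is a canonical subset) leaves exactly $\{\mathbb{F}_p\bar v : v\in V(\Gamma_i)\}$. Hence $\bar\varphi$ induces a bijection $\psi\colon V(\Gamma_1)\to V(\Gamma_2)$ between vertex sets.

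It remains to see that $\psi$ respects edges, but this is automatic: by construction $\varphi(v_i)=\psi(v_i)^{d_i}z_i$ for some $d_i\in\mathbb{F}_p^{\times}$ and $z_i\in G'_2\leq Z(G_2)$, so by Lemma~\ref{lem:commutator:rels},
\[
	[\varphi(v_i),\varphi(v_j)]=[\psi(v_i),\psi(v_j)]^{d_id_j},
\]
which is trivial iff $[\psi(v_i),\psi(v_j)]=1$. Since $\varphi$ preserves commutation, $\{v_i,v_j\}\in E(\Gamma_1)$ iff $\{\psi(v_i),\psi(v_j)\}\in E(\Gamma_2)$. The main obstacle is the third paragraph: Lemma~\ref{lem:centralizerComparison} does not rule out non-singleton-support elements sharing a centralizer with a single vertex, so one must argue carefully (via basis exchange and possibly an iterated canonical refinement) that any "spurious" lines in $\bar{\mathcal{C}}_i$ can be pruned in an automorphism-invariant way, leaving exactly the vertex lines.
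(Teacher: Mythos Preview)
Your forward direction is fine and matches the paper's argument.

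The converse, however, has a genuine gap, and it is precisely the one you flag as the ``main obstacle'' without resolving. Your plan is to recover the vertex lines $\{\mathbb{F}_p\bar v : v\in V(\Gamma_i)\}$ as a canonical subset of $G_i^{\mathrm{ab}}$. This cannot work in general: the very next lemma in the paper (Lemma~\ref{lem:char:when:iso:pres:red}) shows that whenever $\Gamma_i$ contains distinct vertices $v,w$ with $N(v)\subseteq N[w]$, the map $v\mapsto vw$ extends to an automorphism of $G_i$, and under this automorphism the line $\mathbb{F}_p\bar v$ is sent to the non-vertex line $\mathbb{F}_p\overline{vw}$. So the set of vertex lines is \emph{not} automorphism-invariant, and no canonical refinement of $\bar{\mathcal{C}}_i$ can single it out. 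Your proposed refinement ``lines of minimal $|\supp|$'' is also circular: support size is defined relative to the vertex basis you are trying to recover, so it is not an intrinsic invariant of the line.

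The paper avoids this trap entirely. Rather than seeking a canonical vertex set, it takes the given isomorphism $\varphi$ and iteratively \emph{modifies} it. If some image $y_j=\varphi(v_j)$ has $|\supp(y_j)|>1$, one picks $v\in\supp(y_j)$ so that replacing $y_j$ by $v$ still yields a generating set, and then argues---this is the real work, using Lemma~\ref{lem:centralizerComparison} and the maximality of the centralizer sum---that $[y_j,y_i]=1$ forces $[v,y_i]=1$ for every other $y_i$. Hence the substitution $y_j\rightsquigarrow v$ still defines a homomorphism on the presentation. Iterating reduces all images to single-support elements, and the induced map on supports is then the desired graph isomorphism. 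The point is that one only needs \emph{some} isomorphism mapping vertices to single-support elements, not that the original $\varphi$ already does so.
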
	
\begin{proof}
	Let $\varphi\colon\Gamma_1\to \Gamma_2$ be a graph isomorphism. Then $\varphi$ induces an automorphism of $F_{n,p}$ by permuting generators and we have 
	$G_{\Gamma_i}=F_{n,p}/N_i$ where $N_i$ is the central subgroup generated by edges of $\Gamma_i$. Thus, as a group automorphism, $\varphi$ maps $N_1$ to $N_2$ giving an isomorphism of the corresponding quotients. 
	
	For the other direction consider a group isomorphism $\varphi:G_1\to G_2$. From Lemma~\ref{lem:centralizerComparison} we see that for $x\in G_i$ and $v\in \supp(x)$ we have
	\[
		\circledast:\ |C_{G_i}(x)|\leq |C_{G_i}(v)|.
	\]
As in the last lemma let $M(x):=\{v\in V(\Gamma_i)\mid [v,w]=1\text{ for all }w\in \supp(x)\}$ for $x\in G$ be the set of standard generators commuting with the entire support of~$x$. In fact~$M(x)= \bigcap_{w\in \supp(x)} N[w]$.

Our strategy is now to alter the group isomorphism $\varphi:G_1\to G_2$ until we can extract sufficiently much information on the graphs. We do so by redefining the images~$y_i=\varphi(v_i)$ and double checking that the new map is still a homomorphism onto a generating set and thus an isomorphism.

Consider the case that $y:=y_j$ is supported in $G_2$ by more than one vertex for some index $j\leq n$. There must be some vertex $v\in \supp(y)$ such that replacing $y$ with $v$ still leaves us with a generating set for $G_2$. Indeed, this is true in the elementary abelian group $G_2/(G_2)'$ and commutators are non-generators in $G_2$. Furthermore, from $\circledast$ it follows that $(v_1,\dots,v_n)$ is a generating set of $G_1$ which maximizes the sum of centralizer orders $\sum_i |C_{G_1}(v_i)|$ among minimal generating sets and since $\varphi$ is an isomorphism, the same must be true for $(y_1,\dots,y_n)$ in $G_2$. 
For $i>1$, consider $y_i$ such that $[y,y_i]=1$. From Corollary \ref{centralizer} we see that (up to multiplication with commutators which can be ignored) $y_i=y_{C_1}^{t_1}\dots y_{C_s}^{t_s} v^{e_1}_{i_1}\dots v^{e_k}_{i_k}$ for some vertices $v_{i_j}\in M(y)$ and where $C_1,\dots C_s$ are the components of $\co(\Gamma[\supp(y)])$ and we also get that
$[y_i,v]=[y_{C_1}^{t_1},v]$ where we, w.l.o.g., assume that $v\in C_1$. The last Corollary furthermore shows that $|C_i|=1$ for $i>1$, so actually we can write $y_i=y_{C_1}^{t_1}v^{e_1}_{i_1}\dots v^{e_k}_{i_{k'}}$ for $v_{i_j}\in M(y)$. Using the same argument as for $y$ and $v$ there is some $w\in \supp(y_i)$ such that $y_i$ can be replaced with $w$ while still keeping a generating set and for this $w$ we again have $|C_{G_2}(y_i)|=|C_{G_2}(w)|$. Also note that if $|C_1|=1$ then $[v,y_i]=1$ which is what we want to show. Similarly we are done if $t_1\equiv_p 0$, so assume otherwise. If $|C_1|>1$ there is some $v'\in C_1$ such that $[v,v']\neq 1$ and in particular $v,v'\notin M(y_i)$ implying that $v,v'\in\supp(y_i)$ from the expression for $y_i$ above. Now $w$ can be chosen such that~$w\notin C_1$ (since the exponents of $y$ and $y_i$ over elements of $C_i$ agree this follows from rank considerations and the fact that~$(y_1G'_2,\dots,y_nG'_2)$ forms a basis of~$G_2/G'_2$). Thus~$M(y_i)\subseteq M(w)\setminus\{v,v'\}$ contradicting the previous Corollary.

In conclusion, $[y,y_i]=1$ implies $[v,y_i]=1$ (And we even see that this only happens if $\supp(y)$ induces a complete graph or if $\supp(y_i)\cap C_1=\emptyset$). Hence exchanging $y$ for $v$ gives us a generating set which is still a valid image of $(v_1,\dots,v_n)$.
We can iterate this process to obtain an isomorphism mapping vertices to elements supported by single vertices as well which gives rise to a bijection between vertices. The fact that the isomorphism respects commutators then translates to respecting edges of the graphs and we conclude that $\Gamma_1\cong\Gamma_2$.
\end{proof}

It is not always the case that the original vertices~$V(\Gamma)$ of the graph form a canonical subset of $G_\Gamma$, even when taken modulo commutators. However, we can precisely describe the conditions under which they do (In a previous version of the paper we neglected the inclusion of commutators in the canonical set. We thank Ilia Ponomarenko for pointing this out to us).

\begin{lem}\label{lem:char:when:iso:pres:red}Assume $\Gamma_1\cong\Gamma_2$.
Then $\Gamma_iG'_i$ is canonical in $G_i$, if and only if in ~$\Gamma_1$ (and thus~$\Gamma_2$) there is no pair of distinct vertices~$v,w$ with~$N(v)\subseteq N[w]$.

In this case, each element of $\Iso(G_1,G_2)$ uniquely determines an element of $\Iso(\Gamma_1,\Gamma_2)$ by restriction to $\{vG'_1\mid v\in\Gamma_1\}$.
\end{lem}

\begin{proof}
	Following the last proof we see that elements with single-vertex support are canonical in $G_1$ and $G_2$ under the condition above. Assume the condition does not hold in $\Gamma_1$ and for distinct vertices $v\neq w$ we have $N(v)\subseteq N[w]$. Then mapping $v$ to $vw$ and fixing other generators extends to an automorphism of $G_1$ via the given presentation of~$G_1$ from~$\Gamma_1$.
\end{proof}

\section[Constructing groups with equal k-profiles]{Constructing groups with equal $k$-profiles}

In this section we want to apply the construction from above to specific graphs. The idea is to start with a family of $3$-regular base graphs such that the CFI-construction gives us two non-isomorphic graphs $\Gamma_1$ and $\Gamma_2$ for each of the base graphs which can be distinguished by $k$-WL only for $k$ scaling linearly with the size of the CFI-graphs. We will then show that the resulting groups $G_i:=G_{\Gamma_i}$ have equal $\Theta(k)$-profiles.

\begin{dfn}
	For a group $G$, a tuple $(g_1,\dots,g_k)\in G^k$ is \textit{minimal} if
	$\langle g_1,\dots,g_k\rangle$ is not generated by $k-1$ elements.
\end{dfn}

When working with $F := F_{n,p}$ we will fix a standard basis for $Z(F)=\Phi(F)\cong\mathbb{F}_p^{{n}\choose{2}}$. If $F$ is presented on generators $v_1,\dots,v_n$ we choose
\[
	([v_1,v_2],[v_1,v_3],\dots,[v_1,v_n],[v_2,v_3],\dots,[v_{n-1},v_n])
\] as our fixed basis for the center of $F$. We call these commutators the \emph{standard commutators}.

\begin{dfn}
	Let $\bar{g}:=(g_1,\dots,g_k)\in F_{n,p}^k$. We define two $\mathbb{F}_p$-matrices.
In the~$(k\times n)$-matrix $B_1(\bar{g})$ the $i$-th row corresponds to (the exponents of) $g_i$ expressed in normal form in terms of standard generators. In the~$\left(\binom{k}{2}\times \binom{n}{2}\right)$-matrix $B_2(\bar{g})$ the rows correspond to
	$[g_1,g_2],[g_1,g_3],\dots,[g_{k-1},g_k]$ expressed in terms of standard commutators in this order. We will sometimes refer to their columns
	by these labels, i.e., the column belonging to $[v_i,v_j]$ will be referenced as $B_2(\bar{g})([v_i,v_j])$.
\end{dfn}

\begin{exm}
For example assume~$n=3$ and~$k=2$ and assume $\bar{g}:=(g_1,g_2)$ with~$g_1=v_1v_2^5v_3$ and~$g_2= v_1^2v_2[v_1,v_2]$. Then~$[g_1,g_2]=$
\[[v_3,v_1^2][v_2^5,v_1^2][v_3,v_2][v_1,v_2]\!=\![v_1,v_2]^{-9}[v_1,v_3]^{-2}[v_2,v_3]^{-1}.\]

In this case $B_1(\bar{g}) = \begin{pmatrix}

1 & 5 &1 \\

2 & 1  &0\\\end{pmatrix}$ and $B_2(\bar{g}) = (-9, -2,-1)$
\end{exm}where entries are to be read modulo $p$.

\begin{lem}\label{ExtPower}
	Let $\bar{g}:=(g_1,\dots,g_t)\in (F_{n,p})^t$. Then
	$B_2(\bar{g})=B_1(\bar{g})\wedge B_1(\bar{g})$ where $\wedge$ describes the exterior product with respect to our chosen orderings for the standard bases.
\end{lem}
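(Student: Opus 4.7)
The plan is a direct computation: compare the exponent of a standard commutator $[v_k,v_\ell]$ (with $k<\ell$) appearing in $[g_i,g_j]$ (for $i<j$) with the corresponding $2\times 2$ minor of $B_1(\bar g)$, and observe that they coincide. This is exactly the definition of the exterior square with respect to the chosen bases.

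First I would normalize the setup. Write each $g_i$ in normal form as $g_i = v_1^{a_{i,1}}\cdots v_n^{a_{i,n}}\cdot z_i$ with $z_i\in F'_{n,p} = Z(F_{n,p})$, so that the $i$-th row of $B_1(\bar g)$ is $(a_{i,1},\dots,a_{i,n})$. Since $F_{n,p}$ has nilpotency class $2$, the central factors $z_i,z_j$ drop out of $[g_i,g_j]$, and we are reduced to computing the commutator of two monomials in the generators.

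Next I would expand bilinearly. Using Lemma~\ref{lem:commutator:rels} (in particular the identity $[a,bc]=[a,b][a,c]$ and its symmetric counterpart, together with $[a^m,b^n]=[a,b]^{mn}$), I would obtain
\[
[g_i,g_j] \;=\; \prod_{k,\ell=1}^{n}[v_k,v_\ell]^{a_{i,k}a_{j,\ell}}.
\]
The diagonal terms $k=\ell$ are trivial since $[v_k,v_k]=1$. For the off-diagonal terms, using $[v_\ell,v_k]=[v_k,v_\ell]^{-1}$ (which holds in any group) I would collapse pairs $(k,\ell)$ and $(\ell,k)$ into a single standard commutator, yielding
\[
[g_i,g_j] \;=\; \prod_{k<\ell}[v_k,v_\ell]^{\,a_{i,k}a_{j,\ell}-a_{i,\ell}a_{j,k}}.
\]

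Finally I would read off the conclusion. The entry of $B_2(\bar g)$ in the row labelled $[g_i,g_j]$ and the column labelled $[v_k,v_\ell]$ is precisely the $2\times 2$ minor $a_{i,k}a_{j,\ell}-a_{i,\ell}a_{j,k}$ of $B_1(\bar g)$ formed from rows $i,j$ and columns $k,\ell$. By definition of the exterior square with respect to the ordered bases $(v_1,\dots,v_n)$ and $([v_1,v_2],[v_1,v_3],\dots,[v_{n-1},v_n])$, this is exactly the corresponding entry of $B_1(\bar g)\wedge B_1(\bar g)$, which proves the identity.

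There is no genuine obstacle; the only thing to be careful about is sign/ordering conventions when passing from unordered products over all $(k,\ell)$ to ordered products over $k<\ell$, and the matching of row/column orderings on the two sides of the asserted equality.
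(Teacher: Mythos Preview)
Your proof is correct and follows essentially the same approach as the paper's own proof: both compute the exponent of $[v_k,v_\ell]$ in $[g_i,g_j]$ via bilinear expansion using the commutator identities, identify it as the $2\times 2$ minor $a_{i,k}a_{j,\ell}-a_{i,\ell}a_{j,k}$, and conclude by the definition of the exterior square. Your version is somewhat more explicit about dropping the central parts and collapsing the off-diagonal terms, but the argument is the same.
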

\begin{proof}
	Express the commutator $c_{i,j}:=[g_i,g_j]$ in terms of the standard commutators. Then $c_{i,j}=\left([v_k,v_\ell]^{m(k,\ell)}\right)_{k<\ell}$ where~$m(k,\ell)={(B_1(\bar{g}))}_{i,k}{(B_1(\bar{g}))}_{j,\ell}-{(B_1(\bar{g}))}_{i,\ell}{(B_1(\bar{g}))}_{j,k}=$
	\[
		\det\begin{pmatrix}
	{(B_1(\bar{g}))}_{i,k} & {(B_1(\bar{g}))}_{i,\ell}\\ {(B_1(\bar{g}))}_{j,k} & {(B_1(\bar{g}))}_{j,\ell} \end{pmatrix}.
	\] Thus, the row of $B_2(\bar{g})$ belonging to $c_{i,j}$ corresponds to 	the row of $B_1(\bar{g})\wedge B_1(\bar{g})$ belonging to rows ${B_1(\bar{g})}_{i,-}$ and ${B_1(\bar{g})}_{j,-}$ 
\end{proof}

In particular, this shows that subgroups of $F_{n,p}$ are direct products of relatively free groups and central groups. In the following we will use the fact that for $M\in\mathbb{F}_p^{k\times n}$ we have $\rank(M\wedge M)={{\rank(M)}\choose{2}}$,~see for example \cite[Section 10.1]{MR1153019}.

\begin{lem}
	Let $G\leq F_{n,p}$ be generated by $\bar{g}:=(g_1,\dots,g_t)$ and set $r:=\rank(B_1(\bar{g}))$. Then there are $r$ elements $g_{i_j}$ among $\{g_1,\dots,g_t\}$ and 
	central elements $c_1,\dots,c_k\in Z(F_{n,p})$ for some $0\leq k\leq (n-r)$ such that $G=\langle g_{i_1},\dots,g_{i_r}\rangle\times\langle c_1,\dots,c_k\rangle$. 
	Furthermore, $G'$ has $\mathbb{F}_p$-dimension ${{r}\choose{2}}$.
\end{lem}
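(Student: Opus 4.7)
The plan is to decompose $G$ into a ``linear'' part $H$, generated by $r$ of the original $g_i$'s, and a central complement inside $F'_{n,p}$. Since $\rank(B_1(\bar{g}))=r$, I can pick indices $i_1<\dots<i_r$ so that the corresponding rows of $B_1(\bar{g})$ form a basis of its row space; equivalently, the images of $g_{i_1},\dots,g_{i_r}$ in $F_{n,p}/F'_{n,p}\cong \mathbb{F}_p^n$ are $\mathbb{F}_p$-linearly independent. Set $H:=\langle g_{i_1},\dots,g_{i_r}\rangle$. For any other $g_j\in\bar{g}$, its image in $F_{n,p}/F'_{n,p}$ is a combination $\sum_{\ell}a_\ell \overline{g_{i_\ell}}$, so the element $c'_j:=g_{i_r}^{-a_r}\cdots g_{i_1}^{-a_1}\cdot g_j$ has trivial linear part and thus lies in $F'_{n,p}\cap G$. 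Using that $F'_{n,p}\leq Z(F_{n,p})$, this gives $g_j\in H\cdot (G\cap F'_{n,p})$, and hence $G = H\cdot (G\cap F'_{n,p})$.

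Next I would verify that $H\cap F'_{n,p}=H'$. Since $F_{n,p}$ has nilpotency class $2$, every element of $H$ admits a collected form $g_{i_1}^{e_1}\cdots g_{i_r}^{e_r}\cdot z$ with $z\in H'\leq F'_{n,p}$ (by Lemma~\ref{lem:commutator:rels} applied to the collection process). If such an element lies in $F'_{n,p}$, then reducing modulo $F'_{n,p}$ forces $e_\ell\equiv 0\pmod p$ for all $\ell$ by the chosen independence, so the element coincides with $z\in H'$. Now I pick $c_1,\dots,c_k\in G\cap F'_{n,p}$ extending an $\mathbb{F}_p$-basis of $H'$ to one of $G\cap F'_{n,p}$, and set $C:=\langle c_1,\dots,c_k\rangle$. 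Then $G=H\cdot C$ by the previous step, while $H\cap C = H'\cap C = \{1\}$ by construction. Since $C$ is central in $F_{n,p}$, the product is direct: $G=H\times C$. The bound on $k$ then follows from a dimension count inside $G\cap F'_{n,p}$: each of the $t-r$ non-selected generators contributes at most one independent class modulo $H'$, and these classes span $(G\cap F'_{n,p})/H'$.

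It remains to compute $\dim_{\mathbb{F}_p} G'$. As $C$ is central and abelian it contributes no commutators, so $G'=H'$. By Lemma~\ref{ExtPower}, the matrix of commutators of the selected generators satisfies $B_2(g_{i_1},\dots,g_{i_r})=B_1(g_{i_1},\dots,g_{i_r})\wedge B_1(g_{i_1},\dots,g_{i_r})$, and since $B_1(g_{i_1},\dots,g_{i_r})$ has rank $r$ by choice, its exterior square has rank $\binom{r}{2}$. Because the rows of $B_2$ express a generating set of $H'$ in terms of the standard basis of the elementary abelian group $F'_{n,p}$, this yields $\dim_{\mathbb{F}_p}G' = \dim_{\mathbb{F}_p}H' = \binom{r}{2}$.

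The main delicate step is the identification $H\cap F'_{n,p}=H'$, which relies crucially on the normal form in $F_{n,p}$ and on the chosen generators being independent modulo $F'_{n,p}$; once this is in place, the direct product decomposition follows cleanly from centrality of $C$, and Lemma~\ref{ExtPower} handles the commutator count essentially for free.
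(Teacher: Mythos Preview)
Your argument is correct and follows the same route as the paper: select $r$ generators with independent rows in $B_1(\bar g)$, rewrite each remaining $g_j$ as an element of $H$ times a central correction, split off a central complement of $H$ inside $G\cap F'_{n,p}$, and read off $\dim G'=\binom{r}{2}$ from Lemma~\ref{ExtPower}; your explicit verification that $H\cap F'_{n,p}=H'$ via the collected normal form is a welcome clarification of a step the paper leaves implicit. One small point: your dimension count yields $k\le t-r$ rather than the stated $k\le n-r$, but this is exactly what the paper's own proof establishes as well (the ``$n-r$'' there appears to be a slip for ``$t-r$''), and indeed $k\le n-r$ can fail, e.g.\ for $G=F'_{n,p}$ with $n\ge 4$.
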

\begin{proof}
	If $B_1(\bar{g})$ has rank $r$, we can choose $r$ linearly independent rows corresponding to certain generators $g_{i_j}$. Other rows can then be expressed via
	these chosen rows which by definition of $B_1(\bar{g})$ means that all other generators can be replaced by central elements $c_1,\dots,c_{n-r}$ without changing $G$.
	Set $G_r := \langle g_{i_1},\dots,g_{i_r}\rangle$. The corresponding rows in $B_1(\bar{g})$ are now independent meaning that no set of cardinality less than $r$ can generate $G_r$. Since all other generators are now central we have $[G,G]=[G_r,G_r]$ and the latter is of dimension $\rank(B_2(\bar{g}))={{r}\choose{2}}$. Choose a subset of $c_i$'s that is maximal with respect to the property $G_r\cap\langle c_{i_1},\dots,c_{i_k}\rangle=\emptyset$. Then $G=\langle G_r,c_{i_1},\dots,c_{i_k}\rangle$ as desired.
\end{proof}

The following observation is elementary but will help us compare subgroups of $G_\Gamma$ for different values of $\Gamma$.
\begin{lem}\label{subgroupsModN1}
	Let $H:=\langle g_1,\dots,g_t,z_1,\dots,z_r\rangle\leq G_\Gamma$ and $R:=\dim(\Phi(H))$. Assume that all $z_i$ are central in $G_\Gamma$, that $\langle g_1,\dots,g_t\rangle/Z(G_\Gamma)\cong \mathbb{F}_p^t$, and that $H$ is not generated by less than $t+r$ elements. Let $c_1,\dots,c_R$ be generators
	of $\Phi(H)$ of the form $c_i=[g_{i_1},g_{i_2}]$ and express
	all other commutators $c_{R+1},\dots,c_{{t}\choose{2}}$ 			between the $g_i$ as words $w_{R+1},\dots,w_{{t}\choose{2}}$
	in the $c_i$. Then
	\[
		H\cong\langle g_1,\dots,g_t\mid \textup{exponent~$p$, class 2 },
		w_{R+1},\dots,w_{{t}\choose{2}}\rangle\times C^r_p.
	\]
\end{lem}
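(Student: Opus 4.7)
The plan is to prove the isomorphism by orders: I will show both $H$ and the right-hand side have order $p^{t+r+R}$, exhibit a surjective homomorphism from the right-hand side to $H$, and conclude.

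First I would compute $|H|$. Since $G_\Gamma$ has exponent $p$, so does $H$, and therefore $\Phi(H) = H'$ (the Frattini subgroup of a $p$-group being the minimal subgroup with elementary abelian quotient). By the hypothesis that $H$ cannot be generated by fewer than $t + r$ elements, we get $\dim_{\mathbb{F}_p}(H/\Phi(H)) = t+r$, so together with $\dim_{\mathbb{F}_p}(\Phi(H)) = R$ this yields $|H| = p^{t+r+R}$.

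Next I would analyze the presented factor $\widetilde{H}_0 := \langle g_1, \dots, g_t \mid \text{exp}~p, \text{class 2}, w_{R+1}, \dots, w_{\binom{t}{2}} \rangle$. View this as a quotient of $F_{t,p}$, which has order $p^{t + \binom{t}{2}}$ with central commutator subgroup $F_{t,p}' \cong \mathbb{F}_p^{\binom{t}{2}}$ spanned by the standard commutators $c_1, \dots, c_{\binom{t}{2}}$. The additional relations $c_j = w_j$ for $j > R$ are central (commutators lie in the center in class 2), and they are $\mathbb{F}_p$-linearly independent in $F_{t,p}'$ since each such relation contains a distinct pivot coordinate $c_j$ that appears in no other relation (each $w_j$ involves only $c_1, \dots, c_R$). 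Hence these $\binom{t}{2} - R$ relations kill a subspace of dimension $\binom{t}{2} - R$, giving $|\widetilde{H}_0| = p^{t+R}$ and $|\widetilde{H}_0 \times C_p^r| = p^{t+r+R}$.

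Finally I would define $\phi: \widetilde{H}_0 \times C_p^r \to H$ by mapping the $i$-th generator of $\widetilde{H}_0$ to $g_i$ and the $j$-th generator of $C_p^r$ to $z_j$. Well-definedness requires that the relations hold in the image: exponent $p$ and class 2 are inherited from $G_\Gamma$; the relations $c_j = w_j$ among commutators of the $g_i$ hold in $H$ by construction of the $w_j$; the $z_j$ have order dividing $p$; and the direct-product compatibility holds because the $z_j$ are central in $G_\Gamma$ and hence commute with every $g_i$. Surjectivity follows as the image contains the generating set $\{g_1,\dots,g_t,z_1,\dots,z_r\}$ of $H$. Since both groups have order $p^{t+r+R}$, $\phi$ is an isomorphism.

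The main (mild) obstacle is the independence claim for the relations $c_j = w_j$; once one sees that the $c_j$ with $j > R$ serve as pivots, this is clean. A secondary point to verify is that the hypothesis $\langle g_1, \dots, g_t\rangle / Z(G_\Gamma) \cong \mathbb{F}_p^t$ is what ensures no "hidden" relation among the $g_i$ modulo $Z(G_\Gamma)$ forces the minimal generator count of $H$ to drop; together with the assumption that $H$ genuinely needs $t+r$ generators, this is what produces the clean factorization of orders underlying the proof.
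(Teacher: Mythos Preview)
Your proof is correct and follows essentially the same line as the paper: an epimorphism from the abstract presented group onto (a piece of) $H$, combined with an order count, forces the map to be an isomorphism. The paper carries this out in two steps---first showing the presented group is isomorphic to $\langle g_1,\dots,g_t\rangle$ (epimorphism plus the order bound $p^{t+R}$), and then arguing that minimality of the generating set makes the central part $\langle z_1,\dots,z_r\rangle\cong C_p^r$ split off as a direct factor---whereas you fold both steps into a single surjection $\widetilde{H}_0\times C_p^r\to H$ and compare total orders; your explicit pivot argument for the independence of the relations $c_jw_j^{-1}$ is a bit more detailed than the paper's one-line ``due to the given relations its order is at most $p^{t+R}$'', but the content is the same.
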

\begin{proof}
	By assumption $|\langle g_1,\dots,g_t\rangle|=p^{t+R}$. Clearly the presentation above defines a group admitting an epimorphism 
	onto $\langle g_1,\dots,g_t\rangle$. Due to the given relations its order is at most
	$p^{t+R}$. Since $(g_1,\dots,g_t,z_1,\dots,z_r)$ is assumed to be minimal, the central group $\langle z_1,\dots,z_r\rangle\cong C^r_p$ splits from $H$.
\end{proof}

Let $\Gamma_0=(\{V_1,\dots,V_t\},E)$ be a $3$-regular graph with $N:=|E|$ edges and such that $\Gamma_1:=\CFI(\Gamma_0)$ and $\Gamma_2:=\widetilde{\CFI(\Gamma_0)}$ are not isomorphic (cf.~Theorem~\ref{thm:cfi}). Let $n:=10t$ be the number of vertices of $\Gamma_1$ and $\Gamma_2$. 

To improve readability, we use capital letters for the vertices of the base graph in the following.

We assume $\Gamma_0$ and $\co(\Gamma_0)$ to be connected and then the same holds for the corresponding CFI-graphs. Recall that the CFI-graphs are again $3$-regular. In the following we will call a pair of edges between two CFI-gadgets together with their adjacent vertices a \textit{link} and \textit{twisting} will be understood as replacing the edges in a link with their twisted version. Note that two gadgets or two links are always disjoint or equal and that links correspond bijectively to edges in the base graph $\Gamma_0$. As before we call vertices of links external (w.r.t.~their gadget) and other vertices internal. We fix $F:=F(n,p)$, the relatively free group on vertices of the CFI-graphs above. We also fix normal subgroups $N_1,N_2\leq F$ corresponding to edges of $\Gamma_1$ and~$\Gamma_2$, respectively. Thus $G_i:=G_{\Gamma_i}=F/N_i$. Finally, let $e$ be any edge in the base graph and let $\mbox{}^{(e)}\colon F\to F$ be the following map:
Say $e=(V,W)\in E(\Gamma_0)$ (so we actually chose an orientation).  Then twisting along $e$ can be seen as swapping
in all normal forms the standard commutators~$[a^V_i,a^W_j]$ and~$[a^V_i,b^W_j]$ and also swapping all occurrences of~$[b^V_i,b^W_j]$ and~$[b^V_i,a^W_j]$.
This is of course not a group isomorphism but it induces an automorphism $\varphi\colon Z(F)\to Z(F)$. If $x\in F$ has a normal form that factors as $vc$ where $v$ is the part of $x$ in standard generators and $c$ is the product of standard commutators then $x^{(e)}:=v\varphi(c)$ and this defines a bijection of $F$ into itself.

\begin{dfn}
	A group $H\leq F$ is called \textit{essentially $k$-generated} if
	\begin{enumerate}
		\item $F'=Z(F)\leq H$ and
		\item $\dim_{\mathbb{F}_p}(H/F')=k$.
	\end{enumerate}
	Intuitively this means that the group is~$k$ generated modulo the center.
	Define $\mathcal{H}_k\subseteq \Sub(F)$ to be the set of all essentially $k$-generated subgroups of $F$.
\end{dfn}

\begin{lem}\label{lem:all;in:unique:essent:k:gen}
	For every subgroup $S\leq G_i=F/N_i$ for which $\dim_{\mathbb{F}_p}(S/G'_i)=k$ there is a unique essentially $k$-generated subgroup $H\leq F$ such that $S\leq H/N_i$.
\end{lem}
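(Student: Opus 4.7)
The statement is essentially a bookkeeping result about pulling subgroups back along the projection $\pi\colon F\to F/N_i = G_i$ and then enlarging by $F'$. Since $N_i$ consists of edge-commutators of $\Gamma_i$, we know $N_i\leq F' = Z(F)$, so every subgroup of $F$ that contains $F'$ automatically contains $N_i$; this is what makes the correspondence between subgroups of $G_i$ and ``essentially generated'' subgroups of $F$ work cleanly.

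For \emph{existence}, I would set
\[
    H \;:=\; \pi^{-1}(S)\cdot F'.
\]
By construction $F'\leq H$, so the first defining condition of being essentially generated is satisfied. Next I would verify the dimension condition: $\pi$ induces a surjection $H \twoheadrightarrow S\cdot G'_i$ whose kernel is $H\cap N_i = N_i$ (because $N_i\leq F'\leq H$). Quotienting further by $F'/N_i = G'_i$ yields an isomorphism $H/F' \cong SG'_i/G'_i$, and the right-hand side has $\mathbb{F}_p$-dimension $k$ by hypothesis on $S$. Hence $H\in\mathcal{H}_k$. Finally, $S\leq H/N_i$ is immediate from $\pi^{-1}(S)\leq H$ combined with $N_i\leq H$.

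For \emph{uniqueness}, suppose $H'\in\mathcal{H}_k$ is another subgroup with $S\leq H'/N_i$. Then $N_i\leq F'\leq H'$, so $H'$ is the full $\pi$-preimage of its image in $G_i$, which must contain $\pi^{-1}(S)$. Since also $F'\leq H'$, we get $H = \pi^{-1}(S)\cdot F' \leq H'$. Both $H/F'$ and $H'/F'$ are subspaces of the $\mathbb{F}_p$-vector space $F/F'$, and both have dimension $k$; the inclusion $H/F' \leq H'/F'$ of subspaces of equal finite dimension forces equality, so $H = H'$.

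The only step that might be called an obstacle is the dimension computation $\dim_{\mathbb{F}_p}(H/F') = k$, and even that reduces to the canonical isomorphism $F/F' \xrightarrow{\sim} G_i/G'_i$ induced by $\pi$ (which holds because $\ker\pi = N_i \leq F'$). Everything else is a direct consequence of the correspondence theorem together with the fact that all relevant subgroups contain the central subgroup $N_i$.
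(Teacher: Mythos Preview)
Your proof is correct and follows exactly the same construction as the paper, which defines $H$ by the single formula $H=\nu^{-1}(S)F'$ (your $\pi^{-1}(S)\cdot F'$) and leaves the verification implicit. You have simply filled in the details of existence and uniqueness that the paper omits.
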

\begin{proof}
	Let $\nu:F\to F/N_i$ be the natural epimorphism then $H$ can be uniquely defined as $\nu^{-1}(S)F'$.
\end{proof}
Set~$\mathcal{H}_k^{N_i}:=\{ H/N_i\mid H\in\mathcal{H}_k\}$. Our goal is for various $k$ to construct a bijection
\[
	\mathcal{H}_k^{N_1}\to\mathcal{H}_k^{N_2}
\]
that preserves isomorphism-types of groups. Since all~$k$-generated subgroups have the property that $\dim_{\mathbb{F}_p}(S/G'_i)=k$,  the lemma above then gives an isomorphism-type preserving bijection between $k$-generated subgroups of $G_1$ and $G_2$. Note that $\mathcal{H}_k^{N_i}=\{ S\leq G_i\mid \dim_{\mathbb{F}_p}(S/G'_i)=k\text{ and }G'_i=Z(G_i)\leq S \}$.

\begin{lem}\label{TwistingSubgroups}
	Let $1\leq k<N/10$ where $N$ is the number of edges in $\Gamma_0$. 		For $H\in\mathcal{H}_k$ there is some edge $e$ in the base graph such that $H/N_1\cong H/N_1^{(e)}$.
\end{lem}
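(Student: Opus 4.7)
The plan is to locate an edge $e$ of the base graph $\Gamma_0$ whose link commutators lie essentially outside the image of the commutator form of $H$, so that the natural identification of the two central quotients upgrades to a group isomorphism $H/N_1 \cong H/N_1^{(e)}$. I would set $S := H/F' \subseteq F/F' \cong \mathbb{F}_p^n$, which is a $k$-dimensional subspace, and observe that both $H/N_1$ and $H/N_1^{(e)}$ are class-$2$ exponent-$p$ central extensions of~$S$ with centers $F'/N_1$ and $F'/N_1^{(e)}$ respectively. Their isomorphism types are governed by the bilinear alternating brackets $S \wedge S \to F'/N_1$ and $S \wedge S \to F'/N_1^{(e)}$ induced by the commutator on~$F$.

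For each edge $e = (V,W)$ at the associated gadget-positions $(i,j)$, I would call $e$ \emph{safe} whenever the projection of~$S$ onto the four link coordinates $a^V_i, b^V_i, a^W_j, b^W_j$ has $\mathbb{F}_p$-rank at most one. Lemma~\ref{ExtPower} then guarantees that for all $s_1, s_2 \in S$, the commutator of the corresponding lifts has trivial coefficient on each of the four link commutators at~$e$. For safe~$e$, the $\mathbb{F}_p$-linear identification $F'/N_1 \cong F'/N_1^{(e)}$ that fixes all non-link basis commutators and swaps the two twisted pairs intertwines the two brackets, so the identity on $S$ paired with this identification lifts to the desired group isomorphism.

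The crux of the proof, and its main obstacle, is to bound the number of unsafe edges by something strictly less than~$N$. Introducing, for each position $(V,i)$, the subspace $L_{V,i} \subseteq \mathbb{F}_p^k$ spanned by the columns of a fixed basis of~$S$ at the coordinates $a^V_i$ and $b^V_i$, an edge is unsafe precisely when both $L_{V,i}$ and $L_{W,j}$ are nontrivial and fail to lie in a common one-dimensional subspace. The naive approach of bounding $|\supp(S)|$ is insufficient, since $\supp(S)$ can be as large as~$n$ even for $k=1$. Instead one must exploit that positions sharing a common direction in $\mathbb{F}_p^k$ automatically produce safe edges among themselves, so the constraint $\dim S = k$ limits the number of pairwise incompatible rank-data configurations. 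Using that each external vertex of the CFI-construction lies in a unique link and that each CFI-gadget contributes ten vertices, a careful count should yield at most $10k$ unsafe edges, which is strictly less than~$N$ under the hypothesis $k < N/10$; hence a safe edge exists.
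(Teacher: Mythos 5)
Your reduction to the alternating bracket $S\wedge S\to F'/N_1$ (via the exponent-$p$ class-$2$ correspondence) is sound, and your observation that an edge whose four link coordinates receive a rank-$\le 1$ projection of $S$ can be twisted harmlessly is correct; it is the special case of the paper's criterion in which the four link columns of $B_2$ vanish identically on $S\wedge S$. The gap is precisely in the step you flag as the crux: the claimed bound of at most $10k$ unsafe edges is not merely unproved, it is false. Take $k=2$ and choose a basis matrix $M\in\mathbb{F}_p^{2\times n}$ for $S$ in which, for every position $(V,i)$, the two columns indexed by $a^V_i$ and $b^V_i$ are both equal to one fixed nonzero vector $u_{V,i}$, all internal-vertex columns are zero, and linked positions receive non-proportional vectors $u_{V,i}, u_{W,j}$ (two distinct lines suffice, since the links induce a perfect matching on the set of positions). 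Then every $L_{V,i}$ is a line, linked lines are distinct, and \emph{every} edge of $\Gamma_0$ is unsafe in your sense, although $k=2$. The lemma still holds for this $H$, but not via a safe edge: here $[f_1,f_2]$ has the same nonzero coefficient $\det(u_{V,i},u_{W,j})$ on all four link commutators, and twisting merely exchanges which two of them are killed without changing the dependency data that determine $H/N_1$.

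The repair is exactly the weakening the paper uses. You do not need the link columns of $B_2(\bar f)$ to vanish on $S\wedge S$; you only need them to be linear combinations of non-link columns that are not zeroed in either $B_2(\bar g)$ or $B_2(\bar g^{(e)})$, for then deleting either pair leaves the row-dependency relations, and hence by Lemma~\ref{subgroupsModN1} the isomorphism type, unchanged. Fixing $k$ pivot vertices whose $B_1$-columns are independent, every column $B_2(\bar f)([v,w])$ lies in the span of the columns $B_2(\bar f)([v,y])$ as $y$ ranges over the pivots, so it suffices to choose a link avoiding the $k$ pivots and the $O(k)$ vertices supporting a chosen full-rank column system of $B_2(\bar g)$; $3$-regularity of the CFI-graphs then produces such a link whenever $k<N/10$. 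Your central-extension framing survives intact, but the counting must be done against this weaker notion of ``bad'' edge, not against rank-$\ge 2$ projections.
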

\begin{proof}
	Let $H:=\langle f_1,\dots,f_\ell\rangle$ and for each $i$ set $g_i:=f_iN_1\in G_1$. We want to investigate the group $(H/N_1)'=H'/N_1$. Since it is generated by commutators between the $g_i$ its structure is mostly described by $B_2(\bar{f})=B_1(\bar{f})^{\wedge 2}$ after replacing columns indexed by elements of $N_1$ with zero-columns. Call this new matrix $B_2(\bar{g})$.

Twisting along edge $e=(V,W)\leq E(\Gamma_0)$ results in mapping
$([a^V_i,a^W_j],[b^V_i,b^W_j])$ to $([a^V_i,b^W_j],[b^V_i,a^W_j])$ (and vice versa, see Section~\ref{sec:CFI-graphs}). This can also be interpreted in terms of the matrices from above as replacing the two zero-columns $B_2(\bar{g})([a^V_i,a^W_j])$ and $B_2(\bar{g})([b^V_i,b^W_j])$ by the original columns in $B_2(\bar{f})$ and replacing the columns corresponding to $[a^V_i,b^W_j]$ and $[b^V_i,a^W_j]$ with zero-columns instead. This defines a matrix $B_2(\bar{g}^{(e)})$ that describes linear dependencies between commutators among the $(f_1^{(e)}\dots,f_k^{(e)})$ modulo $N_1^{(e)}$.

We will now argue that $e$ can be chosen in such a way that $B_2(\bar{g})$ and $B_2(\bar{g}^{(e)})$ have the same column spaces. For this, we argue that we can fix a system of columns of rank $r$ in $B_2(\bar{g})$ that does not contain the columns affected by twisting along $e$, then for $e$ as above these columns also form a system of maximal rank in $B_2(\bar{g}^{(e)})$ and thus linear dependency relations for rows of the two matrices are exactly the same. Using Lemma \ref{subgroupsModN1} we see that $H/N_1\cong H/N_1^{(e)}$ for this choice of $e$.
	
By assumption the rank of $B_1(\bar{f})$ is $k$ and $k<N$. We assume w.l.o.g.~that the first $k$ columns of $B_1(\bar{f})$ are linearly independent. Then the same holds for the first ${{k}\choose{2}}$ 			columns in $B_2(\bar{f})=B_1(\bar{f})\wedge B_1(\bar{f})$. Now these columns may not contain a system of full rank anymore in $B_2(\bar{g})$ but they belong to commutators of the form $[i,j]$ for $1\leq i<j\leq k$.
Since $\Gamma_i$ is $3$-regular, for a fixed $i$ at most three of these commutators are contained in $N_1$. Thus the rank of the first ${{k}\choose{2}}$ columns in $B_2(\bar{g})$ is at least ${{k}\choose{2}}-3k$
and we may choose $r'\leq 3k$ additional columns such that they contain a system of full rank together with the first ${{k}\choose{2}}$ columns. Now every such column belongs to a pair of vertices and the number of relevant vertices for the full rank system in total is smaller than $2r'+k\leq 7k < N$ and thus there are still links in $\Gamma_1$ that are not adjacent to any of these vertices. Let us say these links correspond at least to edges $e_1,\dots,e_{N-7k}$. For each of these links there are two zero-columns in $B_2(\bar{g})$ and two columns agreeing with $B_2(\bar{f})$ corresponding to the twisted/non-twisted version of this link. Due to the choice of the edges we can now replace all four of these columns by zero-columns without reducing the rank of the resulting matrix. We will argue that among the edges $e_1,\dots,e_{N-7k}$ there are some edges where twisting also does not change the rank.
	
For this, note that for vertices $v,w$, column $(B_2(\bar{f}))([v,w])$ is a linear combination of columns $(B_2(\bar{f}))([v,y])$ and also a linear combination of columns $(B_2(\bar{f}))([y,w])$ where $y$ runs through the first $k$ columns of $B_1(\bar{f})$ since we assumed the first $k$ columns of $B_1(\bar{f})$ to be linearly independent and since the entries of $B_2(\bar{f})$ are subdeterminants of $B_1(\bar{f})$. Say the first $k$ columns of $B_1(\bar{f})$ correspond to vertices $v_1,\dots,v_k$ in the CFI-graphs.
We say that $1\leq i\leq k$ is \textit{bad} for some link if $v_i$ is adjacent to this link. Since each index is bad for at most three links and $3k<N-7k$, there exist links over the edges $e_1,\dots,e_{N-7k}$ for which no index is bad. For such a link, belonging to edge $e$ say, all columns in the linear combination described above are still present in $B_2(\bar{g})$ and thus the rank of this matrix is the same as for $B_2(\bar{g}^{(e)})$.
\end{proof}

\begin{dfn}
	Set $\mathcal{V}:=V(\Gamma_1)$ and identify $\Sym(\mathcal{V})$ as a subgroup of $\Aut(F)$ in
	the natural way. We set $A$ to be the group of permutations of $\Sym(\mathcal{V})$ that map each gadget to itself with an automorphism.
	(I.e.,~$A$ consists of the graph automorphisms after link edges have been removed.)	
\end{dfn}

Note that the group $A$ is abelian. It is generated by the permutations of $\mathcal{V}$ twisting two incident links in $\Gamma_1$ while permuting the inner vertices of their common gadget accordingly to a graph automorphism of the gadget. 
In particular, $A$ stabilizes all links and gadgets setwise.

If $H\in\mathcal{H}_k$ then for any edge $e$ of $\Gamma_0$ we have $H^{(e)}=H$ (even if $\mbox{}^{(e)}$ is not a group isomorphism).
Lemma~\ref{TwistingSubgroups} shows that for $H/N_1\leq G_1$ there is some edge $e$ of $\Gamma_0$ such that $H/N_1\cong H/N^{(e)}_1$ and by the properties of the CFI-construction the twist~$\mbox{}^{(e)}$ can be altered to become the original twist 
via suitable elements from $A$.
 More precisely, in the situation above there is some $\sigma_e\in A$ (only depending on $e$) such that $H/N^{(e)}_1\cong\sigma_e(H_1)/N_2$. This defines an isomorphism-type preserving map
\[
	\Phi:\mathcal{H}_k^{N_1}\to\mathcal{H}_k^{N_2},\ H/N_1\mapsto \sigma_e(H_1)/N_2,
\]
 where $e$ depends on $H$ and we will show that the edges can be chosen in a way that makes $\Phi$ bijective.

\begin{dfn}
	Let $i\in\{1,2\}$. We say that subgroups $H_1/N_i,H_2/N_i\in\mathcal{H}_k^{N_i}$ are of the same \textit{type} if there is some $\sigma\in A$ such that $H_1=\sigma(H_2)$.
\end{dfn}

An inspection of Lemma~\ref{TwistingSubgroups}'s proof shows the choice of edge~$e$ only depends on the type of the subgroups involved. 
\begin{lem}\label{lem:same:edge:in:type}
	If $k< N/10$, the edge~$e$ in Lemma~\ref{TwistingSubgroups} can be chosen to be the same for all subgroups of a fixed type.
\end{lem}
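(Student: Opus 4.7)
The plan is to revisit the proof of Lemma~\ref{TwistingSubgroups} and observe that the conditions it imposes on the edge $e$ depend only on the gadget-level structure of the subgroup, which is invariant under the action of $A$.

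Recall how $e$ arises. Given generators $\bar{f}$ of $H\in\mathcal{H}_k$, one fixes $k$ linearly independent columns of $B_1(\bar{f})$ (say corresponding to vertices $v_1,\dots,v_k$) and adjoins up to $r'\leq 3k$ further columns to realize the full rank of $B_2(\bar{g})$. This singles out a set $S\subseteq V(\Gamma_1)$ of at most $7k$ ``important'' vertices, and the conditions on $e$ are: (a) the link corresponding to $e$ is disjoint from $S$, and (b) no $v_i$ with $i\leq k$ is adjacent to the link. Since every link is determined by a pair of gadgets, and since both ``adjacent to a link'' and ``disjoint from a vertex set'' only refer to which gadgets the vertices in question occupy, both (a) and (b) are gadget-level conditions.

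For the main argument, fix a type $\mathcal{T}$. By construction, every $\sigma\in A$ preserves each gadget setwise and respects the pairing $(a^U_i,b^U_i)$ that specifies the links, so $\sigma(v)$ lies in the same gadget as $v$ for every vertex $v$, and $\sigma$ permutes external vertices among themselves (and similarly for internal vertices). Given $H_1/N_1, H_2/N_1\in\mathcal{T}$, write $H_1=\sigma(H_2)$ with $\sigma\in A$. If $\bar{f}$ generates $H_2$ then $\sigma(\bar{f})$ generates $H_1$, and $B_1(\sigma(\bar{f}))$ is just $B_1(\bar{f})$ with columns permuted according to $\sigma$. Hence the analogous choice of rank-witnessing columns for $H_1$ yields the set $\sigma(S)$ in place of $S$, and $\sigma(S)$ occupies exactly the same gadgets as $S$. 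Consequently conditions (a) and (b) coincide for $H_1$ and $H_2$, so a single edge $e$ is valid for every subgroup in $\mathcal{T}$.

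The main delicate point is to confirm that ``adjacent to a link'' really is a gadget-level notion in the precise sense used in the proof of Lemma~\ref{TwistingSubgroups}, so that it is unchanged under any automorphism of an individual gadget that respects the pairing of external vertices. Once this is verified, the claim follows immediately by transporting the choice of $e$ along elements of $A$.
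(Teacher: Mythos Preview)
Your argument is correct and follows the same idea as the paper's one-line proof: since $A$ fixes each link setwise, the conditions on $e$ extracted from the proof of Lemma~\ref{TwistingSubgroups} are $A$-invariant, so one choice of $e$ serves an entire type. One small sharpening: the phrase ``gadget-level'' is slightly loose, since for an external vertex the condition ``adjacent to a given link'' depends on which external \emph{pair} it lies in, not merely which gadget; the argument goes through because, as you note, every $\sigma\in A$ stabilises each pair $\{a^U_i,b^U_i\}$ (equivalently, each link) setwise, and this is precisely the fact the paper invokes.
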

\begin{proof}
	Since $A$ fixes links setwise, positions where twisting preserves the isomorphism type are the same for groups that get mapped to each other via elements from $A$.
\end{proof}

\begin{lem}
	For each edge~$e$ compatible with Lemma~\ref{TwistingSubgroups}, $\Phi$ maps subgroups of different types to subgroups of different types.
\end{lem}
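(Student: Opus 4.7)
The plan is to exploit the fact that $A$ is a group together with the type-dependence of the edge choice granted by Lemma~\ref{lem:same:edge:in:type}. By that lemma, the edge $e$ (and hence the element $\sigma_e \in A$) appearing in the definition of $\Phi$ depends only on the type of $H/N_1$. Thus on a single type, $\Phi$ is realized by a single fixed element of $A$, and consequently the image of one type sits inside one type in $\mathcal{H}_k^{N_2}$. All that remains is to check that distinct types in $\mathcal{H}_k^{N_1}$ cannot collapse into a common type in $\mathcal{H}_k^{N_2}$.

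To do this, I would take two subgroups $H_1/N_1$ and $H_2/N_1$ of distinct types in $\mathcal{H}_k^{N_1}$, and let $e_1, e_2$ and $\sigma_{e_1}, \sigma_{e_2} \in A$ be the associated data, so that $\Phi(H_1/N_1) = \sigma_{e_1}(H_1)/N_2$ and $\Phi(H_2/N_1) = \sigma_{e_2}(H_2)/N_2$. Suppose for contradiction that these two images are of the same type. By definition of type this yields some $\tau \in A$ with $\sigma_{e_1}(H_1)/N_2 = \tau\sigma_{e_2}(H_2)/N_2$. Both $\sigma_{e_1}(H_1)$ and $\tau\sigma_{e_2}(H_2)$ lie in $\mathcal{H}_k$ (any $\sigma \in A \leq \Aut(F)$ preserves $F' = Z(F)$ and induces an automorphism of $F/F'$, so it preserves being essentially $k$-generated), hence by the uniqueness in Lemma~\ref{lem:all;in:unique:essent:k:gen} the equality already holds on the nose, namely $\sigma_{e_1}(H_1) = \tau\sigma_{e_2}(H_2)$. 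Using that $A$ is a group, set $\rho := \sigma_{e_1}^{-1}\tau\sigma_{e_2} \in A$; then $H_1 = \rho(H_2)$, which means $H_1/N_1$ and $H_2/N_1$ are of the same type, a contradiction.

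The argument is essentially a one-step unwinding and I do not anticipate any real obstacle; the only points that require genuine care are (i) invoking Lemma~\ref{lem:same:edge:in:type} to make $\sigma_{e_i}$ well-defined on the type rather than on the individual subgroup, and (ii) lifting the equality from $\mathcal{H}_k^{N_2}$ back to $\mathcal{H}_k$ via Lemma~\ref{lem:all;in:unique:essent:k:gen}. Everything else is formal manipulation with the group $A$.
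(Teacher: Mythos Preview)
Your argument is correct and matches the paper's proof essentially verbatim: assume the images have the same type, obtain $\tau\in A$ relating them, and conjugate back by $\sigma_{e_1}^{-1}$ and $\sigma_{e_2}$ to get an element of $A$ relating $H_1$ and $H_2$. One small remark: the definition of ``same type'' already lives at the level of $F$ (it asserts $\sigma_{e_1}(H_1)=\tau\,\sigma_{e_2}(H_2)$ directly), so your lifting step via Lemma~\ref{lem:all;in:unique:essent:k:gen} is valid but unnecessary, and Lemma~\ref{lem:same:edge:in:type} is likewise not needed for this particular statement.
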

\begin{proof} 
	Assume that \ref{TwistingSubgroups} gives edges $e_1$ and $e_2$ for
	groups $S,\tilde{S}\leq G_1$. Write $S=H/N_1$, $\tilde{S}=\tilde{H}/N_1$ and assume that
	$\sigma_{e_1}(H)/N_2$ and $\sigma_{e_2}(\tilde{H})/N_2$ have the same type. Then there is some $\sigma\in A$ with
	$(\sigma_{e_2}^{-1}\sigma\sigma_{e_1})(H)=\tilde{H}$ and thus $S$ and $\tilde{S}$ have the same type.
\end{proof}

\begin{lem}
	For a fixed type and a fixed edge $e$ (as in Lemma~\ref{TwistingSubgroups}), $\Phi$ is isomorphism-type preserving and injective. 
\end{lem}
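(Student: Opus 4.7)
The plan is to prove both statements essentially by unfolding what $\Phi$ does on this restricted domain, using the properties of $\sigma_e$ that were established prior to the lemma. Isomorphism-type preservation is essentially baked into the construction via Lemma~\ref{TwistingSubgroups}, while injectivity reduces to the fact that $\sigma_e$ is a (group) automorphism of $F$.

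For the first claim, I would argue as follows. Fix $H \in \mathcal{H}_k$ and let $e$ be the edge selected by Lemma~\ref{TwistingSubgroups} for the type of $H/N_1$; by Lemma~\ref{lem:same:edge:in:type} this is indeed a valid choice for all subgroups of this type simultaneously. By that lemma we already have the isomorphism $H/N_1 \cong H/N_1^{(e)}$. Then by the discussion preceding the definition of $\Phi$, the twisting $\mbox{}^{(e)}$ restricts on $Z(F)$ to the permutation of standard commutators realized by the link-twist at $e$, so $N_1^{(e)}$ is the commutator-image of the edge set of the graph obtained by twisting link $e$ in $\Gamma_1$, which (being a single twist) is isomorphic to $\Gamma_2$; the chosen $\sigma_e \in A$ realizes this graph isomorphism and, viewed as an automorphism of $F$, satisfies $\sigma_e(N_1^{(e)}) = N_2$. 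Hence $\sigma_e$ induces an isomorphism $H/N_1^{(e)} \xrightarrow{\sim} \sigma_e(H)/N_2 = \Phi(H/N_1)$. Composing gives $H/N_1 \cong \Phi(H/N_1)$, proving isomorphism-type preservation.

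For injectivity on a fixed type with fixed $e$, suppose $\Phi(H/N_1) = \Phi(\tilde H/N_1)$, i.e.\ $\sigma_e(H)/N_2 = \sigma_e(\tilde H)/N_2$ as subgroups of $G_2 = F/N_2$. Since $H, \tilde H \in \mathcal{H}_k$ satisfy $F' = Z(F) \leq H, \tilde H$ and $F'$ is characteristic in $F$, we get $\sigma_e(H), \sigma_e(\tilde H) \supseteq \sigma_e(F') = F' \supseteq N_2$. Thus the equality of their images in $F/N_2$ lifts to $\sigma_e(H) = \sigma_e(\tilde H)$ in $F$, and bijectivity of $\sigma_e$ yields $H = \tilde H$. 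Finally, since $H, \tilde H \supseteq F' \supseteq N_1$, we conclude $H/N_1 = \tilde H/N_1$.

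The only subtle point to verify carefully is the identity $\sigma_e(N_1^{(e)}) = N_2$, ensuring $\sigma_e$ descends to a well-defined isomorphism between the quotients; this is where the choice of $\sigma_e \in A$ from the preamble of $\Phi$ is used, and is the main (small) obstacle — everything else is a formal consequence of $\sigma_e$ being a bijective group automorphism and of the inclusion $F' \leq H$ characterizing elements of $\mathcal{H}_k$. No quantitative hypothesis on $k$ is needed here beyond what ensures the selection of $e$ in Lemma~\ref{TwistingSubgroups}.
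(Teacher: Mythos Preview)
Your argument is correct and follows essentially the same route as the paper. The paper's proof only spells out injectivity (exactly your argument: $\sigma_e$ is bijective and $\sigma_e(H_i)\supseteq Z(F)\supseteq N_2$, so equality modulo $N_2$ forces $H_1=H_2$), while the isomorphism-type preservation you re-derive is what the paper already established in the paragraph introducing $\Phi$ via $H/N_1\cong H/N_1^{(e)}\cong \sigma_e(H)/N_2$.
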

\begin{proof}
	Keep the notation from the last lemma but assume $S_1\neq S_2$ are of the same type. Then $H_1\neq H_2$. Thus $\sigma_e(H_1)\neq\sigma_e(H_2)$ which is equivalent to $\sigma_e(H_1)/N_2\neq\sigma_e(H_2)/N_2$ due to $\sigma_e(H_i)$ containing $Z(F)$ and in particular $N_2$.
\end{proof}

All arguments also work for interchanged roles of $G_1$ and $G_2$. In particular this shows that $|\mathcal{H}_k^{N_1}|=|\mathcal{H}_k^{N_2}|$ for each $k$.
\begin{cor}\label{cor:equal:k-profiles}
	$G_1,G_2$ have equal $k$-profiles for $k< N/10$.
\end{cor}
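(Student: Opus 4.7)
The plan is to upgrade the isomorphism-type preserving bijection $\Phi$ on $\mathcal{H}_k^{N_1}$, and analogously on $\mathcal{H}_j^{N_1}$ for every $j\leq k$ (which is legitimate since $j\leq k<N/10$, so Lemma~\ref{TwistingSubgroups} applies at each level), to an isomorphism-type preserving bijection on the multisets of \emph{all} $k$-generated subgroups of $G_1$ and $G_2$. The subtlety is that a $k$-generated subgroup $S\leq G_i$ need not contain $G_i'=Z(G_i)$ itself, and therefore is generally not an element of $\mathcal{H}_k^{N_i}$.

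First I would associate to each $k$-generated $S\leq G_i$ its central closure $SG_i'$. Setting $j_S:=\dim_{\mathbb{F}_p}(SG_i'/G_i')$, we have $j_S\leq k$ because $SG_i'/G_i'$ is elementary abelian and is generated by the images of the (at most) $k$ generators of $S$. By Lemma~\ref{lem:all;in:unique:essent:k:gen} there is a unique $H\in\mathcal{H}_{j_S}$ with $H/N_i=SG_i'$. This gives a partition of the $k$-generated subgroups of $G_i$ into the disjoint fibers
\[
\mathcal{S}_H^{(i)}:=\{S\leq H/N_i\mid S\text{ is $k$-generated and }SG_i'=H/N_i\},
\]
indexed by pairs $(j,H)$ with $j\leq k$ and $H\in\mathcal{H}_j$.

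For each such $H$, the bijection $\Phi$ at level $j$ produces a counterpart $H':=\Phi(H/N_1)\in\mathcal{H}_j^{N_2}$ together with a group isomorphism $\psi_H\colon H/N_1\to H'$ (which exists since $\Phi$ is isomorphism-type preserving). Any such $\psi_H$ sends $G_1'$ to $G_2'$ and preserves both the property of being $k$-generated and that of projecting onto the derived quotient, so it restricts to an isomorphism-type preserving bijection $\mathcal{S}_H^{(1)}\to\mathcal{S}_{H'}^{(2)}$. Assembling these fiberwise bijections over all $j\leq k$ and all $H\in\mathcal{H}_j$ yields the desired isomorphism-type preserving bijection between the $k$-generated subgroups of $G_1$ and those of $G_2$, which is precisely the claim that the $k$-profiles agree.

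The substantial work has already been done in Lemma~\ref{TwistingSubgroups} and the sequence of lemmas constructing $\Phi$; what remains is only the bookkeeping above. The sole potential pitfall is ensuring that $\Phi$ is available simultaneously at every level $j\leq k$, but this is automatic from the bound $k<N/10$.
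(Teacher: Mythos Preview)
Your overall strategy---partition the $k$-generated subgroups of $G_i$ according to the level $j=\dim_{\mathbb{F}_p}(SG_i'/G_i')\leq k$, invoke $\Phi$ at each level, and transport fibers via an isomorphism $\psi_H$---is the right way to flesh out the paper's terse argument, and it is in fact more careful than the paper, which only appeals to $\mathcal{H}_k$.

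There is, however, a genuine gap in the step where you assert that \emph{any} isomorphism $\psi_H\colon H/N_1\to H'$ sends $G_1'$ to $G_2'$. This is false in general: $G_1'$ need not be characteristic in $H/N_1$. For instance, if $j=1$ (or more generally whenever the chosen generators of $H$ commute pairwise in $G_1$), then $H/N_1$ is elementary abelian and every subgroup of the right order is the image of $G_1'$ under some automorphism. Consequently $\psi_H(S)G_2'$ need not equal $H'$, and your fiberwise bijection $\mathcal{S}_H^{(1)}\to\mathcal{S}_{H'}^{(2)}$ breaks down.

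The repair is to use the \emph{specific} isomorphism furnished by the construction rather than an arbitrary one. The map $\sigma_e\in A\leq\Aut(F)$ permutes standard generators, so it preserves $F'$ and therefore carries $F'/N_1^{(e)}$ to $F'/N_2=G_2'$. For the first half $H/N_1\cong H/N_1^{(e)}$, the isomorphism from Lemma~\ref{subgroupsModN1} sends the non-central generators $g_iN_1$ to $g_iN_1^{(e)}$; if you choose the central generators $z_1,\dots,z_r$ inside $F'/N_1$ (which is legitimate since $F'\leq H$), their images may be taken in $F'/N_1^{(e)}$, and then commutators together with the $z_i$ span $G_1'$ on one side and $F'/N_1^{(e)}$ on the other. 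With this specific $\psi_H$ your argument goes through unchanged.
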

\begin{proof}
Since bijection~$\Phi$ is isomorphism-type preserving, the collection of subgroups in $\mathcal{H}_k^{N_1}$ is mapped bijectively and isomorphism-type preservingly to $\mathcal{H}_k^{N_2}$. Every~$k$-generated subgroup is contained in a unique factor of an essentially~$k$-generated subgroup (Lemma~\ref{lem:all;in:unique:essent:k:gen}) so this induces a bijection from~$k$-generated subgroups to~$k$-generated subgroups.
\end{proof}

Finally, by the CFI-construction and by 3-regularity of the base graph, $N$ is linear in $n=|V(\Gamma_i)|$, thus $n\in\Theta(\sqrt{\log|G_i|})$.
\begin{cor}
	$G_1$ and $G_2$ have equal $\Theta(\sqrt{\log(n)})$-profiles.
\end{cor}

For the commuting graphs of~$G_1$ and~$G_2$, note that  non-central elements in~$G_1$ that are not powers of one another cannot commute if one of the elements has a support of 4 or larger. Whether the Weisfeiler-Leman algorithm of a particular dimension distinguishes the graphs therefore does not change when restricting the commuting graphs to group elements with support size at most~$3$. 
In particular, the commuting graphs cannot be distinguished by their $\mathcal{O}(\sqrt{\log(n)})$-dimensional Weisfeiler-Leman algorithm.

\section{The Weisfeiler-Leman Dimension of groups constructed from CFI-graphs is 3}\label{sec:wl:dim:3}

In the previous section we constructed groups $G_i:=G_{\Gamma_i}$ based on two CFI-graphs $\Gamma_1$ and $\Gamma_2$.
The groups agree in terms of traditional group theoretical invariants (such as exponent, nilpotency class, and the combinatorics of their conjugacy classes) and also with respect to their $k$ generated subgroups for large~$k$. On first sight this might indicate that these groups should be hard to distinguish by combinatorial means but as we will see in this section their WL-dimension is only $3$. Throughout this section we exclusively use WL-algorithms and pebble games of Version II. The main theorem of this section is the following. 
\begin{thm}\label{LowDimension}
Let~$\Gamma_0$ be a 3-regular connected graph and let $\Gamma_1:=\CFI(\Gamma_0)$ and $\Gamma_2:=\widetilde{\CFI(\Gamma_0)}$ be the corresponding CFI-graphs.
	The $3$-dimensional WL-algorithm distinguishes $G_1$ from $G_2$. If additionally~$\Gamma_0$ has (graph) WL-dimension at most~$3$ then  $3$-dimensional WL-algorithm  identifies $G_1$ as well as~$G_2$. 
\end{thm}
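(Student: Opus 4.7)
The plan is to work with the Version~II 4-pebble game on $G_1$ and $G_2$ (equivalent to 3-WL by Theorem~\ref{thm:game:and:algo:agree}) and to exploit the multiplicative structure of the group to compensate for the high graph-WL-dimension of the CFI graphs. The first preparatory step is to identify the canonical generators. By Lemma~\ref{lem:centralizerComparison}, elements whose support is a single vertex of $\Gamma_i$ have the largest centralizers among non-central elements, and by Lemma~\ref{lem:char:when:iso:pres:red}, combined with the fact that CFI graphs of 3-regular connected bases have no pair of distinct vertices satisfying $N(v)\subseteq N[w]$, the single-support elements form a canonical set. Centralizer sizes are visible in the initial Version~II coloring of pairs (via the marked isomorphism type of $(g,x)$), so after a few refinement rounds Duplicator is forced to respect the set of vertex-like generators, and commutation relations among pebbled pairs of such generators faithfully reproduce the edge structure of $\Gamma_i$.

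The key new mechanism is that in Version~II, pebbling two generators $u,v$ also implicitly reveals their product $uv$ and commutator $[u,v]$, since the initial coloring is sensitive to the marked isomorphism type of $\langle u,v\rangle$. Using three pebbles on generators $a,b,c$, Spoiler accesses the element $abc$ and its commutator with further probes: for a fourth generator $d$, the identity $[abc,d]=[a,d][b,d][c,d]$ in $G'_i$ lets Spoiler read off $\mathbb{F}_p$-linear combinations of the basis commutators indexed by non-edges of $\Gamma_i$. This is strictly stronger than what graph-WL refinement on $\Gamma_i$ sees, because the graph game cannot ``sum'' adjacency bits across vertices whereas the group game does so automatically via products. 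Spoiler's winning strategy exploits this by pebbling triples of generators whose product-commutators span distinct subspaces of $G'_i$ in $G_1$ and $G_2$: the generated subgroups then have distinct marked isomorphism types, and no Duplicator bijection can match all such subspaces simultaneously since $\Gamma_1\not\cong\Gamma_2$ forces a structural mismatch that the group-theoretic view surfaces at the level of four pebbles.

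For the identification statement, suppose $H$ satisfies $H\equiv_{WL_3^{\verstwo}}G_i$. The structural arguments above apply equally to $H$, forcing $H$ to be of exponent $p$ and nilpotency class $2$ with a canonical generating set whose commutation relations define a graph $\Delta$ such that $H\cong G_\Delta$. The bijection of canonical generators induced by a Duplicator strategy then realizes a WL-$3$ equivalence between $\Delta$ and $\Gamma_i$ as labeled graphs. Using the hypothesis that the base graph $\Gamma_0$ has graph WL-dimension at most $3$, together with the twist-parity information extracted from the product-commutator subspaces, one recovers $\Delta$ as a CFI graph over a base isomorphic to $\Gamma_0$ with the same twist parity as $\Gamma_i$, hence $\Delta\cong\Gamma_i$ and $H\cong G_i$ by the theorem of Section~\ref{sec:embed:graphs:to:groups}. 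The main obstacle I anticipate is making the ``product-commutator'' mechanism precise within the pebble game: one needs to choose Spoiler's probe sequence so that the spanned subspaces of $G'_i$ force Duplicator into explicit contradictions after a bounded number of moves, and to combine this with a global invariant that tracks progress analogously to the CFI argument on graphs while leveraging the group's ability to address many graph vertices implicitly through a single product.
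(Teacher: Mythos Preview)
Your high-level intuition is right: the group structure lets Spoiler ``address many graph vertices implicitly through a single product,'' and this is exactly what the paper exploits. But the concrete mechanism you propose does not realize this intuition, and as stated it will not distinguish $G_1$ from $G_2$.

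You suggest pebbling three \emph{individual} generators $a,b,c$ and reading off $[abc,d]=[a,d][b,d][c,d]$ for a fourth probe $d$. This fixes only three vertices of $\Gamma_i$ at a time, and the information you extract---which of $a,b,c$ are adjacent to $d$, summed in $\mathbb{F}_p$---is essentially what the graph $4$-pebble game already sees on $\Gamma_i$ (up to the $\mathbb{F}_p$-linearization, which does not help here since the relevant commutators are linearly independent non-edges). Since the CFI graphs $\Gamma_1,\Gamma_2$ are specifically built to defeat $k$-WL on graphs for $k$ linear in their size, no strategy that only ever pins down $O(1)$ vertices of $\Gamma_i$ via individual-generator pebbles can succeed. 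Your ``subspaces of $G'_i$'' idea does not escape this: with three pebbled generators the relevant subspace has dimension at most $\binom{3}{2}=3$, and Duplicator's CFI-style bijections preserve all such local data.

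The missing idea, which the paper uses, is to spend a \emph{single} pebble on a product $v$ of $\Theta(n)$ carefully chosen generators. Lemma~\ref{ElementSupport} then forces Duplicator to preserve $\supp(v)$ as a set for the rest of the game; one pebble thus fixes an entire subset $\mathcal{V}\subseteq V(\Gamma_1)$ simultaneously. The specific choice is: from each CFI gadget take one internal vertex together with its three adjacent external vertices. A parity computation shows that the number of edges in $\Gamma_1[\mathcal{V}]$ and in $\Gamma_2[\mathcal{V}']$ (for any legal $\mathcal{V}'=\supp(f(v))$) differ modulo $2$, so some vertex of $\mathcal{V}$ must be sent to a vertex of the wrong induced degree, and Spoiler wins with the remaining pebbles. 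This is precisely the ``global invariant'' you were looking for in your last sentence, but it requires pebbling a long product rather than a short one. Your identification argument is then essentially correct once the distinguishing part is in place.
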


Requiring that~$\Gamma_0$ has WL-dimension at most~3 is not a severe restriction (Observation~\ref{obs:base:graph:low:WL:dim}).
Towards proving the theorem we collect several observations on the pebble game that are particular to the groups arising from CFI-graphs.

\begin{lem}\label{lem:must:respect:support:1}
	For each $k\geq 3$, throughout the $k$-pebble game on $G_1$ and $G_2$ Duplicator has to choose bijections that respect the set of elements with single-vertex support~$\{x\mid |\supp(x)|=1\}$. Moreover $\supp(x)=\supp(y)$ and $|\supp(x)|= |\supp (y)|= 1$ must imply~$\supp(f(x))=\supp(f(y))$.
\end{lem}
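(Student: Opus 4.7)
The plan is to exploit a centralizer-size gap that separates single-support from multi-support elements in the groups~$G_i$. This gap stems from three features of the CFI-graphs~$\Gamma_i$: they are $3$-regular, triangle-free, and contain no pair of distinct vertices sharing the same open neighborhood (no strong twins).

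I would first establish the quantitative input. By Corollary~\ref{centralizer}, for non-central $x\in G_i$,
\[
\dim_{\mathbb{F}_p}\!\bigl(C_{G_i}(x)/G_i'\bigr) \;=\; \#\{\text{non-singleton components of } \co(\Gamma_i[\supp(x)])\} + |M(x)|,
\]
where $M(x)=\bigcap_{v\in\supp(x)}N[v]$. For $|\supp(x)|=1$ with $\supp(x)=\{v\}$ this dimension equals $|N[v]|=4$. For $|\supp(x)|\geq 2$ a short case analysis using triangle-freeness (so adjacent support pairs share no further common neighbor and at most one support vertex can lie in $M(x)$) and the absence of strong twins (so non-adjacent pairs share at most two common neighbors) bounds the dimension by~$3$. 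Hence $|C_{G_i}(x)|=p^4|G_i'|$ when $|\supp(x)|=1$, and $|C_{G_i}(x)|\leq p^3|G_i'|$ when $|\supp(x)|\geq 2$; central elements of course have centralizer of full size~$|G_i|$.

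For the first claim, suppose in some round Duplicator's bijection~$f$ fails, with $|\supp(x)|=1$ but $|\supp(y)|\neq 1$ for $y:=f(x)$ (the reverse direction is symmetric). Spoiler pebbles~$x$. In the next round, regardless of Duplicator's new bijection~$f'$, the strict inequality between the centralizer sizes (or $C_{G_2}(y)=G_2$ when $y$ is central) supplies some $z\in G_1$ realizing a commutation mismatch: $[x,z]=1$ while $[y,f'(z)]\neq 1$, or the reverse pattern. Spoiler places on such a~$z$. A commuting pair is then sent to a non-commuting pair (or vice versa), so $\langle x,z\rangle$ and $\langle y,f'(z)\rangle$ differ in commutativity, and the marked map cannot extend to an isomorphism of the generated subgroups. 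With $k\geq 3$ pebble pairs available a third pair can be picked up at the next Step~1, and the winning condition fires with the two incriminating pebbled pairs still on the board.

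For the moreover clause I may assume, by the first claim, that Duplicator respects single-support-ness. Suppose the board already carries a pebble pair $(x,x')$ with $\supp(x)=\{v\}$ and $\supp(x')=\{w_1\}$, and Duplicator's current bijection~$f$ maps some $y$ with $\supp(y)=\{v\}$ to $f(y)$ with $\supp(f(y))=\{w_2\}\neq\{w_1\}$. Since Spoiler is free to choose where to pebble, one can arrange $x=v$ itself (a pure power of $v$ with trivial central part) and now pebble $y:=v^b$ (another pure power). Then $\langle x,y\rangle=\langle v\rangle$ has order~$p$, whereas $x'$ and $f(y)$ have linearly independent projections in $G_2/G_2'$, so $|\langle x',f(y)\rangle|\geq p^2$. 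The two generated subgroups have different orders; no marked isomorphism exists, and Spoiler wins at the next pickup.

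The main obstacle is verifying the centralizer-size gap: one must check across support sizes and adjacency patterns that for multi-support elements the sum of non-singleton $\co$-component counts and $|M(x)|$ never reaches~$4$ in CFI graphs. Once this gap is in place, both pebble-game strategies above are routine.
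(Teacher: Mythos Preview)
Your treatment of the first claim is essentially the paper's argument: a centralizer-size gap separates single-support elements from the rest, and this gap follows from $3$-regularity together with local sparsity of the CFI-graphs. You phrase the sparsity as ``triangle-free with no strong twins'' whereas the paper uses triangle-freeness and the absence of $4$-cycles; the latter is strictly stronger (two common neighbours already force a $4$-cycle), but both hold for CFI-graphs and both suffice for the bound $\dim_{\mathbb{F}_p}(C(x)/G_i')\le 3$ when $|\supp(x)|\ge 2$.

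The ``moreover'' clause, however, has a genuine gap. Your order comparison $|\langle x,y\rangle|=p$ versus $|\langle x',f(y)\rangle|\ge p^2$ only works when $x$ and $y$ are \emph{pure} powers of the vertex~$v$. For generic single-support elements $x=v^ac_1$ and $y=v^bc_2$ with nontrivial central parts one already has $|\langle x,y\rangle|=p^2$, so no contradiction arises from orders alone. Your fix---``Spoiler is free to choose where to pebble, one can arrange $x=v$''---is not legitimate: in your setup the pebble on $(x,x')$ is already on the board, placed in a prior round against a prior bijection, so $x$ cannot be retroactively chosen. And even granting a pebble on $v$ itself, the violating $y$ supplied by $f$ need not be a pure power $v^b$; the pure powers $v,v^2,\dots,v^{p-1}$ might all land in support-$\{w_1\}$ while only some $v^ac$ with $c\neq 1$ lands elsewhere.

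The paper sidesteps this by staying with centralizers: for single-support $x,y$ one has $\supp(x)=\supp(y)$ if and only if $C(x)=C(y)$, since distinct CFI-vertices have distinct closed neighbourhoods. If $f$ sends same-support $x,y$ to different-support images then $C(x)=C(y)$ but $C(f(x))\ne C(f(y))$; with pebbles on $x$ and $y$, any subsequent bijection would have to carry the single set $C(x)=C(y)$ onto both $C(f(x))$ and $C(f(y))$, two distinct sets of equal size, and a third pebble on a witness exposes the commutation mismatch. This works uniformly for all single-support $x,y$, which is exactly where your order argument breaks down.
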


\begin{proof} 
To see this, it suffices to realize that centralizers of elements with single-vertex support have a different cardinality than other elements. 
Indeed, since the graphs $\Gamma_1$ and $\Gamma_2$ are 3-regular, by Corollary~\ref{centralizer} each single support vertex has a centralizer of cardinality~$p^4 |Z(G_i)|$. However, since~$\co(\Gamma)$ is connected, has no triangles and no cycles of length 4, other elements have a centralizer of cardinality at most~$p^3 |Z(G_i)|$. 

To see the second part of the theorem, note the following: for two elements~$x,y$ with~$|\supp(x)|= |\supp (y)|= 1$ we have~$\supp(x)= \supp(y)$ exactly if~$C(x)=C(y)$. Since commutation and support sizes must be respected this shows the lemma.
\end{proof}

\begin{lem}\label{ElementSupport}
	If Duplicator does not respect support sizes at some point then Spoiler can win with three pebbles.
\end{lem}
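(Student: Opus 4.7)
Spoiler's strategy begins by placing pebble~1 on the witnessing element $x_0\in G_1$ where Duplicator's bijection $f$ satisfies $|\supp(x_0)|\neq|\supp(f(x_0))|=:|\supp(y_0)|$. The trivial subcases can be handled with at most two pebbles: if exactly one of $x_0,y_0$ is the identity then the $1$-tuple marked isomorphism already differs by group order; if exactly one is non-identity and central, then the centralizer sizes differ, so any subsequent bijection $g$ chosen by Duplicator fails to match the centralizers and Spoiler can pebble a second element $z$ with $[x_0,z]=1$ but $[y_0,g(z)]\neq 1$ (or vice versa). Invoking Lemma~\ref{lem:must:respect:support:1}, I may thus assume that $s:=|\supp(x_0)|\geq 2$ and $s':=|\supp(y_0)|\geq 2$, that Duplicator respects single-support across the bijections she subsequently chooses, and WLOG $s<s'$.

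The key invariant I would exploit is the isomorphism-invariant abelianization dimension $\dim_{\mathbb{F}_p}(\langle x,p,q\rangle/\langle x,p,q\rangle')$ of any $3$-generated subgroup, which marked isomorphism must preserve. To handle the case $s=2$, $s'\geq 3$ directly, Spoiler places pebbles 2 and 3 on the two distinct single-support vertices $v_1,v_2\in\supp(x_0)$. Then $\langle x_0,v_1,v_2\rangle$ has abelianization dimension at most three, and in the generic case (when the commutator part of $x_0$ relative to $v_1,v_2$ lies in $\langle[v_1,v_2]\rangle$) this dimension drops to two. By Lemma~\ref{lem:must:respect:support:1} the images $f_2(v_1),f_3(v_2)$ are single-support in $G_2$, and since $|\supp(y_0)|\geq 3$ exceeds $|\{\supp(f_2(v_1)),\supp(f_3(v_2))\}|\leq 2$, the subgroup $\langle y_0,f_2(v_1),f_3(v_2)\rangle$ has abelianization dimension exactly three. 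Any mismatch forces a marked-isomorphism failure and a Spoiler win.

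The main obstacle lies in the remaining case $s\geq 3$, $s'\geq 4$, where the direct dimension test may yield three on both sides, as well as the subcase where the commutator part of $x_0$ spoils the expected dimension drop. My plan is to reuse pebbles~2 and~3 across many rounds (the number of moves is unbounded) while keeping pebble~1 fixed on $(x_0,y_0)$, thereby using marked-isomorphism constraints on the commutators $[x_0,z]$ and $[z_1,z_2]$ to reconstruct the induced graph $\Gamma_1[T_{x_0}]$ on single-support commuters $T_{x_0}:=\bigcap_{v\in\supp(x_0)}N_{\Gamma_1}[v]$ versus $\Gamma_2[T_{y_0}]$; these induced subgraphs must then be isomorphic via the vertex bijection guaranteed by Lemma~\ref{lem:must:respect:support:1}. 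The rigidity of the CFI-construction -- each $N(v)$ is an independent set in the $3$-regular graph -- constrains $\Gamma_i[T_x]$ tightly as a function of $|\supp(x)|$ (for instance $T_x=\emptyset$ whenever $|\supp(x)|\geq 5$, $|T_x|\leq 1$ when $|\supp(x)|=4$, and $\Gamma_i[T_x]\cong K_{1,3}$ when $|\supp(x)|=1$), so a short case analysis on the pair $(s,s')$ yields an isomorphism-type mismatch that Spoiler detects with three pebbles.
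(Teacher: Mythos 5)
There is a genuine gap in the main case, and it is exactly the case your reduction funnels everything into. Your key invariant is the induced graph on $T_x:=\bigcap_{v\in\supp(x)}N[v]$, read off through commutation with the pebbled element. But as you yourself compute, $T_x=\emptyset$ for every element with $|\supp(x)|\geq 5$, so the invariant $\Gamma_i[T_x]$ (and indeed the whole commutation pattern of $x$ with single-support elements) is identical for, say, $s=5$ and $s'=6$. The ``short case analysis on the pair $(s,s')$'' therefore cannot close: for all pairs with $\min(s,s')\geq 5$ no mismatch is produced, and already for $(4,5)$ you may have $T_{x_0}=T_{y_0}=\emptyset$. Your $s=2$ argument via the abelianization dimension of $\langle x_0,v_1,v_2\rangle$ is sound in the generic subcase, but you correctly note it fails when the central part of $x_0$ is not in $\langle[v_1,v_2]\rangle$, and the fallback you offer for that subcase is the same $T_x$ machinery that does not suffice. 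So the proposal does not prove the lemma as stated.

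The paper's proof avoids local invariants entirely and instead runs a descent on support size. Since $G_1$ and $G_2$ have identical distributions of support sizes, a violation produces some $x$ with $|\supp(f(x))|>|\supp(x)|>1$. Spoiler picks $v_i\in\supp(x)$ and $m$ with $|\supp(xv_i^m)|=|\supp(x)|-1$; since $f(v_i^m)$ must be single-support (else we are in the base case you already handle via Lemma~\ref{lem:must:respect:support:1}), forcing Duplicator to respect the product sends $xv_i^m$ to $f(x)f(v_i^m)$, whose support can shrink by at most one. The strict inequality between the two support sizes is preserved, and induction drives the configuration down to support $\leq 1$, where centralizer orders differ. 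If you want to salvage your approach, you would need an invariant visible to three pebbles that separates \emph{all} pairs of distinct support sizes, not just small ones; the multiplicative descent is the mechanism the paper uses to manufacture exactly that.
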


\begin{proof} 
	Assume Duplicator chooses a bijection $f\colon G_1\to G_2$ during the $k$-pebble game with $k\geq 4$ such that $|\supp(x)|\neq |\supp(f(x))|$ for some $x\in G_1$. We already discussed that Spoiler has a winning strategy in this situation in the case that one of the supports has cardinality at most $1$. Since the distribution of support sizes in $G_1$ and $G_2$ is the same there is some $x\in G_1$ with
	$|\supp(f(x))|>|\supp(x)|>1$. We can choose some $v_i\in V(\Gamma_1)$ and a natural number $m$ such that $x':=xv_i^m$ has strictly smaller support than $x$. Now $f(v_i^m)$ must also be supported by exactly one element, or otherwise Duplicator loses anyway. Using $4$ pebbles, Spoiler can force Duplicator to map $x'$ to $f(x)f(v_i^m)$.
	Thus, after three additional rounds, the support of $f(x')$ is still strictly bigger than $\supp(x')$ and the result follows by induction.
	\end{proof}

\begin{lem}\label{ElementType}
	For each $k\geq 4$, throughout the $k$-pebble game on $G_1$ and $G_2$ Duplicator has to choose bijections respecting internal vertices and gadgets of the underlying CFI-graphs. Here, elements corresponding to a gadget vertex $v$ are all elements of $vZ(G_i)$. Moreover pairs of vertices lying in a common gadget have to be mapped to pairs in a common gadget.
\end{lem}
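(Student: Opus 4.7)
The plan is to bootstrap from Lemmas~\ref{lem:must:respect:support:1} and~\ref{ElementSupport}: once Duplicator is forced to preserve single-vertex support and the associated vertex, there is an induced vertex-bijection $\tilde f\colon V(\Gamma_1)\to V(\Gamma_2)$ which must act as a partial graph isomorphism on the currently pebbled single-vertex-support elements. I then aim to show this induced bijection respects the internal/external partition and gadget memberships by arguing that any violation yields a winning strategy for Spoiler using four pebbles.

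For the internal/external preservation, I exploit the local asymmetry of external vertices in the CFI graph: an internal vertex has three neighbors all lying in the same gadget and playing equivalent roles, while an external vertex has two internal same-gadget neighbors and one link-partner external neighbor in an adjacent gadget. Assume for contradiction that Duplicator maps an internal $v$ to an external $v'$. Spoiler pebbles $v$ together with its three neighbors $n_1,n_2,n_3$; Duplicator is forced to map these into $N(v')$, necessarily matching one $n_i$ with the link partner and the other two with the internal same-gadget vertices of~$N(v')$. Spoiler then picks up a pebble and places a fourth probe at distance two from $v$ through a chosen $n_i$. By iterating this probe over the three choices of~$n_i$, Spoiler forces Duplicator to commit to local structure that cannot simultaneously reproduce the fully symmetric neighborhood of~$v$; the incompatibility manifests in the marked isomorphism type of the subgroup generated by~$v'$, its neighbors, and the probes, via the centralizer analysis of Corollary~\ref{centralizer}.

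For the gadget preservation, I use the characterization that two vertices lie in a common gadget of the CFI graph if and only if they admit a common neighbor (unique in general), whereas vertices from distinct non-adjacent gadgets have no common neighbor. After the internal/external partition has been secured by the previous step, if Duplicator maps two vertices in a common gadget of~$\Gamma_1$ to vertices in distinct gadgets of~$\Gamma_2$, Spoiler uses a fourth pebble to witness the common neighbor in~$\Gamma_1$; Duplicator cannot match this with any single-vertex element in~$\Gamma_2$ satisfying the forced commutation constraints. The ``pairs of vertices in a common gadget'' claim then follows as an immediate consequence.

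The main obstacle will be to verify rigorously that four pebbles truly suffice for the first step, since the abstract marked isomorphism types of small configurations can coincide between~$G_1$ and~$G_2$ despite the asymmetric local structure. The proof must therefore exploit the centralizer analysis from Section~\ref{sec:embed:graphs:to:groups} at full strength and, in particular, leverage probes with multi-vertex support whose centralizers (described by Lemma~\ref{CommutationInComponents}) constrain the local gadget geometry in a way that single-vertex probes alone cannot reveal.
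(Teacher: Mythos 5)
Your first step (bootstrapping from Lemmas~\ref{lem:must:respect:support:1} and~\ref{ElementSupport} to an induced vertex bijection) matches the paper, but the core of your argument rests on a characterization that is false. You claim that two vertices of the CFI-graph lie in a common gadget if and only if they have a common neighbor. Both directions fail. For the forward direction, take the external pair $a^v_i,b^v_i$ of a single gadget: every internal vertex of that gadget is adjacent to exactly one of the two (depending on the $i$-th bit), and their remaining neighbors are their link partners, which are distinct external vertices of the adjacent gadget; so $a^v_i$ and $b^v_i$ lie in a common gadget yet have no common neighbor. For the converse, if $\{a^u_i,a^v_j\}$ is a link edge, then $a^u_i$ and any internal vertex of the gadget of $v$ that is adjacent to $a^v_j$ share the common neighbor $a^v_j$ while lying in different gadgets; restricting your claim to \emph{non-adjacent} gadgets does not repair this. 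The invariant the paper actually uses is different: every pair of vertices in a common gadget lies on a common $6$-cycle or $8$-cycle (e.g.\ $a^v_i$ and $b^v_i$ lie on a common $8$-cycle through all four internal vertices), and every $6$- or $8$-cycle of the CFI-graph is confined to a single gadget, since leaving a gadget and returning forces a cycle of length at least $10$. Spoiler wins by exhibiting such a short cycle through $v,w$ that cannot be reproduced through $f(v),f(w)$.

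Your handling of the internal/external distinction is also not in the shape of a proof: the claimed contradiction ``manifests in the marked isomorphism type \dots via the centralizer analysis'' is not substantiated, and you yourself flag that it is unclear whether four pebbles suffice, precisely because the graph is $3$-regular and small neighborhoods of internal and external vertices are not obviously separated by local data. The paper sidesteps this entirely: once gadget preservation is established, internal vertices are exactly those with no neighbor outside their own gadget, so the internal/external partition is preserved as an immediate corollary. I would recommend reversing the order of your two steps and replacing the common-neighbor test by the $6$-/$8$-cycle argument.
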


\begin{proof}
By Lemma~\ref{lem:must:respect:support:1} the bijection chosen by Duplicator induces a permutation of the vertices~$V(\Gamma_1)$.
By Construction, the CFI-graphs $\Gamma_i$ have the property that every 6-cycle and every 8-cycle runs entirely within one gadget. Moreover every pair of vertices lying in a common gadget lies on a common 6-cycle or on a common 8-cycle.
This implies that Duplicator has to map vertices~$v,w$ in a common gadget to vertices in a common gadget (and vice versa). Indeed, otherwise 
Spoiler can show that $v$ and $w$ are contained in a small cycle but $f(v)$ and $f(w)$ are not (and vice versa).
This in turn implies that Duplicator has to map internal vertices to internal vertices, because internal vertices are not adjacent to vertices in another gadget, but external vertices are.
\end{proof}

Using these observation we can finally prove the theorem.

\begin{proof}[Proof of Theorem \ref{LowDimension}]

	We first define a set $\mathcal{V}$ of special vertices in $\Gamma_1$: For each gadget put exactly one internal vertex in $\mathcal{V}$ and add all adjacent external vertices. Let $v\in G_1$ denote the ordered product of all vertices in $\mathcal{V}$. By Lemma \ref{ElementSupport} Duplicator must choose a bijection for which $f(v)$ has the same support size as $v$. Spoiler puts a pebble on $v$. The Lemma furthermore shows that all future bijections have to map $\supp(v)$ to $\supp(f(v))$ or otherwise Spoiler can pebble some $v_i\in \supp(v)$ with $f(v_i)\notin \supp(f(v))$ and Duplicator will not be able to respect support sizes from here on. Using Lemma \ref{ElementType} we see that $\supp(f(v))=:\mathcal{V}'$ has to be composed exactly as $\supp(v)=\mathcal{V}$, that is, $\mathcal{V}'$ can also be constructed by choosing set of internal vertices, one per gadget, and adding all their adjacent external vertices. The set~$\mathcal{V}$ induces a subgraph of $\Gamma_1$ and similarly~$\mathcal{V'}$ induces a subgraph of $\Gamma_2$.
	We argue these subgraphs have a different number of edges modulo $2$. 
	For this observe the following: if we alter $\mathcal{V}$ by replacing one internal vertex with another one in the same gadget, this changes exactly two neighbors among the external vertices. The new induced subgraph differs then in exactly two locations of two different links. Thus the number of edges in the induced subgraph remains the same modulo $2$. By induction this is true for all possible choices of $\mathcal{V}$. We can thus assume that~$\mathcal{V}=\mathcal{V}'$. However, this implies that $\Gamma_1[\mathcal{V}]$ and  $\Gamma_2[\mathcal{V}']$ disagree in exactly one edge, namely at the twisted link. This shows the graphs have a different number of edges modulo $2$.

	However, we already argued that Duplicator has to map $\mathcal{V}$ to $\mathcal{V}'$. Since the number of edges of $\Gamma_1[\mathcal{V}]$ and  $\Gamma_2[\mathcal{V}']$ disagree, for any suitable bijection some vertex is mapped to a vertex of incorrect degree, which can be exploited by Spoiler. This shows $G_1$ can be distinguished from $G_2$. 
	
	Assume now that additionally the base graph~$\Gamma_0$ has Weisfeiler-Leman dimension at most 3. Suppose that $G$ is any group with $|G|=|G_1|$ that is indistinguishable from~$G_1$. The vertices of $\Gamma_1$ form a canonical copy of $\Gamma_1$ inside of $G_1$ (up to central elements), so there must be a corresponding set in $G$ as well. If the induced commutation graph $\Gamma$ on this set is distinguishable from $\Gamma_1$ then $G_1$ is distinguishable from $G$. From the commutation graph, we can reconstruct a  corresponding base graph~$\Gamma$. Which must be indistinguishable by 3-WL from~$\Gamma_0$. This means it is isomorphic to~$\Gamma_0$ since its Weisfeiler-Leman dimension is at most 3.
	Thus $\Gamma$ is isomorphic to $\Gamma_1$ or $\Gamma_2$. This gives a presentation of $G$ isomorphic to a presentation of $G_1$ or~$G_2$.
\end{proof}

\section{Conclusion}

We defined several versions of the Weisfeiler-Leman algorithm for groups and showed their dimension concepts are linearly related. We then gave a construction of groups from graphs that preserves isomorphism. We can recover combinatorics of the original graph by analyzing commutation in the groups. This allowed us to construct pairs of non-isomorphic groups $G_1,G_2$ with the same~$k$-subgroup profiles for~$k\in\Theta(\sqrt{\log |G_i|})$. These groups are nevertheless identified by the 3-dimensional Weisfeiler Leman algorithm.

The strategy in the pebble we employed to show this exploits that by pebbling products of standard generators Spoiler can essentially force Duplicator to fix an arbitrary set of standard generators simultaneously. Abstractly, on graphs or groups, one could define a new pebble game where pebbles can be placed on sets of vertices. Spoiler now additionally wins the game if the subset relations between pebbled sets disagree in the two structures. This game corresponds to a monadic second order logic where there is still a bound on the number of variables that may be used. While they seem to resemble each other, we are not sure what the precise relationship between this game and the pebble games on groups is. It also seems to be unknown what the expressive power of this game (or the corresponding logic) is.

On another note, it also remains a central open question whether $k$-WL solves the Group Isomorphism Problem for some constant dimension $k$. While a positive answer would place Group Isomorphism in polynomial time, a negative answer would prove the existence of groups of unbounded Weisfeiler-Leman dimension which would provide interesting examples of groups that are even harder to distinguish.

\bibliographystyle{plain} 
\bibliography{../refs}

\end{document}